\renewcommand\footnotetextcopyrightpermission[1]{}
\newlength{\saveparindent}
\newlength{\saveparskip}
\newenvironment{tiret}{\begin{list}{\hspace{2pt}\rule[0.5ex]{6pt}{1pt}\hfill}{\labelwidth=15pt\labelsep=5pt \leftmargin=20pt \topsep=3pt\setlength{\listparindent}{\saveparindent}\setlength{\parsep}{\saveparskip}\setlength{\itemsep}{0pt} }}{\end{list}}
\newcommand{\tool}{{\textsc{CrypTFlow2}}\xspace}
\newcommand{\toollib}{{\textsc{SCI}}\xspace}
\newcommand{\toolot}{\textsc{SCI}_{\mathsf{OT}}}
\newcommand{\toolhe}{\textsc{SCI}_{\mathsf{HE}}}
\newcommand{\p}[1]{^{(#1)}}
\newcommand{\namedref}[2]{\hyperref[#2]{#1~\ref*{#2}}\xspace}
\newcommand{\theoremref}[1]{\namedref{Theorem}{theorem:#1}}
\newcommand{\corolref}[1]{\namedref{Corollary}{corol:#1}}
\newcommand{\figureref}[1]{\namedref{Figure}{fig:#1}}
\newcommand{\tableref}[1]{\namedref{Table}{tab:#1}}
\newcommand{\equationref}[1]{\namedref{Equation}{eq:#1}}
\newcommand{\sectionref}[1]{\namedref{Section}{sec:#1}}
\newcommand{\appendixref}[1]{\namedref{Appendix}{app:#1}}
\newcommand{\algoref}[1]{\namedref{Algorithm}{algo:#1}}
\definecolor{forestgreen}{rgb}{0.13, 0.55, 0.13}
\definecolor{cadmiumgreen}{rgb}{0.0, 0.42, 0.24}
\definecolor{mypink}{rgb}{1,0.2,0.4}
\newcommand{\adv}{\mathcal{A}}
\newcommand{\env}{\mathcal{Z}}
\newcommand{\prot}{\Pi}
\newcommand{\simu}{\mathcal{S}}
\newcommand{\F}{\mathcal{F}}
\newcommand{\W}{\mathcal{W}}
\newcommand{\secpar}{\lambda}
\tiny\color{gray}\bfseries, 
\newcommand{\zo}{\{0,1\}}
\newcommand{\share}[3]{\langle #1\rangle^{#2}_{#3}}
\newcommand{\genshare}[2]{\mathsf{Share}^{#1}(#2)}
\newcommand{\reconst}[2]{\mathsf{Reconst}^{#1}(#2)}
\newcommand{\bbZ}{\mathbb{Z}}
\newcommand{\relu}{\mathsf{ReLU}}
\newcommand{\corr}{\mathsf{corr}}
\newcommand{\drelu}{\mathsf{DReLU}}
\newcommand{\gt}{\mathsf{gt}}
\newcommand{\maxpool}{\mathsf{Maxpool}}
\newcommand{\resnet}{\textsf{ResNet50}\xspace}
\newcommand{\squeezenet}{\textsf{SqueezeNet}\xspace}
\newcommand{\densenet}{\textsf{DenseNet121}\xspace}
\newcommand{\avgpool}{\mathsf{Avgpool}}
\newcommand{\argmax}{\mathsf{Argmax}}
\newcommand{\party}[1]{P_{#1}}
\newcommand{\func}{\mathcal{F}}
\newcommand{\xor}{\oplus}
\newcommand{\fidiv}[2]{\mathsf{idiv}(#1, #2)}
\newcommand{\frdiv}[2]{\mathsf{rdiv}(#1, #2)}
\newcommand{\idiv}{\mathsf{idiv}}
\newcommand{\rdiv}{\mathsf{rdiv}}
\newcommand{\mill}{\mathsf{MILL}}
\newcommand{\protmill}[1]{\prot_{\mill}^{#1}}
\newcommand{\fmill}[1]{\func_{\mathsf{MILL}}^{#1}}
\newcommand{\fand}{\func_{\mathsf{AND}}}
\newcommand{\protreluint}[1]{\prot_{\mathsf{ReLU}}^{\mathsf{int},#1}}
\newcommand{\protdreluint}[1]{\prot_{\mathsf{DReLU}}^{\mathsf{int},#1}}
\newcommand{\protreluring}[1]{\prot_{\mathsf{ReLU}}^{\mathsf{ring},#1}}
\newcommand{\fdreluint}[1]{\func_{\mathsf{DReLU}}^{\mathsf{int},#1}}
\newcommand{\protdreluring}[1]{\prot_{\mathsf{DReLU}}^{\mathsf{ring},#1}}
\newcommand{\fdreluring}[1]{\func_{\mathsf{DReLU}}^{\mathsf{ring},#1}}
\newcommand{\protdreluringsimple}[1]{\prot_{\mathsf{DReLU^{simple}}}^{\mathsf{ring},#1}}
\newcommand{\fmux}[1]{\func_{\mathsf{MUX}}^{#1}}
\newcommand{\protmux}[1]{\prot_{\mathsf{MUX}}^{#1}}
\newcommand{\fBtoA}[1]{\func_{\mathsf{B2A}}^{#1}}
\newcommand{\protbtoa}[1]{\prot_{\mathsf{B2A}}^{#1}}
\newcommand{\prottruncint}[2]{\prot_{\mathsf{Trunc}}^{\mathsf{int},#1,#2}}
\newcommand{\protdivint}[1]{\prot_{\mathsf{DIV}}^{\mathsf{int},#1}}
\newcommand{\protdivring}[1]{\prot_{\mathsf{DIV}}^{\mathsf{ring},#1}}
\newcommand{\ftruncint}[2]{\func_{\mathsf{Trunc}}^{\mathsf{int},#1,#2}}
\newcommand{\fdivring}[2]{\func_{\mathsf{DIV}}^{\mathsf{ring},#1,#2}}
\newcommand{\ringsz}{n}
\newcommand{\polymod}{N}
\newcommand{\blsize}{m}
\newcommand{\blnum}{q}
\newcommand{\alice}{\ensuremath{P_0}\xspace}
\newcommand{\bob}{\ensuremath{P_1}\xspace}
\newcommand{\ind}[1]{\ensuremath{\mathbf{1}\{ #1 \}}\xspace}
\newcommand{\concat}{||}
\newcommand{\getsr}[0]{\mathbin{\stackrel{\mbox{\,\tiny \$}}{\gets}}}
\newcommand{\twoblsize}{M}
\newcommand{\lessthan}{\mathsf{lt}}
\newcommand{\equal}{\mathsf{eq}}
\newcommand{\temp}{\mathsf{temp}}
\newcommand{\intx}{L}
\newcommand{\msbs}{\mathsf{msb}}
\newcommand{\msbl}{\mathsf{MSB}}
\newcommand{\carry}{\mathsf{carry}}
\newcommand{\etax}{\eta}
\newcommand{\wrap}{\mathsf{wrap}}
\newcommand{\lt}{\mathsf{lt}}
\newcommand{\rt}{\mathsf{rt}}
\newcommand{\xt}{\mathsf{xt}}
\newcommand{\kkot}[2]{\ensuremath{{#1 \choose 1}\text{-}\mathsf{OT}_{#2}}\xspace}
\newcommand{\iknpcot}[1]{\ensuremath{{2 \choose 1}\text{-}\mathsf{COT}_{#1}}\xspace}
\renewcommand\theHALG@line{\thealgorithm.\arabic{ALG@line}}
\begin{document}
\pagestyle{plain}
\title{\tool: Practical 2-Party Secure Inference} 

\author[D. Rathee]{Deevashwer Rathee}
\affiliation{\institution{Microsoft Research}
}
\email{t-dee@microsoft.com}

\author[M. Rathee]{Mayank Rathee}
\affiliation{\institution{Microsoft Research}
}
\email{t-may@microsoft.com}

\author[N. Kumar]{Nishant Kumar}
\affiliation{\institution{Microsoft Research}
}
\email{nishant.kr10@gmail.com}

\author[N. Chandran]{Nishanth Chandran}
\affiliation{\institution{Microsoft Research}
}
\email{nichandr@microsoft.com}

\author[D. Gupta]{Divya Gupta}
\affiliation{\institution{Microsoft Research}
}
\email{divya.gupta@microsoft.com}

\author[A. Rastogi]{Aseem Rastogi}
\affiliation{\institution{Microsoft Research}
}
\email{aseemr@microsoft.com}

\author[R. Sharma]{Rahul Sharma}
\affiliation{\institution{Microsoft Research}
}
\email{rahsha@microsoft.com}

\begin{abstract}

We present $\tool$, a cryptographic framework
for secure inference over realistic Deep Neural Networks (DNNs) using
secure 2-party computation. 
$\tool$ protocols are both correct -- i.e., their outputs are
bitwise equivalent to the cleartext execution -- and
efficient -- they outperform the state-of-the-art protocols in both
latency and scale. At the core of \tool, we have new
2PC protocols for secure comparison
and division, designed carefully to balance round and
communication complexity for secure inference tasks. Using \tool, we present
the first secure
inference over ImageNet-scale DNNs like $\resnet$ and $\densenet$. These DNNs are at least an
order of magnitude larger than those considered in the prior
work of 2-party DNN inference. 
Even on the benchmarks considered by prior work, \tool  requires
an order of magnitude less communication and $20\times$-$30\times$ less time than 
the state-of-the-art.
\end{abstract}

\keywords{Privacy-preserving inference; deep neural networks; secure two-party computation}

\maketitle

\section{Introduction}
\label{sec:intro}
The problem of privacy preserving machine learning has become increasingly important. Recently, there have been many works that have made rapid strides towards realizing {\em secure inference}~\cite{cryptonets,chet,secureml,minionn,gazelle,delphi,nitin,xonn,ezpc,hycc,nhe1,deepsecure,ball}.
Consider a server that holds the weights $w$ of a publicly known deep neural network (DNN), $F$, that has been trained on private data. A client holds a private input $x$; in a standard machine learning (ML) inference task, the goal is for the client to learn the prediction $F(x,w)$ of the server's model on the input $x$. In secure inference, the inference is performed with the guarantee that the server learns {\em nothing} about $x$ and the client learns nothing about the server's model $w$ beyond what can be deduced from $F(x,w)$ and $x$. 

A solution for secure inference that scales to practical ML tasks would open a plethora of applications based on MLaaS (ML as a Service). Users can obtain value from ML services  without worrying about the loss of their private data, while model owners can effectively monetize their services with no fear of breaches of client data (they never observe private client data in the clear). Perhaps the most important emerging applications for secure inference  are in healthcare where
prior work~\cite{cryptflow,xonn,nitin} has explored secure inference services for privacy preserving medical diagnosis of chest diseases, diabetic retinopathy, malaria, and so on.

Secure inference is an instance of secure 2-party computation (2PC) and cryptographically secure general protocols for 2PC have been known for decades~\cite{Yao,gmw}. 
However, secure inference for practical ML tasks, e.g., ImageNet scale prediction~\cite{imagenet}, is challenging for two reasons: 
a) realistic DNNs use $\relu$ activations\footnote{$\relu(x)$ is defined as $\max(x,0)$.} that are expensive to compute securely; and b) preserving inference accuracy requires a faithful implementation of secure fixed-point arithmetic.
 All prior works~\cite{cryptonets,secureml,minionn,gazelle,delphi,ball} fail to provide efficient implementation of $\relu$s. Although $\relu$s can be replaced with approximations that are more tractable for 2PC~\cite{cryptonets,delphi,chet}, this approach results in significant accuracy losses that can degrade user experience. The only known approaches to evaluate $\relu$s efficiently require sacrificing security by making the untenable assumption that a non-colluding third party takes part in the protocol~\cite{chameleon,securenn,cryptflow,aby3,quantizednn} or by leaking activations~\cite{nhe2}.
Moreover, some prior works~\cite{secureml,delphi,securenn,aby3,cryptflow} even sacrifice correctness of their fixed-point implementations and the result of their secure execution can sometimes diverge from the expected result, i.e. cleartext execution, in random and unpredictable ways.
Thus, correct and efficient 2PC protocols for secure inference over realistic DNNs remain elusive.
\subsection{Our Contributions}

In this work, we address the above two challenges and build new semi-honest secure 2-party cryptographic protocols for secure computation of DNN inference. 
Our new efficient protocols enable the first secure implementations of ImageNet scale inference that complete in under a minute!
We make three main  contributions:
\begin{tiret}
\item First, we give new protocols for millionaires' and $\drelu$\footnote{$\drelu$ is the derivative of $\relu$, i.e., $\drelu(x)$ is $1$ if $x \geq 0$ and $0$ otherwise.} that enable us to securely and efficiently evaluate the non-linear layers of DNNs such as $\relu$, $\maxpool$ and $\argmax$. \item Second, we provide new protocols for division. Together with new theorems that we prove on fixed-point arithmetic over shares, we show how to evaluate linear layers, such as convolutions, average pool and fully connected layers, faithfully. 
\item Finally, by providing protocols that can work on a variety of input domains, we build a system\footnote{Implementation is available at \url{https://github.com/mpc-msri/EzPC}.}
    $\tool$ that supports two different types of Secure and Correct Inference (\toollib) protocols where linear layers can be evaluated using either homomorphic encryption ($\toolhe$) or through oblivious transfer ($\toolot$).
\end{tiret}
We now provide more details of our main contributions. 
\\\\
\noindent{\textbf{New millionaires' and $\drelu$ protocols.}} 
Our first main technical contribution is a novel protocol for the well-known {\em millionaires'} problem~\cite{Yao}, where parties \alice and \bob hold  $\ell-$bit integers $x$ and $y$, respectively, and want to securely compute $x < y$ (or, secret shares of $x<y$).
The theoretical communication complexity of our protocol is $\approx 3\times$ better than the most communication efficient prior millionaires' protocol~\cite{Cou18,Yao,gmw,emp-toolkit,GSV07}.
In terms of round complexity, our protocol executes in $\log \ell$ rounds (e.g. $5$ rounds for $\ell=32$ bits);
see \tableref{comp-mill} for a detailed comparison and \cite{Cou18} for a detailed overview of the costs of other comparison protocols.

Using our protocol for millionaires' problem, we build new and efficient protocols for computing $\drelu$ for both $\ell-$bit integers (i.e., $\bbZ_\intx$, $\intx = {2^\ell}$) and general rings $\bbZ_\ringsz$.
Our protocol for $\drelu$ serves as one of the main building blocks for non-linear activations such as $\relu$ and $\maxpool$, as well as division over both input domains.
Providing support for  $\ell-$bit integers $\bbZ_{\intx}$ as well as arbitrary rings $\bbZ_{\ringsz}$, allows us to securely evaluate the linear layers (such as matrix multiplication and convolutions) using  the approaches of Oblivious Transfer (OT)~\cite{beaver,secureml} as well as Homomorphic Encryption (HE)~\cite{gentryfhe,gazelle,delphi}, respectively. 
This provides our protocols great flexibility when executing over different network configurations. 
Since all prior work~\cite{secureml,minionn,gazelle,delphi} for securely computing these activations rely on Yao's garbled circuits~\cite{Yao}, our protocols are much more efficient in both settings.
Asymptotically, our $\relu$ protocol over $\bbZ_\intx$ and $\bbZ_\ringsz$ communicate $\approx 8\times$ and $\approx 12\times$ less bits than prior works~\cite{Yao,secureml,minionn,gazelle,delphi,emp-toolkit} (see \tableref{comp-relu} for a detailed comparison).
Experimentally, our protocols are at least an order of magnitude more performant than prior protocols when computing $\relu$ activations at the scale of ML applications.
\\\\
\noindent{\textbf{Fixed-point arithmetic.}} 
The ML models used by all prior works on secure inference are expressed
using fixed-point arithmetic; such models can be obtained from~\cite{tflite,bnn,cryptflow,dfq}. A faithful implementation of fixed-point arithmetic is quintessential to ensure that the secure computation is {\em correct}, i.e., it is {\it equivalent} to the cleartext computation for all possible inputs.
Given a secure inference task $F(x,w)$, some prior works~\cite{secureml,delphi,securenn,aby3,cryptflow} give up on correctness
when implementing division operations and instead compute an approximation $F'(x,w)$. In fixed-point arithmetic, each multiplication requires a division by a power-of-2 and multiplications are used pervasively in linear-layers of DNNs. 
 Moreover, layers like average-pool require division for computing means.
Loss in correctness is worrisome as the errors can accumulate and $F'(x,w)$ can be arbitrarily far from $F(x,w)$.
 Recent work~\cite{delphi} has shown that even in practice the approximations can lead  to significant losses in classification accuracy.

As our next contribution, we provide novel protocols to compute division by power-of-2 as well as division by arbitrary integers that are both correct and efficient. The inputs to these protocols can be encoded over both $\ell-$bit integers $\bbZ_{\intx}$ as well as $\bbZ_{\ringsz}$, for arbitrary $\ringsz$. 
To the best of our knowledge, the only known approach to compute division correctly is via garbled circuits which we compare with in Table~\ref{tab:comp-avgpool}.
While garbled circuits based protocols require communication which is quadratic in $\ell$ or $\log n$, our protocols are asymptotically better and incur only linear communication. Concretely, for average pool with $7\times 7$ filters and 32-bit integers, our protocols have $\approx 54\times$ less communication.
\\\\
\noindent{\textbf{Scaling to practical DNNs.}} These efficient protocols, help us securely evaluate practical DNNs like \squeezenet\   on ImageNet scale classification tasks in under a minute. In sharp contrast, all prior works on secure  2-party inference (\cite{cryptonets,chet,secureml,minionn,gazelle,delphi,nitin,xonn,ezpc,hycc,nhe1,deepsecure,ball}) has been limited to small DNNs on tiny datasets like MNIST and CIFAR.
While MNIST deals with the task of classifying black and white handwritten digits given as $28\times 28$ images into the classes 0 to 9, ImageNet tasks are much more complex: typically $224\times 224$ colored images need to be classified into thousand classes (e.g., agaric, gyromitra, ptarmigan, etc.) that even humans can find challenging .
 Additionally, our work is the first to securely evaluate practical {\em convolutional neural networks} (CNNs) like \resnet\ and \densenet; these DNNs are at least an order of magnitude larger than the DNNs considered in prior work, provide over $90\%$ Top-5 accuracy on ImageNet, and have also been shown to predict lung diseases from chest X-ray images~\cite{cryptflow, chestxray2018}.
Thus, our work provides the first implementations of practical ML inference tasks running securely.  Even on the smaller CIFAR scale DNNs, our protocols require an order of magnitude less communication and $20\times$-$30\times$ less time than the state-of-the-art~\cite{delphi} (see  Section~\ref{sec:exp-comparison-prior-work}). 
\\\\
\noindent{\textbf{OT vs HE.}} Through our evaluation, we also resolve the OT vs HE conundrum: 
although the initial works on secure inference~\cite{secureml,minionn} used OT-based protocols for evaluating convolutions, the state-of-the-art protocols~\cite{gazelle,delphi}, which currently provide the best published inference latency, use HE-based convolutions. HE-based secure inference has much less communication than OT but HE requires more computation.
Hence, at the onset of this work, it was not clear to us whether HE-based convolutions would provide us the best latency for ImageNet-scale benchmarks. 

To resolve this empirical question, we implement two classes of protocols, $\toolot$ and $\toolhe$, in \tool.
In $\toolot$, inputs are in $\bbZ_{\intx}$ ($\intx = 2^\ell$, for a suitable choice of $\ell$). Linear layers such as matrix multiplication and convolution are performed using OT-based techniques~\cite{beaver,secureml}, while the activations such as $\relu$, $\maxpool$ and $\avgpool$  are implemented using our new protocols over $\bbZ_{\intx}$. 
In $\toolhe$, inputs are encoded in an appropriate prime field $\bbZ_{\ringsz}$ (similar to~\cite{gazelle,delphi}). 
Here, we compute linear layers using homomorphic encryption and the activations  using our protocols over $\bbZ_{\ringsz}$.
In both $\toolot$ and $\toolhe$ faithful divisions after linear layers are performed using our new protocols over corresponding rings. Next, we evaluate ImageNet-scale inference tasks with both $\toolot$ and $\toolhe$ . We observe that in a WAN setting, where communication is a bottleneck, HE-based inference is always faster and in a LAN setting OT and HE are incomparable. 
\begin{table}
  \centering
      \begin{tabular}{|c|c|c|c|}
    \hline
				Layer & Protocol & Comm. (bits) & Rounds \\ \hline \hline
				\multirowcell{4}{Millionaires' \\ on $\zo^\ell$}
				& GC \cite{Yao, emp-toolkit} & $4\secpar\ell$ & 2 \\ \cline{2-4}
				& GMW$\footnotemark/$GSV~\cite{gmw,GSV07}& $\approx 6\secpar\ell$ & $\log \ell+3$ \\ \cline{2-4}
				& SC3\footnotemark\cite{Cou18} & $> 3\secpar\ell$ & $\approx 4\log^* \secpar$ \\ \cline{2-4}
				& This work ($\blsize = 4$) & $< \secpar \ell + 14\ell $ & $\log {\ell}$\\ \hline \hline
				\multirowcell{5}{Millionaires' \\example \\$\ell = 32$}
				& GC \cite{Yao, emp-toolkit} & 16384 & 2 \\ \cline{2-4}
				& GMW$/$GSV \cite{gmw, GSV07} & 23140 & 8 \\ \cline{2-4}
				& SC3 \cite{Cou18} & 13016 & 15 \\ \cline{2-4}
				& This work ($\blsize = 7$) & 2930 & 5 \\ \cline{2-4}
				& This work ($\blsize = 4$) & 3844 & 5 \\ \hline
			\end{tabular}
 \caption{Comparison of communication with prior work for millionaires' problem. For our protocol, $\blsize$ is a parameter. For concrete bits of communication we use $\secpar=128$. }
\label{tab:comp-mill}
\end{table}
\addtocounter{footnote}{-1}
\footnotetext{Here we state the communication numbers for GMW~\cite{gmw} for a depth-optimized circuit. The circuit that would give the best communication would still have a complexity of $> 2\secpar\ell$ and would additionally pay an inordinate cost in terms of rounds, namely $\ell$.}
\addtocounter{footnote}{1}
\footnotetext{Couteau~\cite{Cou18} presented multiple protocols; we pick the one that has the best communication complexity.}
 \begin{table}
  \centering
      \begin{tabular}{|c|c|c|c|}
    \hline
				Layer & Protocol & Comm. (bits) & Rounds \\ \hline \hline
				\multirowcell{2}{$\relu$ for \\$\bbZ_{2^\ell}$}

				& GC \cite{Yao, emp-toolkit} & $ 8\secpar\ell-4\secpar$ & 2 \\ \cline{2-4}
				& This work & \multirow{1}{*}{ $< \secpar\ell+18\ell$ } & $\log \ell + 2$\\
				 \hline

				\multirowcell{2}{$\relu$ for \\general $\mathbb{Z}_n$}

				& GC \cite{Yao, emp-toolkit} & $18\secpar\etax-6\secpar$ & 2 \\ \cline{2-4}
				& This work & $< \frac{3}{2}\secpar (\etax+1) + 31\etax$ & $\log \etax + 4$\\
				\hline \hline

				\multirowcell{2}{$\relu$ for \\ $\bbZ_{2^\ell}$, $\ell = 32$}
				
				& GC \cite{Yao, emp-toolkit} & 32256 & 2 \\ \cline{2-4}
				& This work  & 3298 & 7 \\ \cline{2-4}
				\hline
				
				\multirowcell{2}{$\relu$ for \\$\mathbb{Z}_n$, $ \etax = 32$}
				
				& GC \cite{Yao, emp-toolkit} & 72960 & 2 \\ \cline{2-4}
				& This work  & 5288 & 9 \\ \cline{2-4}
				\hline
			\end{tabular}
 \caption{Comparison of communication with garbled circuits for $\relu$. We define $\etax = \lceil \log n \rceil$. For concrete bits of communication we use $\secpar=128$. }
\label{tab:comp-relu}
\end{table}
 \begin{table}
  \centering
      \begin{tabular}{|c|c|c|c|}
    \hline
				Layer & Protocol & Comm. (bits) & Rounds \\ \hline \hline
				\multirowcell{2}{$\avgpool_d$ \\$\bbZ_{2^\ell}$}

				& GC \cite{Yao, emp-toolkit} & $2\secpar(\ell^2 + 5\ell -3)$ & 2 \\ \cline{2-4}
				& This work &  $< (\secpar + 21) \cdot ( \ell + 3 \delta )$   & $\log (\ell\delta ) + 4$\\
              \hline

				\multirowcell{2}{$\avgpool_d$  \\ $\mathbb{Z}_n$}

				& GC \cite{Yao, emp-toolkit} & $2\secpar(\etax^2 + 9\etax-3)$ & 2 \\ \cline{2-4}
				& This work & $< (\textstyle\frac{3}{2} \secpar + 34) \cdot (\etax + 2 \delta)$ & $\log (\etax\delta) + 6$\\
             \hline \hline

                \multirowcell{2}{$\avgpool_{49}$ \\  $\bbZ_{2^\ell}, \ell = 32$}
				
				& GC \cite{Yao, emp-toolkit} & 302336 & 2 \\ \cline{2-4}
				& This work  & 5570 & 10 \\ \cline{2-4}
				\hline
				
                \multirowcell{2}{$\avgpool_{49}$  \\ $\mathbb{Z}_n, \etax = 32$}
				
				& GC \cite{Yao, emp-toolkit} & 335104 & 2 \\ \cline{2-4}
				& This work & 7796 & 14 \\ \cline{2-4}
				\hline
			\end{tabular}
 \caption{Comparison of communication with garbled circuits for $\avgpool_d$. We define $\etax = \lceil \log n \rceil$ and $\delta = \lceil \log (6 \cdot d) \rceil$. For concrete bits of communication we use $\secpar=128$. Choice of $d =49$ corresponds to average pool filter of size $7\times 7$. } \label{tab:comp-avgpool}
\end{table}
 \subsection{Our Techniques}

\noindent\textbf{Millionaires'.} Our protocol for securely computing the millionaires' problem (the bit $x<y$) is based on the following observation (first made in~\cite{GSV07}). 
Let $x = x_1||x_0$ and $y = y_1||y_0$ (where $||$ denotes concatenation and $x_1, y_1$ are strings of the same length). 
Then, $x<y$ is the same as checking if either $x_1<y_1$ or $x_1=y_1$ and $x_0<y_0$. 
Now, the original problem is reduced to computing two millionaires' instances over smaller length strings ($x_1<y_1$ and $x_0<y_0$) and one equality test ($x_1 = y_1$). 
By continuing recursively, one could build a tree all the way where the leaves are individual bits, at which point one could use 1-out-of-2 OT-based protocols to perform the comparison/equality. 
However, the communication complexity of this protocol is still quite large. 
We make several important modifications to this approach. 
First, we modify the tree so that the recursion is done $\log (\ell/\blsize)$ times to obtain leaves with strings of size $\blsize$, for a parameter $\blsize$.
We then use 1-out-of-$2^\blsize$ OT to compute the comparison/equality at the leaves, employing the lookup-table based approach of \cite{DKSSZZ17}.
Second, we observe that by carefully setting up the receiver's and sender's messages in the OT protocols for leaf comparisons and equality, multiple 1-out-of-$2^\blsize$ OT instances can be combined to reduce communication. 
Next, recursing up from the leaves to the root, requires securely computing the $\mathsf{AND}$ functionality\footnote{This functionality takes as input shares of bits $x, y$ from the two parties and outputs shares of $x~\mathsf{AND}~y$ to both parties.}  that uses Beaver bit triples~\cite{beaver}. 
We observe that the same secret value is used in 2 $\mathsf{AND}$ instances.  
Hence, we construct correlated pairs of bit triples using 1-out-of-8 OT protocols~\cite{kkot} to reduce this cost to $\secpar+8$ bits (amortized) per triple, where $\secpar$ is the security parameter and typically $128$.
Some more work is needed for the above technique to work efficiently for the general case when $\blsize$ does not divide $\ell$ or $\ell/\blsize$ is not a power of $2$. 
Finally, by picking $\blsize$ appropriately, we obtain a protocol for millionaires' whose concrete communication (in bits) is nearly $5$ times better than prior work. 
\\

\noindent\textbf{$\drelu$.}
Let $a$ be additively secret shared as $a_0, a_1$ over the appropriate ring. 
$\drelu(a)$ is $1$ if $a \geq 0$ and $0$ otherwise; note that $a\geq0$ is defined differently for $\ell-$bit integers and general rings.
Over $\bbZ_\intx$, where values are encoded using 2's complement notation, $\drelu(a) = 1\xor \msbl(a)$, where $\msbl(a)$ is the most significant bit of $a$.
Moreover,  $\msbl(a) = \msbl(a_0) \oplus \msbl(a_1) \oplus \carry$. 
Here, $\carry = 1$ if $a_0'+a_1' \geq 2^{\ell-1}$, where $a'_0, a'_1$ denotes the integer represented by the lower $\ell-1$ bits of $a_0, a_1$.
We compute this $\carry$ bit using a call to our millionaires' protocol.
Over $\bbZ_n$, $\drelu(a) = 1$ if $a \in [0,\lceil n/2 \rceil)$. Given the secret shares $a_0, a_1$, this is equivalent to $(a_0+a_1) \in [0,\lceil n/2 \rceil) \cup [n, \lceil 3n/2 \rceil)$ over integers.
While this can be na\"{\i}vely computed by making 3 calls to the millionaires' protocol, we show that by carefully selecting the inputs to the millionaires' protocol, one can do this with only 2 calls.
Finally, we set things up so that the two calls to millionaires' have correlated inputs that reduces the overall cost to $\approx$ 1.5 instances of millionaires' over $\bbZ_n$.
\\

\noindent\textbf{Division and Truncation.} As a technical result, we  provide a {\em correct} decomposition of division of a secret ring element in $\bbZ_\intx$ or $\bbZ_\ringsz$ by a public integer into division of secret shares by the same public integer and  correction terms (\theoremref{general-division}). 
These correction terms consist of multiple inequalities on secret values. 
As a corollary, we also get a much simpler expression for the special case of {\em truncation}, i.e., dividing $\ell$-bit integers by a power-of-$2$ (\corolref{truncate-int}). 
We believe that the general theorem as well as the corollary can be of independent interest. 
Next, we give efficient protocols for both general division (used for $\avgpool$, \tableref{comp-avgpool}) as well as division by a power-of-$2$ (used for multiplication in fixed-point arithmetic). The inequalities in the correction term are computed using our new protocol for millionaires' and the division of shares can be done locally by the respective parties. 
Our technical theorem is the key to obtaining secure implementation of DNN inference tasks that are bitwise equivalent to cleartext fixed-point execution.

\subsection{Other Related Work}

Perhaps the first work to consider the secure computation of machine learning inference algorithms was that of~\cite{bost}.
SecureML~\cite{secureml} was the first to consider secure neural network inference and training. Apart from the works mentioned earlier, other works include those that considered malicious adversaries~\cite{helen,mlspdz,leviosa} (for simpler ML models like linear models, regression, and polynomials) as well as specialized DNNs with 1 or 2 bit weights~\cite{xonn,deepsecure,nitin}.
Recently, \cite{edabits} gave protocols for faithful truncation (but not division) over $\ell$-bit integers and prime fields in various adversarial settings. For 2-party semi-honest setting, our protocols have up to $20\times$ less communication for the truncations required in our evaluation.
\cite{RSS19} proposed an HE-based triple generation protocol over $\bbZ_{2^{\ell}}$, which requires less communication than generating triples using OT.

\subsection{Organisation}
We begin with the details on security and cryptographic primitives used in Section~\ref{sec:prelims} on preliminaries. 
In Section~\ref{sec:comp-protocols} we provide our protocols for millionaires' (Section~\ref{sec:prot-mill}) and $\drelu$ (Section~\ref{sec:prot-drelu-int},~\ref{sec:prot-drelu-ring}), over both $\bbZ_\intx$ and general ring $\bbZ_\ringsz$. In Section~\ref{sec:division}, we present our protocols for division and truncation. We describe the various components of DNN inference  in Section~\ref{sec:secureinference} and show how to construct secure protocols for all these components given our protocols from Sections~\ref{sec:comp-protocols} and ~\ref{sec:division}. We present our implementation details in Section~\ref{sec:impl} and our experiments in Section~\ref{sec:experiments}. Finally, we conclude and discuss future work in \sectionref{conclusion}.

 \section{Preliminaries}
\label{sec:prelims}
\paragraph{Notation.}
For a set $\W$, $w \getsr \W$ denotes sampling an element $w$, uniformly at random from $\W$. $[\ell]$ denotes the set of integers $\{0, \cdots, \ell-1\}$. Let $\ind{b}$ denote the indicator function that is $1$ when $b$ is {\em true} and $0$ when $b$ is {\em false}.

\subsection{Threat Model and Security}
We provide security in the simulation paradigm~\cite{gmw, canetti00, lindellsim} against a {\em static semi-honest} probabilistic polynomial time (PPT) adversary $\adv$. That is, a computationally bounded adversary $\adv$ corrupts either \alice or \bob at the beginning of the protocol and follows the protocol specification honestly. 
Security is modeled by defining two interactions: a real interaction where \alice and \bob execute the protocol in the presence of $\adv$ and the environment $\env$ and an ideal interaction where the parties send their inputs to a trusted functionality that performs the computation faithfully. 
Security requires that for every adversary $\adv$ in the real interaction, there is an adversary
$\simu$ (called the simulator) in the ideal interaction, such that no environment $\env$ can distinguish between real and ideal interactions. Many of our protocols invoke multiple sub-protocols and we describe these using the \emph{hybrid model}. This is similar to a real interaction, except that sub-protocols are replaced by the invocations of  instances of  corresponding functionalities. A protocol invoking a functionality $\F$ is said to be in ``$\F$-hybrid model.''

\subsection{Cryptographic Primitives}

\subsubsection{Secret Sharing Schemes}
\label{sec:ss}

Throughout this work, we use 2-out-of-2 additive secret sharing schemes over different rings~\cite{shamir,blakeley}. The 3 specific rings that we consider are the field $\bbZ_2$, the ring $\bbZ_\intx$, where $\intx = 2^\ell$ ($\ell = 32$, typically), and the ring $\bbZ_\ringsz$,  for a positive integer $\ringsz$ (this last ring includes the special case of prime fields used in the works of \cite{gazelle,delphi}). We let $\genshare{\intx}{x}$ denote the algorithm that takes as input an element $x$ in $\bbZ_\intx$ and outputs shares over $\bbZ_\intx$, denoted by $\share{x}{\intx}{0}$ and $\share{x}{\intx}{1}$. 
Shares are generated by sampling random ring elements $\share{x}{\intx}{0}$ and $\share{x}{\intx}{1}$, with the only constraint that $\share{x}{\intx}{0}+\share{x}{\intx}{1} = x$ (where $+$ denotes addition in $\bbZ_\intx$). Additive secret sharing schemes are perfectly hiding, i.e., given a share $\share{x}{\intx}{0}$ or $\share{x}{\intx}{1}$, the value $x$ is completely hidden. The reconstruction algorithm $\reconst{\intx}{\share{x}{\intx}{0}, \share{x}{\intx}{1}}$ takes as input the two shares and outputs $x = \share{x}{\intx}{0}+\share{x}{\intx}{1}$. Shares (along with their corresponding $\genshare{}{}$ and $\reconst{}{}$ algorithms) are defined in a similar manner for $\bbZ_2$ and $\bbZ_\ringsz$ with superscripts $B$ and $\ringsz$, respectively. We sometimes refer to shares over $\bbZ_\intx$ and $\bbZ_\ringsz$ as arithmetic shares and shares over $\bbZ_2$ as boolean shares.

\subsubsection{Oblivious Transfer}

Let $\kkot{k}{\ell}$ denote the 1-out-of-$k$ Oblivious Transfer (OT) functionality~\cite{bcr1outofnot} (which generalizes 1-out-of-2 OT~\cite{rabinot, eglot}). The sender's inputs to the functionality are the $k$ strings $m_0, \cdots, m_{k-1}$, each of length $\ell$ and the receiver's input is a value $i \in [k]$. 
The receiver obtains $m_i$ from the functionality and the sender receives no output.
 We use the protocols from~\cite{kkot}, which are an optimized and generalized version of the OT extension framework proposed in~\cite{beaverotextension,iknp}. 
This framework allows the sender and receiver,  to ``reduce''  $\secpar^c$ number of oblivious transfers to $\secpar$ ``base'' OTs. We also use the notion of correlated 1-out-of-2 OT~\cite{ALSZ13}, denoted by $\iknpcot{\ell}$. In our context, this is a functionality where the sender's input is a ring element $x$ and the receiver's input is a choice bit $b$. The sender receives a random ring element $r$ as output and the receiver obtains either $r$ or $x+r$ as output depending on $b$.
The protocols for $\kkot{k}{\ell}$ \cite{kkot} and $\iknpcot{\ell}$ \cite{ALSZ13} execute in $2$ rounds and have total communication\footnote{The protocol of $\kkot{k}{\ell}$ \cite{kkot} incurs a communication cost of $\secpar + k\ell$. However, to achieve the same level of security, their security parameter needs to be twice that of $\iknpcot{\ell}$. In concrete terms, therefore, we write the cost as $2\secpar+k\ell$.} of $2\secpar + k\ell$ and $\secpar + \ell$, respectively. Moreover, simpler $\kkot{2}{\ell}$ has a communication of $\secpar+2\ell$ bits \cite{iknp,ALSZ13}.

\subsubsection{Multiplexer and B2A conversion}
\label{sec:mux}

The functionality $\fmux{\ringsz}$ takes as input arithmetic shares of $a$ over $\ringsz$ and boolean shares of choice bit $c$ from $\party{0}, \party{1}$, and returns shares of $a$ if $c =1$, else returns shares of $0$ over the same ring. 
A protocol for $\fmux{\ringsz}$ can easily be implemented by 2 simultaneous calls to $\kkot{2}{\etax}$ and communication complexity is $2(\secpar + 2\etax)$, where $\etax = \lceil \log \ringsz \rceil$. 

The functionality $\fBtoA{\ringsz}$ (for boolean to arithmetic conversion) takes boolean (i.e., over $\bbZ_2$) shares as input and gives out arithmetic (i.e., over $\bbZ_{\ringsz}$) shares of the same value as output. It can be realized via one call to $\iknpcot{\etax}$ and hence, its communication is $\secpar+\etax$. For completeness, we provide the protocols realizing $\fmux{\ringsz}$ as well as $\fBtoA{\ringsz}$ formally in \appendixref{mux} and \appendixref{BtoA}, respectively.

\subsubsection{Homomorphic Encryption}
\label{sec:he}

A homomorphic encryption of $x$ allows computing encryption of $f(x)$ without the knowledge of the decryption key. 
In this work, we require an additively homomorphic encryption scheme that supports addition and scalar multiplication, i.e. multiplication of a ciphertext with a plaintext. 
We use the additively homomorphic scheme of BFV~\cite{brak12,fv} (the scheme used in the recent works of Gazelle~\cite{gazelle} and Delphi~\cite{delphi}) and use the optimized algorithms of Gazelle for homomorphic matrix-vector products and homomorphic convolutions. 
The BFV scheme uses the batching optimization \cite{fully-hom-simd-operations,sealmanual} that enables operation on plaintext vectors over the field $\bbZ_{\ringsz}$, where $\ringsz$ is a prime plaintext modulus of the form $2 K \polymod + 1$, $K$ is some positive integer and $\polymod$ is scheme parameter that is a power-of-$2$.

\section{Millionaires' and $\drelu$ protocols}
\label{sec:comp-protocols}
In this section, we provide our protocols for millionaires' problem and $\drelu(a)$ when the inputs are $\ell$ bit signed integers as well as elements in general rings of the form $\bbZ_{\ringsz}$ (including prime fields).
Our protocol for millionaires' problem invokes instances of $\fand$ that take as input boolean shares of values $x, y \in \zo$ and returns boolean shares of $x \wedge y$. We discuss efficient  protocols for $\fand$ in \appendixref{regular-bit-triple} and ~\ref{app:correlated-bit-triple}.

\newcommand{\leavesmill}[2]{\lceil #1/#2 \rceil}
\newcommand{\tworemblsize}{R}

\subsection{Protocol for Millionaires'}
\label{sec:prot-mill}

In the Yao millionaires' problem, party $\party{0}$ holds $x$ and party $\party{1}$ holds $y$ and they wish to learn boolean shares of $\ind{x < y}$. 
Here, $x$ and $y$ are $\ell$-bit unsigned integers.
We denote this functionality by $\fmill{\ell}$.
Our protocol for $\fmill{\ell}$ builds on the following observation that was also used in \cite{GSV07}.
\begin{equation} \label{eq:cmp_eq}
    \ind{x < y} = \ind{x_1 < y_1} \xor \left(\ind{x_1 = y_1} \wedge \ind{x_0 < y_0}\right),
\end{equation}
where, $x = x_1\concat x_0$ and $y = y_1\concat y_0$. \\

\noindent{\em Intuition.} Let $\blsize$ be a parameter and $\twoblsize = 2^\blsize$. First, for ease of exposition, we consider the special case when $\blsize$ divides $\ell$ and $\blnum = \ell/\blsize$ is a power of 2. We describe our protocol for millionaires' problem in this setting formally in \algoref{mill}. We use Equation~\ref{eq:cmp_eq} above,  recursively $\log \blnum$ times to obtain $\blnum$ leaves of size $\blsize$ bits. That is, let $x = x_{\blnum-1} \concat \ldots \concat x_0$ and $y = y_{\blnum-1} \concat \ldots \concat y_0$ (where every $x_i, y_i \in \zo^m$).
Now, we compute the shares of the inequalities and equalities of strings at the leaf level using $\kkot{\twoblsize}{1}$ (steps~\ref{mill-ot1} and \ref{mill-ot2}, resp.).
Next, we compute the shares of the inequalities (steps~\ref{mill-and-comp} \& \ref{mill-xor-comp}) and equalities (step~\ref{mill-and-eq}) at each internal node upwards from the leaf using Equation~\ref{eq:cmp_eq}. Value of inequality at the root gives the final output. \\

\begin{algorithm}[t]
\caption{Millionaires', $\protmill{\ell, \blsize}$:}
\label{algo:mill}
\begin{algorithmic}[1]

\Require $\party{0}, \party{1}$ hold $x \in \zo^\ell$ and $y \in \zo^\ell$, respectively.
    
\Ensure $\party{0},\party{1}$ learn $\share{\ind{x < y}}{B}{0}$ and $\share{\ind{x < y}}{B}{1}$, respectively.

\vspace{0.2cm}

\State \alice parses its input as $x = x_{\blnum-1} \concat \ldots \concat x_0$
        and \bob parses its input as $y = y_{\blnum-1} \concat \ldots \concat y_0$, where $x_i, y_i \in \zo^{\blsize}$, $\blnum = \ell/\blsize$.
\State Let $\twoblsize = 2^\blsize$.
\For{$j = \{0, \ldots, \blnum-1\}$}
            \State \alice samples  $\share{\lessthan_{0,j}}{B}{0}, \share{\equal_{0,j}}{B}{0} \getsr \zo$.
						\For{$k = \{ 0, \ldots, \twoblsize-1 \}$} \label{mill-message-set}
                \State \alice sets $s_{j,k} = \share{\lessthan_{0,j}}{B}{0} \xor \ind{x_j < k}$.
                \State \alice sets $t_{j,k} = \share{\equal_{0,j}}{B}{0} \xor \ind{x_j = k}$.
            \EndFor
            \State \alice \& \bob invoke an instance of $\kkot{\twoblsize}{1}$ where \alice is the sender with inputs $\{s_{j,k}\}_{k}$ and \bob is the receiver with input $y_j$. \bob sets its output as $\share{\lessthan_{0,j}}{B}{1}$. \label{mill-ot1}
            \State \alice \& \bob invoke an instance of $\kkot{\twoblsize}{1}$ where \alice is the sender with inputs $\{t_{j,k}\}_{k}$ and \bob is the receiver with input $y_j$. \bob sets its output as $\share{\equal_{0,j}}{B}{1}$. \label{mill-ot2}
\EndFor

\For {$i = \{1, \ldots, \log \blnum\}$}
		\For {$j = \{0, \ldots, (\blnum/2^i)-1\}$}
				\State For $b \in \zo$, $\party{b}$ invokes $\fand$ with inputs $\share{\lessthan_{i-1, 2j}}{B}{b}$ and $\share{\equal_{i-1, 2j+1}}{B}{b}$ to learn output $\share{\temp}{B}{b}$. \label{mill-and-comp}
				\State $\party{b}$ sets $\share{\lessthan_{i,j}}{B}{b} =  \share{\lessthan_{i-1, 2j+1}}{B}{b} \xor \share{\temp}{B}{b}$. \label{mill-xor-comp}
				\State For $b \in \zo$, $\party{b}$ invokes $\fand$ with inputs $\share{\equal_{i-1,2j}}{B}{b}$ and $\share{\equal_{i-1,2j+1}}{B}{b}$ to learn output $\share{\equal_{i,j}}{B}{b}$. \label{mill-and-eq}
		\EndFor
\EndFor

\State For $b\in\zo$, $\party{b}$ outputs $\share{\lessthan_{\log \blnum, 0}}{B}{b}$.
\end{algorithmic}
\end{algorithm}
 
\noindent{\em Correctness and security.} Correctness is shown by induction on the depth of the tree starting at the leaves. First, by correctness of $\kkot{\twoblsize}{1}$ in step~\ref{mill-ot1}, $\share{\lessthan_{0,j}}{B}{1} = \share{\lessthan_{0,j}}{B}{0} \xor \ind{x_j < y_j}$. Similarly,  $\share{\equal_{0,j}}{B}{1} = \share{\equal_{0,j}}{B}{0} \xor \ind{x_j = y_j}$. This proves the base case. Let $\blnum_i = \blnum/2^i$. Also, for level $i$ of the tree, parse $x = x\p{i} = x\p{i}_{q_i-1}\concat \ldots x\p{i}_0$ and $y = y\p{i} = y\p{i}_{q_i-1}\concat \ldots y\p{i}_0$. Assume that for $i$ it holds that $\lessthan_{i,j} = \share{\lessthan_{i,j}}{B}{0} \xor \share{\lessthan_{i,j}}{B}{1} = \ind{x\p{i}_j < y\p{i}_j}$ and $\share{\equal_{i,j}}{B}{0} \xor \share{\equal_{i,j}}{B}{1} = \ind{x\p{i}_j = y\p{i}_j}$ for all $j \in \{0, \ldots, \blnum_i-1\}$. Then, we prove the same for $i+1$ as follows: By correctness of $\fand$, for $j \in  \{0, \ldots, \blnum_{i+1}-1\}$,  $\share{\lessthan_{i+1,j}}{B}{0} \xor \share{\lessthan_{i+1,j}}{B}{1} = \lessthan_{i, 2j+1} \xor (\lessthan_{i, 2j} \wedge \equal_{i, 2j+1})   = \ind{x\p{i}_{2j+1} < y\p{i}_{2j+1}} \xor (\ind{x\p{i}_{2j} < y\p{i}_{2j}}\wedge \ind{x\p{i}_{2j+1} = y\p{i}_{2j+1}})  =  \ind{x\p{i+1}_j < y\p{i+1}_j}$ (using Equation~\ref{eq:cmp_eq}). The induction step for $\equal_{i+1, j}$ holds in a similar manner, thus proving correctness. Given uniformity of $\share{\lessthan_{0,j}}{B}{0}, \share{\equal_{0,j}}{B}{0}$ for all $j \in \{0, \ldots, \blnum-1\}$, security follows easily in the $(\kkot{\twoblsize}{1}, \fand)$-hybrid. \\

\noindent{\em General case.} When $\blsize$ does not divide $\ell$ and $\blnum = \leavesmill{\ell}{\blsize}$ is not a power of 2,  we make the following modifications to the protocol. 
Since $\blsize$ does not divide $\ell$, $x_{\blnum-1} \in \zo^{r}$, where $r = \ell \bmod \blsize$.\footnote{Note that $r = \blsize$ when $\blsize$ divides $\ell$.} 
When doing the compute for $x_{\blnum -1}$ and $y_{\blnum-1}$, we perform a small optimization and use $\kkot{\tworemblsize}{1}$ in steps \ref{mill-ot1} and \ref{mill-ot2}, where $\tworemblsize = 2^{r}$.
Second, since $\blnum$ is not a power of 2, we do not have a perfect binary tree of recursion and we need to slightly change our recursion/tree traversal.
In the general case, we construct maximal possible perfect binary trees and connect the roots of the same using the relation in Equation~\ref{eq:cmp_eq}.
Let $\alpha$ be such that $2^\alpha < \blnum \leq 2^{\alpha+1}$. 
Now, our tree has a perfect binary sub-tree with $2^\alpha$ leaves and we have remaining $\blnum' = \blnum-2^\alpha$ leaves. We recurse on $\blnum'$. 
In the last step, we obtain our tree with $\blnum$ leaves by combining the roots of perfect binary tree with $2^\alpha$ leaves and tree with $\blnum'$ leaves using Equation~\ref{eq:cmp_eq}. 
Note that value at the root is computed using $\lceil \log \blnum \rceil$  sequential steps starting from the leaves.

\subsubsection{Optimizations}
We reduce the concrete communication complexity of our protocol using the following optimizations that are applicable to both the special and the general case.
\begin{tiret}
	\item Combining two $\kkot{\twoblsize}{1}$ calls into one $\kkot{\twoblsize}{2}$: Since the input of \bob (OT receiver) to $\kkot{\twoblsize}{1}$ in steps \ref{mill-ot1} and \ref{mill-ot2} is same, i.e. $y_j$, we can collapse these steps into a single call to $\kkot{\twoblsize}{2}$ where \alice and \bob input $\{(s_{j,k} \concat t_{j,k})\}_{k}$ and $y_j$, respectively. \bob sets its output as $(\share{\lessthan_{0,j}}{B}{1} \concat \share{\equal_{0,j}}{B}{1})$.  This reduces the cost from $2(2\secpar+\twoblsize)$ to $(2\secpar+2\twoblsize)$.

\item \label{optimill:and} Realizing $\fand$ efficiently:
It is well-known that $\fand$ can be realized using Beaver bit triples~\cite{beaver}.
For our protocol, we observe that the 2 calls to $\fand$ in steps \ref{mill-and-comp} and \ref{mill-and-eq} have a common input, $\share{\equal_{i-1, 2j+1}}{B}{b}$.
Hence, we optimize communication of these steps by generating correlated bit triples $(\share{d}{B}{b}$, $\share{e}{B}{b}$, $\share{f}{B}{b})$ and $(\share{{d'}}{B}{b}$, $\share{e}{B}{b}$, $\share{{f'}}{B}{b})$, for $b \in \zo$, such that $d \wedge e = f$ and ${d'} \wedge e = {f'}$. Next, we use $\kkot{8}{2}$ to generate one such correlated bit triple (\appendixref{correlated-bit-triple}) with communication $2\secpar + 16$ bits, giving the amortized cost of $\secpar + 8$ bits per triple. Given correlated bit triples, we need $6$ additional bits to compute both $\fand$ calls.

	\item \label{optimill:eq} Removing unnecessary equality computations: 
As observed in \cite{GSV07}, the equalities computed on lowest significant bits are never used. Concretely, we can skip computing the values $\equal_{i, 0}$ for $i \in \{0, \ldots, \log \blnum\}$. Once we do this optimization, we only need a single call to $\fand$ instead of 2 correlated calls for the leftmost branch of the tree. 
We use the $\kkot{16}{2} \to 2 \times \kkot{4}{1}$ reduction to generate 2 regular bit triples from \cite{DKSSZZ17} (\appendixref{regular-bit-triple}) with communication of $2\secpar+32$ bits. This gives us amortized communication of $\secpar+16$ bits per triple and we need 4 additional bits to realize $\fand$.
Overall, we get a reduction in total communication by $\twoblsize$ (for the leaf) plus $(\secpar + 2) \cdot \lceil \log \blnum \rceil$ (for leftmost branch) bits.

\end{tiret}

\subsubsection{Communication Complexity}
\label{sec:comm-mill}
In our protocol, we communicate in protocols for $\mathsf{OT}$ (steps~\ref{mill-ot1}\&\ref{mill-ot2}) and $\fand$ (steps~\ref{mill-and-comp}\&\ref{mill-and-eq}). 
With above optimizations, we need 1 call to $\kkot{\twoblsize}{1}$, $(\blnum-2)$ calls to $\kkot{\twoblsize}{2}$ and 1 call to $\kkot{\tworemblsize}{2}$ which cost $(2\secpar + \twoblsize)$, $\big((\blnum-2) \cdot (2\secpar + 2\twoblsize)\big)$ and $(2\secpar + 2\tworemblsize)$ bits, respectively. 
In addition, we have $\lceil \log \blnum \rceil$ invocations of $\fand$ and $(\blnum - 1 - \lceil \log \blnum \rceil)$ invocations of correlated $\fand$. These require communication of $(\secpar+20)\cdot\lceil \log \blnum \rceil$ and $(2\secpar+22)\cdot(\blnum - 1 - \lceil \log \blnum \rceil)$ bits. This gives us total communication of 
 $\secpar(4\blnum - \lceil \log \blnum \rceil -2) + \twoblsize(2\blnum - 3) + 2\tworemblsize + 22(\blnum-1) -2\lceil \log \blnum \rceil$ bits.
Using this expression for $\ell=32$ we get least communication for $\blsize = 7$ (\tableref{comp-mill}). We note that there is a trade-off between communication and computational cost of $\mathsf{OT}$s used and we discuss our choice of $\blsize$ for our experiments in \sectionref{impl}.

\subsection{Protocol for $\drelu$ for $\ell$-bit integers}
\label{sec:prot-drelu-int}
In \algoref{drelu-int}, we describe our protocol for $\fdreluint{\ell}$ that takes as input arithmetic shares of $a$ and returns boolean shares of $\drelu(a)$. 
Note that $\drelu(a) = (1\xor \msbl(a))$, where $\msbl(a)$ is the most significant bit of $a$. 
Let arithmetic shares of $a \in \bbZ_{\intx}$ be $\share{a}{\intx}{0} = \msbs_0\concat x_0$ and $\share{a}{\intx}{1} = \msbs_1\concat x_1$ such that $\msbs_0, \msbs_1 \in \zo$. 
 We compute the boolean shares of $\msbl(a)$ as follows: Let $\carry = \ind{(x_0 + x_1) > 2^{\ell-1}-1}$. Then, $\msbl(a) = \msbs_0 \xor \msbs_1 \xor \carry$. We compute boolean shares of carry by invoking an instance of $\fmill{\ell-1}$. \\

\begin{algorithm}[t]
\caption{$\ell$-bit integer $\drelu$, $\protdreluint{\ell}$:}
\label{algo:drelu-int}
\begin{algorithmic}[1]

 \Require $\party{0}, \party{1}$ hold $\share{a}{\intx}{0}$ and $\share{a}{\intx}{1}$, respectively. 
\Ensure $\party{0}, \party{1}$ get $\share{\drelu(a)}{B}{0}$ and $\share{\drelu(a)}{B}{1}$.

\vspace{0.2cm}

\State \alice parses its input as $\share{a}{\intx}{0} = \msbs_0\concat  x_0$
        and \bob parses its input as $\share{a}{\intx}{1} = \msbs_1 \concat x_1$, s.t. $b\in \zo, \msbs_b \in \zo, x_b \in  \zo^{\ell-1}$.

\State \alice \& \bob invoke an instance of $\fmill{\ell-1}$, 
where \alice's input is $2^{\ell-1}-1 -x_0$ and \bob's input is $x_1$. For $b\in \zo$, $\party{b}$ learns $\share{\carry}{B}{b}$.   \label{relu-int-mill}

\State For $b \in \zo$, $\party{b}$ sets $\share{\drelu}{B}{b} = \msbs_b \xor \share{\carry}{B}{b} \xor b$.

\end{algorithmic}
\end{algorithm}
 
\noindent{\em Correctness and security.} By correctness of $\fmill{\ell-1}$, $\reconst{B}{\share{\carry}{B}{0},$\\ $\share{\carry}{B}{1}} = \ind{(2^{\ell-1}-1-x_0) < x_1} = \ind{(x_0 + x_1) > 2^{\ell-1}-1}$. Also, $\reconst{B}{\share{\drelu}{B}{0}, \share{\drelu}{B}{1}} = \msbs_0 \xor \msbs_1 \xor \carry \xor 1 = \msbl(a) \xor 1$. Security follows trivially in the $\fmill{\ell-1}$ hybrid.  
\\

\noindent{\em Communication complexity} 
In \algoref{drelu-int}, we communicate the same as in $\protmill{\ell-1}$, that is $< (\secpar+14)(\ell-1)$ by using $\blsize=4$. 

\subsection{Protocol for $\drelu$ for general $\bbZ_{\ringsz}$}
\label{sec:prot-drelu-ring}
We describe a protocol for $\fdreluring{\ringsz}$ that takes arithmetic shares of $a$ over $\bbZ_\ringsz$ as input and returns boolean shares of $\drelu(a)$.
For integer rings $\bbZ_{\ringsz}$, $\drelu(a) = 1$ if $a < \lceil \ringsz/2 \rceil$ and $0$ otherwise. 
Note that this includes the case of prime fields considered in the works of \cite{gazelle,delphi}. 
Below, we formally discuss the case of rings of odd number of elements and omit the analogous case of even rings. 
We first describe a (simplified) protocol for $\drelu$ over $\bbZ_{\ringsz}$ in \algoref{drelu-ring-simple} with protocol logic as follows: Let arithmetic shares of $a\in\bbZ_{\ringsz}$ be $\share{a}{\ringsz}{0}$ and $\share{a}{\ringsz}{1}$. Define $\wrap = \ind{\share{a}{\ringsz}{0} + \share{a}{\ringsz}{1} > \ringsz-1}$, $\lt = \ind{\share{a}{\ringsz}{0} + \share{a}{\ringsz}{1} > (\ringsz-1)/2}$ and $\rt= \ind{\share{a}{\ringsz}{0} + \share{a}{\ringsz}{1} > \ringsz + (\ringsz-1)/2}$. Then, $\drelu(a)$ is $(1\xor \lt)$ if $\wrap = 0$, else it is $(1\xor \rt)$. In \algoref{drelu-ring-simple}, steps~\ref{relu-ring-mill-one},\ref{relu-ring-mill-two},\ref{relu-ring-mill-three}, compute these three comparisons using $\fmill{}$. Final output can be computed using an invocation of $\fmux{2}$. 

\begin{algorithm}[t]
\caption{Simple Integer ring $\drelu$, $\protdreluringsimple{\ringsz}$:}
\label{algo:drelu-ring-simple}
\begin{algorithmic}[1]

\Require $\party{0}, \party{1}$ hold $\share{a}{\ringsz}{0}$ and $\share{a}{\ringsz}{1}$, respectively, where $a \in \bbZ_{\ringsz}$. 
\Ensure $\party{0}, \party{1}$ get $\share{\drelu(a)}{B}{0}$ and $\share{\drelu(a)}{B}{1}$.

\vspace{0.2cm}

\State \alice \& \bob invoke an instance of $\fmill{\etax}$ with $\etax = \lceil \log \ringsz \rceil$, 
where \alice's input is $\left(\ringsz-1 - \share{a}{\ringsz}{0}\right)$ and \bob's input is $\share{a}{\ringsz}{1}$. For $b\in \zo$, $\party{b}$ learns $\share{\wrap}{B}{b}$ as output.  \label{relu-ring-mill-one}

\State \alice \& \bob invoke an instance of $\fmill{\etax+1}$, 
where \alice's input is $\left(\ringsz-1 - \share{a}{\ringsz}{0}\right)$ and \bob's input is $\left((\ringsz-1)/2 + \share{a}{\ringsz}{1}\right)$. For $b\in \zo$, $\party{b}$ learns $\share{\lt}{B}{b}$ as output.  \label{relu-ring-mill-two}

\State \alice \& \bob invoke an instance of $\fmill{\etax+1}$, 
where \alice's input is $\left(\ringsz + (\ringsz-1)/2 - \share{a}{\ringsz}{0}\right)$ and \bob's input is $\share{a}{\ringsz}{1}$. For $b\in \zo$, $\party{b}$ learns $\share{\rt}{B}{b}$ as output.  \label{relu-ring-mill-three}

	\State For $b\in \zo$, $\party{b}$ invokes $\fmux{2}$  with input $\left(\share{\lt}{B}{b} \xor \share{\rt}{B}{b}\right)$ and choice $\share{\wrap}{B}{b}$ to learn $\share{z}{B}{b}$. 
\State For $b \in \zo$, $\party{b}$ outputs $\share{z}{B}{b} \xor \share{\lt}{B}{b} \xor b$.

\end{algorithmic}
\end{algorithm}

 \begin{algorithm}[t]
\caption{Optimized Integer ring $\drelu$, $\protdreluring{\ringsz}$:}
\label{algo:drelu-ring}
\begin{algorithmic}[1]

\Require $\party{0}, \party{1}$ hold $\share{a}{\ringsz}{0}$ and $\share{a}{\ringsz}{1}$, respectively, where $a \in \bbZ_{\ringsz}$. Let $\etax = \lceil \log \ringsz \rceil$. 
\Ensure $\party{0}, \party{1}$ get $\share{\drelu(a)}{B}{0}$ and $\share{\drelu(a)}{B}{1}$.

\vspace{0.2cm}

\State \alice \& \bob invoke an instance of $\fmill{\etax+1}$, 
	where \alice's input is $\left(3(\ringsz-1)/2 - \share{a}{\ringsz}{0}\right)$ and \bob's input is $(\ringsz-1)/2 + \share{a}{\ringsz}{1}$. For $b\in \zo$, $\party{b}$ learns $\share{\wrap}{B}{b}$ as output. \label{relu-ring-final-mill-one}

	\State \alice sets $x = \left(2n -1 - \share{a}{\ringsz}{0}\right)$ if $\share{a}{\ringsz}{0} > (\ringsz-1)/2$, else $x = \left(\ringsz -1 - \share{a}{\ringsz}{0}\right)$.

\State \alice \& \bob invoke an instance of $\fmill{\etax+1}$, 
where \alice's input is $x$  and \bob's input is $\left((\ringsz-1)/2 + \share{a}{\ringsz}{1}\right)$. For $b\in \zo$, $\party{b}$ learns $\share{\xt}{B}{b}$ as output.  \label{relu-ring-final-mill-two}

	\State \alice samples  $\share{z}{B}{0} \getsr \zo$.

\For{$j = \{00, 01, 10, 11\}$}
						\State \alice parses $j$ as $j_{0} \concat j_{1}$ and sets $t_{j} = 1 \xor \share{\xt}{B}{0} \xor j_0$. 
						\If{$\share{a}{\ringsz}{0} > (\ringsz-1)/2$}
							\State \alice sets $s'_{j} = t_{j} \wedge (\share{\wrap}{B}{0} \xor j_1)$.
						\Else
							\State \alice sets $s'_{j} = t_{j} \xor ((1 \xor t_{j}) \wedge (\share{\wrap}{B}{0} \xor j_1))$
						\EndIf
						\State \alice sets $s_{j} = s'_{j} \xor \share{z}{B}{0}$
\EndFor

\State \alice \& \bob invoke an instance of $\kkot{4}{1}$ where \alice is the sender with inputs $\{s_j\}_{j}$ and \bob is the receiver with input $\share{\xt}{B}{1} \concat \share{\wrap}{B}{1}$. \bob sets its output as $\share{z}{B}{1}$. 

\State For $b \in \zo$, $\party{b}$ outputs $\share{z}{B}{b}$.

\end{algorithmic}
\end{algorithm}
 
\noindent{\em Optimizations.} 
We describe an optimized protocol for $\fdreluring{\ringsz}$ in \algoref{drelu-ring} that reduces the number of calls to $\fmill{}$ to 2. 
First, we observe that if the input of \bob is identical in all three invocations, then the invocations of OT in \algoref{mill} (steps~\ref{mill-ot1}$\&$\ref{mill-ot2}) can be done together for the three comparisons. This reduces the communication for each leaf OT invocation in steps~\ref{mill-ot1}$\&$\ref{mill-ot2} by an additive factor of $4\lambda$. To enable this, $\party{0}, \party{1}$ add $(\ringsz-1)/2$ to their inputs to $\fmill{\etax+1}$ in steps~\ref{relu-ring-mill-one},\ref{relu-ring-mill-three} ($\etax = \lceil \log \ringsz \rceil$). Hence, \bob's input to $\fmill{\etax+1}$ is $(\ringsz-1)/2 + \share{a}{\ringsz}{1}$ in all invocations and \alice's inputs are $\left(3(\ringsz-1)/2 - \share{a}{\ringsz}{0}\right)$, $\left(\ringsz -1 - \share{a}{\ringsz}{0}\right)$, $\left(2n-1 - \share{a}{\ringsz}{0}\right)$ in steps~\ref{relu-ring-mill-one},\ref{relu-ring-mill-two},\ref{relu-ring-mill-three}, respectively. 

Next, we observe that one of the comparisons in step \ref{relu-ring-mill-two} or step~\ref{relu-ring-mill-three} is redundant. For instance, if $\share{a}{\ringsz}{0} > (\ringsz-1)/2$, then the result of the comparison $\lt = \share{a}{\ringsz}{0} + \share{a}{\ringsz}{1} > (\ringsz-1)/2$ done in step~\ref{relu-ring-mill-two} is always $1$. Similarly, if $\share{a}{\ringsz}{0} \leq (\ringsz-1)/2$, then the result of the comparison $\rt= \ind{\share{a}{\ringsz}{0} + \share{a}{\ringsz}{1} > \ringsz + (\ringsz-1)/2}$ done in step~\ref{relu-ring-mill-three} is always $0$. Moreover, \alice knows based on her input $\share{a}{\ringsz}{0}$ which of the two comparisons is redundant. Hence, in the optimized protocol, \alice and \bob always run the comparison to compute shares of $\wrap$ and one of the other two comparisons. Note that the choice of which comparison is omitted by \alice need not be communicated to \bob, since \bob's input is same in all invocations of $\fmill{}$. Moreover, this omission does not reveal any additional information to \bob by security of $\fmill{}$.
Finally, \alice and \bob can run a $\kkot{4}{1}$ to learn the shares of $\drelu(a)$.  Here, \bob is the receiver and her choice bits are the shares learnt in the two comparisons. \alice is the sender who sets the 4 OT messages based on her input share, and two shares learnt from the comparison protocol. We elaborate on this in the correctness proof below. \\

\noindent{\em Correctness and Security.} First, by correctness of $\fmill{\etax+1}$ (step~\ref{relu-ring-final-mill-one}), $\wrap = \reconst{B}{\share{\wrap}{B}{0}, \share{\wrap}{B}{1}} = \ind{\share{a}{\intx}{0} + \share{a}{\intx}{1} > \ringsz -1}$. 
 Let $j^* = \share{\xt}{B}{1} \concat \share{\wrap}{B}{1}$. Then, $t_{j^*} = 1 \xor \xt$. We will show that $s'_{j^*} = \drelu(a)$, and hence, by correctness of $\kkot{4}{1}$, $z = \reconst{B}{\share{z}{B}{0}, \share{z}{B}{1}} = \drelu(a)$. We have the following two cases. 

When $\share{a}{\intx}{0} > (\ringsz-1)/2$, $\lt = 1$, and $\drelu(a) = \wrap \wedge (1 \xor \rt)$. Here, by correctness of $\fmill{\etax+1}$ (step~\ref{relu-ring-final-mill-two}), $\xt = 
\reconst{B}{\share{\xt}{B}{0}, \share{\xt}{B}{1}} = \rt$. Hence, $s'_{j^*} = t_{j^*} \wedge (\share{\wrap}{B}{0} \xor j^*_1) = (1\xor \rt) \wedge \wrap$.

When $\share{a}{\intx}{0} \leq (\ringsz-1)/2$, $\rt = 0$, $\drelu(a) $ is $1\xor \lt$ if $\wrap = 0$, else $1$. It can be written as $(1\xor \lt) \xor (\lt \wedge \wrap)$. In this case, by correctness of $\fmill{\etax+1}$ (step~\ref{relu-ring-final-mill-two}), $\xt = 
\reconst{B}{\share{\xt}{B}{0}, \share{\xt}{B}{1}} = \lt$. Hence, $s'_{j^*} = t_{j^*} \xor ((1 \xor t_{j^*}) \wedge (\share{\wrap}{B}{0} \xor j^*_1)) = (1\xor \lt) \xor (\lt \wedge \wrap)$. Since $\share{z}{B}{0}$ is uniform, security follows in the $(\fmill{\etax+1}, \kkot{4}{1})$-hybrid. \\

\noindent{\em Communication complexity.}
With the above optimization, the overall communication complexity of our protocol for $\drelu$ in $\bbZ_{\ringsz}$ is equivalent to $2$ calls to $\protmill{\etax+1}$ where \bob has same input plus $2\secpar+4$ (for protocol for $\kkot{4}{1}$).
Two calls to $\protmill{\etax+1}$ in this case (using $\blsize = 4$) cost $< \frac{3}{2} \secpar (\etax + 1) + 28 (\etax + 1)$ bits.
Hence, total communication is $< \frac{3}{2} \secpar (\etax + 1) + 28 (\etax + 1) + 2 \secpar + 4$.
We note that the communication complexity of simplified protocol in \algoref{drelu-ring-simple} is approximately $3$ independent calls to $\protmill{\eta}$, which cost $3 (\secpar \etax + 14 \etax)$ bits, plus $2\secpar+4$ bits for $\fmux{2}$.
Thus, our optimization gives almost $2\times$ improvement.

\section{Division and truncation} 
\label{sec:division}

We present our results on secure implementations of division in the ring by a positive integer and truncation (division by power-of-$2$) that are bitwise equivalent to the corresponding cleartext computation. We begin with closed form expressions for each of these followed by secure protocols that use them.

\subsection{Expressing general division and truncation using arithmetic over secret shares}
\label{sec:express-div}

Let $\idiv: \bbZ \times \bbZ \rightarrow \bbZ$ denote signed integer division, where the quotient is rounded towards $- \infty$ and the sign of the remainder is the same as that of divisor.
We denote division of a ring element by a positive integer using $\rdiv: \bbZ_{\ringsz} \times \bbZ \rightarrow \bbZ_{\ringsz}$ defined as
\begin{equation} \label{eq:div-eqn}
    \frdiv{a}{d} \triangleq \fidiv{a_u - \ind{a_u \geq \lceil \ringsz/2 \rceil} \cdot \ringsz}{d} \bmod{\ringsz},
\end{equation}
where the integer $a_u \in \{0,1,\ldots, n-1\}$ is the unsigned representation of $a \in \bbZ_{\ringsz}$ lifted to integers and $0 < d < \ringsz$.
For brevity, we use $x =_{\ringsz} y$ to denote $x \bmod{\ringsz} = y \bmod{\ringsz}$.

\begin{theorem} 
\label{theorem:general-division}
(Division of ring element by positive integer).
Let the shares of $a \in \bbZ_{\ringsz}$ be $\share{a}{\ringsz}{0}, \share{a}{\ringsz}{1} \in \bbZ_{\ringsz}$, for some $\ringsz = \ringsz^1 \cdot d + \ringsz^0 \in \bbZ$, where $\ringsz^0, \ringsz^1, d \in \bbZ$ and $0 \leq \ringsz^0 < d < \ringsz$.

Let the unsigned representation of $a,\share{a}{\ringsz}{0}, \share{a}{\ringsz}{1}$ in $\bbZ_{\ringsz}$ lifted to integers be $a_u, a_0, a_1 \in \{0, 1, \ldots, n-1\}$, respectively, such that $a_0 = a_0^1 \cdot d + a_0^0$ and $a_1 = a_1^1 \cdot d + a_1^0$, where $a_0^1, a_0^0, a_1^1, a_1^0 \in \bbZ$ and $0 \leq a_0^0, a_1^0 < d$.
Let $\ringsz' = \lceil \ringsz/2 \rceil \in \bbZ$. Define $\corr,\ A, \ B,\ C \in \bbZ$ as follows:
\begin{equation*}
\begin{split}
    \corr ={}& \left \{
    \begin{array}{cc}
        -1 & (a_u \geq \ringsz') \wedge (a_0 < \ringsz') \wedge (a_1 < \ringsz') \\
         1 & (a_u < \ringsz') \wedge (a_0 \geq \ringsz') \wedge (a_1 \geq \ringsz') \\
         0 & \text{otherwise}
    \end{array} \right . , \\
    A ={}& a_0^0 + a_1^0 - (\ind{a_0 \geq \ringsz'} + \ind{a_1 \geq \ringsz'} - \corr) \cdot \ringsz^0. \\
    B = {} & \fidiv{a_0^0 - \ind{a_0 \geq \ringsz'} \cdot \ringsz^0}{d} + \fidiv{a_1^0 - \ind{a_1 \geq \ringsz'} \cdot \ringsz^0}{d} \\
    C = {} & \ind{A~<~d} + \ind{A~<~0} + \ind{A~<~-d}
\end{split}
\end{equation*}
Then, we have:
\begin{equation*}
    \frdiv{\share{a}{\ringsz}{0}}{d} + \frdiv{\share{a}{\ringsz}{1}}{d} + (\corr \cdot \ringsz^1 + 1 - C - B) =_{\ringsz} \frdiv{a}{d}.
\end{equation*}
\end{theorem}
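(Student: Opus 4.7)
The plan is to reduce the claim to an identity in $\bbZ$ between the signed representatives of $a$ and its shares, then unfold the division by $d$ using $\ringsz = \ringsz^1 d + \ringsz^0$ and finish with a tight case analysis of $\fidiv{A}{d}$. Let $\tilde a = a_u - \ind{a_u \geq \ringsz'} \cdot \ringsz$ and, for $b \in \{0,1\}$, $u_b = a_b - \ind{a_b \geq \ringsz'} \cdot \ringsz$ denote the signed representatives of $a, \share{a}{\ringsz}{0}, \share{a}{\ringsz}{1}$, each lying in the interval $[-\lfloor \ringsz/2 \rfloor, \ringsz')$. Since $\frdiv{\cdot}{d}$ is by definition $\fidiv{\cdot}{d} \bmod \ringsz$ applied to the signed form, the target equation is equivalent to the integer congruence $\fidiv{\tilde a}{d} \equiv \fidiv{u_0}{d} + \fidiv{u_1}{d} + \corr \cdot \ringsz^1 + 1 - C - B \pmod{\ringsz}$.

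The first core step is to establish $\tilde a = u_0 + u_1 + \corr \cdot \ringsz$ over $\bbZ$, with $\corr$ exactly as in the statement. Since $\share{a}{\ringsz}{0} + \share{a}{\ringsz}{1} \equiv a \pmod{\ringsz}$, we have $u_0 + u_1 \equiv \tilde a \pmod{\ringsz}$, and because all three values lie in intervals of length less than $\ringsz$, their difference is $c \cdot \ringsz$ for some $c \in \{-1,0,1\}$. A case split on whether each of $a_0, a_1$ is $<\ringsz'$ or $\geq \ringsz'$ pins down $c$: when both shares are $<\ringsz'$ the integer sum $a_0+a_1$ stays in $[0,\ringsz)$, so $a_u = a_0+a_1$ and $c \in \{-1,0\}$ according to whether $a_u \geq \ringsz'$; the case $a_0,a_1 \geq \ringsz'$ is symmetric and gives $c \in \{0,1\}$; and in the mixed case one checks directly that $\tilde a = u_0+u_1$ regardless of whether the integer sum wraps. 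These three outcomes match the three branches defining $\corr$.

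Next I would substitute $\ringsz = \ringsz^1 d + \ringsz^0$ and $a_b = a_b^1 d + a_b^0$. Then $u_b = (a_b^1 - \ind{a_b \geq \ringsz'}\,\ringsz^1)\, d + (a_b^0 - \ind{a_b \geq \ringsz'}\,\ringsz^0)$, and because $\fidiv{\cdot}{d}$ absorbs integer multiples of $d$, $\fidiv{u_b}{d} = a_b^1 - \ind{a_b \geq \ringsz'}\,\ringsz^1 + \fidiv{a_b^0 - \ind{a_b \geq \ringsz'}\,\ringsz^0}{d}$. Plugging the identity of the previous paragraph into $\tilde a$ and collecting multiples of $d$ yields $\tilde a = \bigl[a_0^1 + a_1^1 + (\corr - \ind{a_0 \geq \ringsz'} - \ind{a_1 \geq \ringsz'})\,\ringsz^1\bigr]\, d + A$, so that $\fidiv{\tilde a}{d} - \fidiv{u_0}{d} - \fidiv{u_1}{d} = \corr \cdot \ringsz^1 + \fidiv{A}{d} - B$.

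The remaining step, which I expect to be the main arithmetic obstacle, is to show $\fidiv{A}{d} = 1 - C$. I would first bound $A$: the coefficient $\ind{a_0 \geq \ringsz'} + \ind{a_1 \geq \ringsz'} - \corr$ is always in $\{0,1,2\}$ (by the case analysis above, it equals $1$ whenever $|\corr|=1$, and equals the sum of the two indicators when $\corr = 0$), while $a_0^0, a_1^0, \ringsz^0 \in [0,d)$. Hence $A \in (-2d, 2d)$ and $\fidiv{A}{d} \in \{-2,-1,0,1\}$. A short table check on the four subintervals $[d, 2d),\ [0, d),\ [-d, 0),\ (-2d, -d)$ against the three thresholds $d, 0, -d$ appearing in the definition of $C$ confirms $C = 1 - \fidiv{A}{d}$ on each, which together with the previous step finishes the proof.
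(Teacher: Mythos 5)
Your proposal is correct and follows essentially the same route as the paper's proof: decompose the shares and $\ringsz$ by $d$, reduce the discrepancy $\frdiv{a}{d} - \frdiv{\share{a}{\ringsz}{0}}{d} - \frdiv{\share{a}{\ringsz}{1}}{d}$ to $\corr \cdot \ringsz^1 + \fidiv{A}{d} - B$, and finish by verifying $\fidiv{A}{d} = 1 - C$ on the four subintervals of $(-2d, 2d)$. The only cosmetic difference is that you identify $\corr$ as the wrap constant of the signed representatives via a three-way case split on the shares, whereas the paper tabulates the equivalent quantity $\ind{a_0 \geq \ringsz'} + \ind{a_1 \geq \ringsz'} - w - \ind{a_u \geq \ringsz'}$ in an eight-row truth table over the unsigned wrap bit $w$.
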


The proof of the above theorem is presented in \appendixref{division-proof}.

\subsubsection{Special Case of truncation for $\ell$ bit integers} 
The expression  above  can be simplified for the special case of division by $2^s$  of $\ell$-bit integers, i.e., arithmetic right shift with $s$ ($\gg s$), as follows: 

\begin{corollary} 
\label{corol:truncate-int}
(Truncation for $\ell$-bit integers).
Let the shares of $a \in \bbZ_{\intx}$ be $\share{a}{\intx}{0}, \share{a}{\intx}{1} \in \bbZ_{\intx}$.
Let the unsigned representation of $a,\share{a}{\intx}{0}, \share{a}{\intx}{1}$ in $\bbZ_{\intx}$ lifted to integers be $a_u, a_0, a_1 \in \{0, 1, \ldots, 2^\ell-1\}$, respectively, such that $a_0 = a_0^1 \cdot 2^s + a_0^0$ and $a_1 = a_1^1 \cdot 2^s + a_1^0$, where $a_0^1, a_0^0, a_1^1, a_1^0 \in \bbZ$ and $0 \leq a_0^0, a_1^0 < 2^s$.
Let $\corr \in \bbZ$ be defined as in \theoremref{general-division}.
Then, we have:
\begin{equation*}
    (a_0 \gg s)  + (a_1 \gg s) + \corr \cdot 2^{\ell-s} + \ind{a_0^0 + a_1^0 \geq 2^s} =_\intx (a \gg s).
\end{equation*}
\end{corollary}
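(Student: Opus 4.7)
The plan is to derive \corolref{truncate-int} as a direct specialization of \theoremref{general-division} with $\ringsz = 2^\ell$ and $d = 2^s$, simplifying every auxiliary quantity that appears in the theorem's statement.

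First, I would observe that $2^\ell = 2^{\ell-s} \cdot 2^s + 0$ gives $\ringsz^1 = 2^{\ell-s}$ and $\ringsz^0 = 0$, while $\ringsz' = \lceil 2^\ell/2 \rceil = 2^{\ell-1}$. Substituting $\ringsz^0 = 0$ into the definition of $A$ collapses it to $A = a_0^0 + a_1^0$, and the two floor-terms in $B$ each become $\fidiv{a_i^0}{2^s} = 0$ since $0 \leq a_i^0 < 2^s$, so $B = 0$. For $C$, the bounds $0 \leq A < 2 \cdot 2^s$ immediately kill both $\ind{A < 0}$ and $\ind{A < -2^s}$, leaving $C = \ind{A < 2^s}$ and hence $1 - C = \ind{a_0^0 + a_1^0 \geq 2^s}$. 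Plugging these into the conclusion of \theoremref{general-division} and using $\corr \cdot \ringsz^1 = \corr \cdot 2^{\ell-s}$ yields
\begin{equation*}
\rdiv(\share{a}{\intx}{0}, 2^s) + \rdiv(\share{a}{\intx}{1}, 2^s) + \corr \cdot 2^{\ell-s} + \ind{a_0^0 + a_1^0 \geq 2^s} =_{\intx} \rdiv(a, 2^s).
\end{equation*}

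The last step is to reconcile notation: I would verify that $(x \gg s)$ as used in the corollary is exactly $\rdiv(x, 2^s)$ on $\bbZ_{\intx}$, via a short case split using \equationref{div-eqn}. When $x_u < 2^{\ell-1}$, both sides equal $\lfloor x_u/2^s \rfloor$, the logical right shift of a non-negative value; when $x_u \geq 2^{\ell-1}$, one has $\rdiv(x, 2^s) \equiv x^1 - 2^{\ell-s} \pmod{2^\ell}$, which is precisely the sign-preserving two's-complement arithmetic shift of the negative value $x$. Since the definition of $\corr$ is verbatim the same in the corollary as in the theorem once $\ringsz' = 2^{\ell-1}$ is fixed, no further reconciliation is required. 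The main obstacle is therefore purely notational bookkeeping, namely keeping the unsigned-integer view $a_0, a_1 \in \{0,\dots,2^\ell-1\}$ consistent with the signed semantics of $\rdiv$ and $\gg$; once that identification is in place, the corollary drops out as a one-line specialization of \theoremref{general-division}.
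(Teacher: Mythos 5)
Your proposal is correct and follows exactly the paper's own route: specialize \theoremref{general-division} with $\ringsz = 2^\ell$ and $d = 2^s$, so that $\ringsz^1 = 2^{\ell-s}$, $\ringsz^0 = 0$, $A = a_0^0 + a_1^0$, $B = 0$, and $C = \ind{A < 2^s}$, whence $1 - C = \ind{a_0^0 + a_1^0 \geq 2^s}$. The paper's proof is a three-line version of the same specialization (it even states $(a \gg s) = \frdiv{a}{2^s}$ up front without the case split); your extra bookkeeping on $B$, the bounds on $A$, and the $\gg$-versus-$\rdiv$ reconciliation is just a more explicit rendering of the identical argument.
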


\begin{proof}
The corollary follows directly from \theoremref{general-division} as follows: First, $(a \gg s) = \frdiv{a}{2^s}$. Next, $\ringsz = 2^\ell$, $\ringsz^1 = 2^{\ell-s}$, and $\ringsz^0 = 0$. Using these, we get $A =a_0^0 + a_1^0$, $B = 0$ and $C = \ind{A < 2^s} = \ind{a_0^0 + a_1^0 < 2^s}$. 
\end{proof}

\subsection{Protocols for division}
\label{sec:division-protocols}

In this section, we describe our protocols for division in different settings.
We first describe a protocol for the simplest case of truncation for $\ell$-bit integers followed by a protocol for general division in $\bbZ_\ringsz$ by a positive integer (\sectionref{prot-general-division}).
Finally, we discuss another simpler case of truncation, which allows us to do better than general division for rings with a special structure (\sectionref{truncate-special-rings}).

\subsubsection{\textbf{Protocol for truncation of $\ell$-bit integer}}
\label{sec:prot-truncate-int}

\begin{algorithm}
\caption{Truncation,  $\prottruncint{\ell}{s}$:}
\label{algo:truncate-int}
\begin{algorithmic}[1]
\Require For $b \in \zo$,  $\party{b}$ holds $\share{a}{\intx}{b}$, where $a \in \bbZ_{\intx}$.
\Ensure For $b \in \zo$, $\party{b}$ learns $\share{z}{\intx}{b}$ s.t. $z = a \gg s$.

\vspace{0.2cm}

\State For $b \in \zo$, let $a_b, a_b^0, a_b^1 \in \bbZ$ be as defined in \corolref{truncate-int}.

\State For $b \in \zo$, $\party{b}$ invokes $\fdreluint{\ell}$ with input $\share{a}{\intx}{b}$ to learn output $\share{\alpha}{B}{b}$. Party $\party{b}$ sets $\share{m}{B}{b} = \share{\alpha}{B}{b} \xor b$.

\State For $b \in \zo$, $\party{b}$ sets $x_b = \msbl(\share{a}{\intx}{b})$.
\State \alice samples $\share{\corr}{\intx}{0} \getsr \bbZ_{2^\ell}$.
\For{$j = \{00, 01, 10, 11\}$}
        \State \alice computes $t_{j} = (\share{m}{B}{0} \xor j_0 \xor x_0) \wedge (\share{m}{B}{0} \xor j_0 \xor j_1)$ s.t. $j = (j_0 \concat j_1)$.
    \If{$t_j \wedge \ind{x_0 = 0}$}  \label{corr-case-one}
        \State \alice sets $s_{j} =_{\intx} -\share{\corr}{\intx}{0}-1$.
    \ElsIf{$t_j \wedge \ind{x_0 = 1}$} \label{corr-case-two}
        \State \alice sets $s_{j} =_{\intx} -\share{\corr}{\intx}{0}+1$.
	\Else
        \State \alice sets $s_{j} =_{\intx} -\share{\corr}{\intx}{0}$.
	\EndIf
\EndFor

\State \alice \& \bob invoke an instance of $\kkot{4}{\ell}$, where \alice is the sender with inputs $\{s_{j}\}_j$ and \bob is the receiver with input $\share{m}{B}{1} \concat x_1$ and learns $\share{\corr}{\intx}{1}$.

\State \alice \& \bob invoke an instance of $\fmill{s}$ with \alice's input as $2^s-1-a_0^0$ and \bob's input as $a^0_1$. For $b \in \zo$, $\party{b}$ learns $\share{c}{B}{b}$.

\State  For $b \in \zo$, $\party{b}$ invokes an instance of $\fBtoA{L}$ ($L = 2^\ell$)  with input $\share{c}{B}{b}$ and learns $\share{d}{\intx}{b}$.
\label{bool-arith}

\State $\party{b}$ outputs  $\share{z}{\intx}{b} = (\share{a}{\intx}{b} \gg s) + \share{\corr}{\intx}{b} \cdot 2^{\ell-s} + \share{d}{\intx}{b}$, $b\in \zo$.

\end{algorithmic}
\end{algorithm}
 
Let $\ftruncint{\ell}{s}$ be the functionality that takes arithmetic shares of $a$ as input and returns arithmetic shares of $a \gg s$ as output.
In this work, we give a protocol (\algoref{truncate-int}) that realizes the functionality $\ftruncint{\ell}{s}$ {\em correctly} building on \corolref{truncate-int}.

\noindent{\em Intuition.}
Parties \alice \& \bob first invoke an instance of $\fdreluint{\ell}$ (where one party locally flips its share of $\drelu(a)$) to get boolean shares $\share{m}{B}{b}$ of $\msbl(a)$.
Using these shares, they use a $\kkot{4}{\ell}$ for calculating $\share{\corr}{\intx}{b}$, i.e., arithmetic shares of $\corr$ term in \corolref{truncate-int}. Next, they use an instance of $\fmill{s}$ to compute boolean shares of $c = \ind{a_0^0+ a_1^0 \geq 2^s}$. Finally, they compute arithmetic shares of $c$ using a call to $\fBtoA{L}$ (\algoref{B2A}).\\

\noindent{\em Correctness and Security.} 
For any $z \in \bbZ_{\intx}$,  $\msbl(z) = \ind{z_u \geq 2^{\ell-1}}$, where $z_u$ is unsigned representation of $z$ lifted to integers.
First, note that $\reconst{B}{\share{m}{B}{0}, \share{m}{B}{1}} = 1 \xor \reconst{B}{\share{\alpha}{B}{0}, \share{\alpha}{B}{1}} = \msbl(a)$ by correctness of $\fdreluint{\ell}$.
Next, we show that $\reconst{\intx}{\share{\corr}{\intx}{0}, \\ \share{\corr}{\intx}{1}} = \corr$, as defined in \corolref{truncate-int}. Let $x_b = \msbl(\share{a}{\intx}{b})$ for $b \in \zo$, and let $j^* = (\share{m}{B}{1} \concat x_1)$. Then, $t_{j^*} = (\share{m}{B}{0} \xor \share{m}{B}{1} \xor x_0) \wedge (\share{m}{B}{0} \xor \share{m}{B}{1} \xor x_1) = (\msbl(a) \xor x_0) \wedge (\msbl(a) \xor x_1)$. Now, $t_{j^*} = 1$ implies that we are in one of the first two cases of expression for $\corr$ -- which case we are in can be checked using $x_0$ (steps ~\ref{corr-case-one} \& ~\ref{corr-case-two}). Now it is easy to see that $s_{j^*} = -\share{\corr}{\intx}{0} + \corr = \share{\corr}{\intx}{1}$. 

Next, by correctness of $\fmill{s}$, $c = \reconst{B}{\share{c}{B}{0}, \share{c}{B}{1}} = \share{c}{B}{0} \xor \share{c}{B}{1} = \ind{a_0^0 + a_1^0 \geq 2^s}$. Given boolean shares of $c$, step~\ref{bool-arith}, creates arithmetic shares of the same using an instance of $\fBtoA{L}$.
Since $\share{\corr}{\intx}{0}$ is uniformly random, security of our protocol is easy to see in $(\fdreluint{\ell}, \kkot{4}{\ell}, \fmill{s}, \fBtoA{L})$-hybrid. \\

\noindent{\em Communication complexity.}
$\prottruncint{\ell}{s}$ involves a single call each to $\fdreluint{\ell}, \kkot{4}{\ell}$,  $\fBtoA{L}$ and $\fmill{s}$. Hence, communication required is $<\secpar\ell+2\secpar+19\ell+$ communication for $\fmill{s}$ that depends on parameter $s$. 
For $\ell=32$ and $s=12$, our concrete communication is $4310$ bits (using $\blsize = 7$ for $\protmill{12}$ as well as $\protmill{31}$ inside $\protdreluint{32}$) as opposed to 24064 bits for garbled circuits.

\subsubsection{\textbf{Protocol for division in ring}}
\label{sec:prot-general-division}
Let $\fdivring{n}{d}$ be the functionality for division that takes arithmetic shares of $a$ as input and returns arithmetic shares of $\frdiv{a}{d}$ as output. 
Our protocol builds on our closed form expression from \theoremref{general-division}. 
We note that $\ell$-bit integers is a special case of $\bbZ_\ringsz$ and we use the same protocol for division of an element in $\bbZ_\intx$ by a positive integer.\\

\noindent{\em Intuition.} This protocol is similar to the previous protocol for truncation and uses the same logic to compute shares of $\corr$ term. 
Most non-trivial term to compute is $C$ that involves three signed comparisons over $\bbZ$. We emulate these comparisons using calls to $\fdreluint{\delta}$ where $\delta$ is large enough to ensure that there are no overflows or underflows. 
It is not too hard to see that $-2d+2 \leq A \leq 2d-2$ and hence, $-3d+2 \leq A-d,A, A+d \leq 3d-2$. Hence, we set $\delta = \lceil \log 6d \rceil$.
Now, with this value of $\delta$, the term $C$ can we re-written as $(\drelu(A- d) \xor 1) + (\drelu(A) \xor 1) + (\drelu(A+ d) \xor 1)$, which can be computed using three calls to $\fdreluint{\delta}$ (Step~\ref{div-drelu}) and $\fBtoA{\ringsz}$ (Step~\ref{div-B2A}) each. Finally, note that to compute $C$ we need arithmetic shares of $A$ over the ring $\bbZ_\Delta$, $\Delta = 2^\delta$. And this requires shares of $\corr$ over the same ring. Hence, we compute shares of $\corr$ over both $\bbZ_\ringsz$ and $\bbZ_\Delta$ (Step~\ref{div-corr}). Due to space constraints, we describe the protocol formally in \appendixref{division-protocol} along with its communication complexity. Also, \tableref{comp-avgpool} provides theoretical and concrete communication numbers for division in both $\bbZ_\intx$ and $\bbZ_\ringsz$, as well as a comparison with garbled circuits.\\

\subsubsection{\textbf{Truncation in rings with special structure}}
\label{sec:truncate-special-rings}
It is easy to see that truncation by $s$ in general rings can be done by performing a division by $d = 2^s$.
However, we can omit a call to $\fdreluint{\delta}$ and $\fBtoA{\ringsz}$ when the underlying ring and $d$ satisfy a relation.
Specifically, if we have $2 \cdot \ringsz^0 \leq d = 2^s$, then $A$ is always greater than equal to $-d$, where $\ringsz^0, A \in \bbZ$ are as defined in \theoremref{general-division}.
Thus, the third comparison ($A < -d$) in the expression of $C$ from \theoremref{general-division} can be omitted.
Moreover, this reduces the value of $\delta$ needed and  $\delta = \lceil \log 4d \rceil$ suffices since $-2d \leq A - d, A \leq 2d - 2$.

Our homomorphic encryption scheme requires $\ringsz$ to be a prime of the form $2 K \polymod + 1$ (\sectionref{he}), where $K$ is a positive integer and $\polymod\geq8192$ is a power-of-$2$.
Thus, we have $\ringsz^0 = \ringsz \bmod{2^s} = 1$ for $1 \leq s \leq 14$.
For all our benchmarks, $s \leq 12$ and we use this optimization for truncation in $\toolhe$. \\

 \section{Secure Inference}
\label{sec:secureinference}
We give an overview of all the layers  that must be computed securely to realize the task of secure neural network inference. Layers can be broken into two categories - {\em linear} and {\em non-linear}. An inference algorithm simply consists of a sequence of layers of appropriate dimension connected to each other. 
Examples of linear layers include matrix multiplication, convolutions, $\avgpool$ and batch normalization, while non-linear layers include $\relu$, $\maxpool$, and $\argmax$.

We are in the setting of secure inference where the model owner, say \alice, holds the weights. When securely realizing each of these layers, we maintain the following invariant: Parties \alice and \bob begin with arithmetic shares of the input to the layer and after the protocol, end with arithmetic shares (over the same ring) of the output of the layer. This allows us to stitch protocols for arbitrary layers sequentially to obtain a secure computation protocol for any neural network comprising of these layers. 
Semi-honest security of the protocol will follow trivially from sequential composibility of individual sub-protocols~\cite{gmw,canetti00,lindellsim}.
For protocols in $\toolot$, this arithmetic secret sharing is over $\bbZ_\intx$; in $\toolhe$,  the sharing is over $\bbZ_\ringsz$, prime $\ringsz$.
The inputs to secure inference are floating-point numbers, encoded as fixed-point integers in the ring ($\bbZ_{\intx}$ or $\bbZ_{\ringsz}$); for details see  \appendixref{input-encoding}.

\subsection{Linear Layers}

\subsubsection{Fully connected layers and convolutions.}
\label{sec:fcc}
 A fully connected layer in a neural network is simply a product of two matrices - the matrix of weights and the matrix of activations of that layer - of appropriate dimension. At a very high level, a convolutional layer applies a filter (usually of dimension $f\times f$ for small integer $f$) to the input matrix by sliding across it and computing the sum of element-wise products of the filter with the input. Various parameters are associated with convolutions - e.g. stride (a stride of 1 denotes that the filter slides across the larger input matrix beginning at every row and every column) and zero-padding (which indicates whether the matrix is padded with 0s to increase its dimension before applying the filter). 
When performing matrix multiplication or convolutions over fixed-point values, the values of the final matrix must be scaled down appropriately so that it has the same scale as the inputs to the computation. 
Hence, to do faithful fixed-point arithmetic, we first compute the matrix multiplication or convolution over the ring ($\bbZ_\intx$ or $\bbZ_\ringsz$) followed by truncation, i.e., division-by-$2^s$ of all the values.
In $\toolot$, multiplication and convolutions over the ring $\bbZ_\intx$ are done using oblivious transfer techniques and in $\toolhe$ these are done over $\bbZ_\ringsz$ using homomorphic encryption techniques
that we describe next followed by our truncation method.

\newcommand{\done}{M}
\newcommand{\dtwo}{N}
\newcommand{\dthree}{K}

\paragraph{OT based computation.}
The OT-based techniques for multiplication are well-known~\cite{beaver,aby,secureml} and we describe them briefly for completeness.
First consider the simple case of secure multiplication of $a$ and $b$ in $\bbZ_\intx$ where \alice knows $a$ and \alice and \bob hold arithmetic shares of $b$. 
This can be done by invoking $\iknpcot{i} \text{ for } i \in \{1, \ldots, \ell\}$ requiring communication equivalent to $\ell$ instances of $\iknpcot{\frac{\ell+1}{2}}$.
Using this, multiplying two matrices $A \in \bbZ_\intx^{\done,\dtwo}$ and $B \in \bbZ_\intx^{\dtwo,\dthree}$ such that \alice knows $A$ and $B$ is arithmetically secret shared requires $\done\dtwo\dthree\ell$ instances of $\iknpcot{\frac{\ell+1}{2}}$. 
This can be optimized with structured multiplications inside a matrix multiplication by combining all the COT sender messages when multiplying with the same element, reducing the complexity to that of $\dtwo\dthree\ell$ instances of $\iknpcot{\frac{\done(\ell+1)}{2}}$. 
Finally, we reduce the task of secure convolutions to secure matrix multiplication similar to \cite{aby3,securenn,cryptflow}.

\paragraph{HE based computation.} $\toolhe$ uses techniques from Gazelle~\cite{gazelle} and Delphi~\cite{delphi} to compute matrix multiplications and convolutions over a field $\bbZ_\ringsz$ (prime $\ringsz$), of appropriate size.
At a high level, first, \bob sends an encryption of its arithmetic share to \alice. Then, \alice homomorphically computes on this ciphertext using weights of the model (known to \alice) to compute an encryption of the arithmetic share of the result and sends this back to \bob.
Hence, the communication only depends on the input and output size of the linear layer and is independent of the number of multiplications being performed.
Homomorphic operations can have significantly high computational cost - to mitigate this, we build upon the {\em output rotations} method from~\cite{gazelle} for performing convolutions, and reduce its number of homomorphic rotations.
At a very high level, after performing convolutions homomorphically, ciphertexts are grouped, rotated in order to be correctly aligned, and then packed using addition.
In our work, we divide the groups further into subgroups that are misaligned {\em by the same offset}.
Hence the ciphertexts within a subgroup can first be added and the resulting ciphertext can then be aligned using a single rotation as opposed to subgroup-size many rotations in~\cite{gazelle}.
We refer the reader to \appendixref{opti-gazelle-algo} for details.

\paragraph{Faithful truncation.} To correctly emulate fixed-point arithmetic, the value encoded in the shares obtained from the above methods needs to be divided-by-$2^s$, where $s$ is the scale used. 
For this we invoke $\ftruncint{\ell}{s}$ in $\toolot$ and $\fdivring{\ringsz}{2^s}$ in $\toolhe$ for each value of the resulting matrix. With this, result of secure implementation of fixed-point multiplication and convolutions is {\em bitwise equal} to the corresponding cleartext execution. In contrast, many prior works on 2PC~\cite{secureml,delphi} and 3PC~\cite{securenn,aby3,cryptflow} used a {\em local truncation} method for approximate truncation based on a result from~\cite{secureml}. Here, the result can be arbitrarily wrong with a (small) probability $p$ and with probability $1-p$ the result can be wrong in the last bit. Since $p$ grows with the number of truncations,  these probabilistic errors are problematic for large DNNs. 
Moreover, even if $p$ is small, $1$-bit errors can accumulate and the results of cleartext execution and secure execution can diverge; this  is undesirable as it breaks correctness of 2PC.

\subsubsection{$\avgpool_d$} The function $\avgpool_d(a_1, \cdots, a_d)$ over a pool of $d$ elements $a_1, \cdots, a_d$ is defined to be the arithmetic mean of these $d$ values. The protocol to compute this function works as follows: $\alice$ and $\bob$ begin with arithmetic shares (e.g. over $\bbZ_\intx$ in $\toolot$) of $a_i$, for all $i \in [d]$. They perform local addition to obtain shares of $w = \sum_{i=1}^d a_i$ (i.e., $\party{b}$ computes $\share{w}{\intx}{b} = \sum_{i=1}^d \share{a_{i}}{\intx}{b}$). Then, parties invoke $\fdivring{L}{d}$ on inputs $\share{w}{\intx}{b}$ to obtain the desired output. Correctness and security follow in the $\fdivring{L}{d}-$hybrid model. Here too, unlike~\cite{delphi}, our secure execution of average pool is bitwise equal to the cleartext version.

\subsection{Nonlinear Layers}
\label{sec:non-linear}

\subsubsection{$\relu$}
\label{sec:prot-relu-int}
Note that $\relu(a) = a$ if $a \geq 0$, and $0$ otherwise. Equivalently, $\relu(a) = \drelu(a) \cdot a$.
For $\bbZ_\intx$, first we compute the boolean shares of $\drelu(a)$ using a call to $\fdreluint{\ell}$ and then we compute shares of $\relu(a)$ using a call to multiplexer  $\fmux{\intx}$ (\sectionref{mux}). We describe the protocol for $\relu(a)$ over $\bbZ_\intx$ formally in \algoref{relu-int}, \appendixref{relu-protocol} (the case of $\bbZ_\ringsz$ follows in a similar manner). For communication complexity, refer to \tableref{comp-relu} for comparison with garbled circuits and \appendixref{relu-protocol} for details.

\subsubsection{$\maxpool_d$ and $\argmax_d$} The function $\maxpool_d(a_1, \cdots, a_d)$ over $d$ elements $a_1, \cdots, a_d$ is defined in the following way. Define $\gt(x,y) = z$, where $w = x-y$ and $z = x$, if $w>0$ and $z=y$, if $w\leq 0$. Define $z_1 = a_1$ and $z_i = \gt(a_i,z_{i-1})$, recursively for all $2 \leq i \leq d$. Now, $\maxpool_d(a_1, \cdots, a_d) = z_d$.

We now describe a protocol such that parties begin with arithmetic shares (over $\bbZ_\intx$) of $a_i$, for all $i \in [d]$ and end the protocol with arithmetic shares (over $\bbZ_\intx$) of $\maxpool_d(a_1,\cdots,a_d)$. For simplicity, we describe how $\alice$ and $\bob$ can compute shares of $z = \gt(x,y)$ (beginning with the shares of $x$ and $y$). It is easy to see then how they can compute $\maxpool_d$. First, parties locally compute shares of $w = x-y$ (i.e., $\party{b}$ computes $\share{w}{\intx}{b} = \share{x}{\intx}{b}-\share{y}{\intx}{b}$, for $b\in \zo$). Next, they invoke $\fdreluint{\ell}$ with input $\share{w}{\intx}{b}$  to learn output $\share{v}{B}{b}$. Now, they invoke $\fmux{\intx}$ with input $\share{w}{\intx}{b}$ and $\share{v}{B}{b}$ to learn output $\share{t}{\intx}{b}$. Finally, parties output $\share{z}{\intx}{b} = \share{y}{\intx}{b}+\share{t}{\intx}{b}$. The correctness and security of the protocol follows in a straightforward manner. Computing $\maxpool_d$ is done using $d-1$ invocations of the above sub-protocol in $d-1$ sequential steps. 

$\argmax_d(a_1,\cdots,a_d)$ is defined similar to $\maxpool_d(a_1, \cdots, a_d)$, except that its output is an index $i^*$ s.t. $a_{i^*} = \maxpool_d(a_1,\cdots,a_d)$. $\argmax_d$ can be computed securely similar to $\maxpool_d(a_1, \cdots, a_d)$.

 \section{Implementation}
\label{sec:impl}

We implement our cryptographic protocols in a library and integrate them
into the CrypTFlow framework~\cite{cryptflow,cryptflowimpl} as a new cryptographic
backend. CrypTFlow compiles high-level TensorFlow~\cite{tensorflow} inference code to
secure computation protocols using its frontend Athos, that are then executed by its
cryptographic backends. We modify the truncation behavior of
 Athos in support of faithful
fixed-point arithmetic.
We start by describing the implementation of our cryptographic library, followed by the modifications that we made to Athos.

\subsection{Cryptographic backend}\label{subsec:cryptoimplementations}
To implement our protocols, we build upon the $\kkot{2}{\ell}$
implementation from EMP \cite{emp-toolkit} and extend it to
$\kkot{k}{\ell}$ using the protocol from \cite{kkot}. 
Our linear-layer
implementation in $\toolhe$ is based on
SEAL/Delphi~\cite{sealcrypto,github-delphi} and in $\toolot$ is based on EMP.
All our protocol implementations are multi-threaded.

\paragraph{Oblivious Transfer.} 

$\kkot{k}{\ell}$ requires a correlation robust function to mask the sender's messages in the OT extension protocol, and we use $\mathsf{AES}_{256}^{\mathsf{RK}}$ (re-keyed $\mathsf{AES}$ with $256$-bit key)\footnote{There are two types of $\mathsf{AES}$ in MPC applications - fixed-key~(FK) and re-keyed~(RK)~\cite{BHKR13,GKWY20}.  While the former runs key schedule only once and is more efficient, the latter generates a new key schedule for every invocation and is required in this application.} to instantiate it as in \cite{DKSSZZ17,aby}.
We incorporated the optimizations from
\cite{fast-std-garbling,aes-gcm-siv} for $\mathsf{AES}$ key expansion
and pipelining these $\mathsf{AES}_{256}^{\mathsf{RK}}$ calls. This leads to roughly $6\times$
improvement in the performance of $\mathsf{AES}_{256}^{\mathsf{RK}}$ calls,
considerably improving the overall execution time of $\kkot{k}{\ell}$
(e.g. $2.7\times$ over LAN for $\kkot{16}{8}$).

\paragraph{Millionaires' protocol.} Recall that $\blsize$ is a parameter in our protocol 
$\protmill{\ell,\blsize}$. While we discussed the dependence of communication complexity on $\blsize$ in \sectionref{comm-mill}, here we discuss its influence on the computational cost.
Our protocol makes $\ell/\blsize$ calls to $\kkot{\twoblsize}{2}$ (after merging steps~\ref{mill-ot1}\&\ref{mill-ot2}), where $\twoblsize = 2^\blsize$. 
Using OT extension techniques, generating an instance of $\kkot{\twoblsize}{2}$ requires $6~\mathsf{AES}_{256}^{\mathsf{FK}}$ and $(\twoblsize+1)~\mathsf{AES}_{256}^{\mathsf{RK}}$ evaluations. Thus, the computational cost grows super-polynomially with $\blsize$.
We note that for $\ell=32$, even though communication is minimized for $\blsize=7$, empirically we observe that $\blsize = 4$ gives us the best performance under both LAN and WAN settings (communication in this case is about $30\%$ more than when $\blsize=7$ but computation is $\approx 3\times$ lower).

\paragraph{Implementing linear layers in $\toolhe$.}
To implement the linear layers in $\toolhe$, we build upon the Delphi
implementation \cite{github-delphi,delphi}, that is in turn based on
the SEAL library~\cite{sealcrypto}.
We use the code for fully connected layers as it is from~\cite{github-delphi}.
For convolution layers, we parallelize the code, employ modulus-switching~\cite{sealcrypto} to reduce the ciphertext modulus (and hence ciphertext size), and implement the strided convolutions proposed in Gazelle~\cite{gazelle}.
These optimizations resulted in significant performance improvement of
convolution layers. E.g. for the first convolution
layer\footnote{Layer parameters: image size $230 \times 230$, filter
  size $7 \times 7$, input channels $3$, output channels $64$, and
  stride size $2 \times 2$} of \resnet, the runtime decreased from
$306$s to $18$s  in the LAN setting and
communication decreased from $204$ MiB to $76$ MiB.

\subsection{CrypTFlow integration}
\label{sec:compiler}

We integrate $\toolot$ and $\toolhe$ as new cryptographic backends into the CrypTFlow framework~\cite{cryptflow,cryptflowimpl}.
Thus, as in CrypTFlow~\cite{cryptflow}, we can work with unmodified TensorFlow code as input to produce our secure computation protocols.
CrypTFlow's TensorFlow frontend Athos outputs fixed-point DNNs that use
64-bit integers and sets an optimal scale using a validation set. CrypTFlow required a {\em bitwidth} of 64 to
ensure that the probability of local truncation errors in its protocols is small 
(Section~\ref{sec:fcc}). Since our protocols are correct and have no such errors, 
we extend Athos to set both the bitwidth and the scale optimally by autotuning on the validation set.
The bitwidth and scale leak information about the weights and this leakage is similar to the prior works on secure inference~\cite{secureml,delphi,gazelle,minionn,cryptflow,securenn,aby3}.

Implementing faithful  truncations using $\prottruncint{\ell}{s}$ requires the parties to communicate. We implement the following peephole optimizations in Athos to reduce the cost of
these truncation calls.
Consider a DNN having a convolution layer
followed by a ReLU layer. While truncation can be done
immediately after the convolution, moving the truncation call
to after the ReLU layer can reduce the cost of our protocol
$\prottruncint{\ell}{s}$. 
Since the values after ReLU are guaranteed to be all positive, the call to $\fdreluint{\ell}$ within it
(step 2 in \algoref{truncate-int}) now becomes redundant and can be omitted. 
Our optimization further accounts for operations that may occur
between the convolutions  and ReLU, say a matrix addition.
Moving the truncation call from immediately after convolution to after
ReLU means the activations flowing into the addition
operation are now scaled by $2s$, instead of the usual $s$. For the
addition operation to then work correctly, we scale the other argument
of addition by $s$ as well. These optimizations are fully automatic and need
no manual intervention.

 \section{Experiments}
\label{sec:experiments}

We empirically validate the following claims:
\begin{itemize}
\item In \sectionref{exp-comparison-gc}, we show that our protocols for computing $\relu$ activations are more efficient than state-of-the-art garbled circuits-based implementations (\tableref{exp-relu}). Additionally, our division protocols outperforms garbled circuits when computing average pool layers. 
\item  On the DNNs considered by prior work on secure inference, our protocols can evaluate the non-linear layers  much more efficiently and decrease the total time (\tableref{exp-comp-prior-work}) as well as the online time (\tableref{exp-delphi-online}).
\item We show the first empirical evaluation of 2-party secure inference on ImageNet-scale benchmarks (\sectionref{exp-eval-DNN}). These results show the trade-offs between OT and HE-based secure DNN inference (\tableref{eval-DNN}).
\end{itemize}
We start with a description of our experimental setup and benchmarks, followed by the results.

\paragraph{Experimental Setup.}
We ran our benchmarks in two network settings, namely, a LAN setting with both machines situated in West Europe, and a transatlantic WAN setting with one of the machines in East US.
The bandwidth between the machines is 377 MBps and 40 MBps in the LAN and the WAN setting respectively and  the echo latency is 0.3ms and 80ms respectively.
Each machine has commodity class hardware: 3.7 GHz Intel Xeon processor with 4 cores and 16 GBs of RAM.

\paragraph{Our Benchmarks.}
We evaluate on the ImageNet-scale benchmarks considered by~\cite{cryptflow}: \squeezenet~\cite{squeezenet}, \resnet~\cite{resnet}, and \densenet~\cite{densenet}. 
To match the reported accuracies, we need 37-bit fixed-point numbers for \resnet, whereas 32 bits suffice for \densenet and \squeezenet (\appendixref{accuracy-summary}). Recall that our division protocols lead to correct secure executions and there is no accuracy loss in going from cleartext inference to secure inference.~\appendixref{DNN-summary} provides a brief summary of these benchmarks.

\begin{figure}[t]
\centering
    \includegraphics[width=0.475\textwidth]{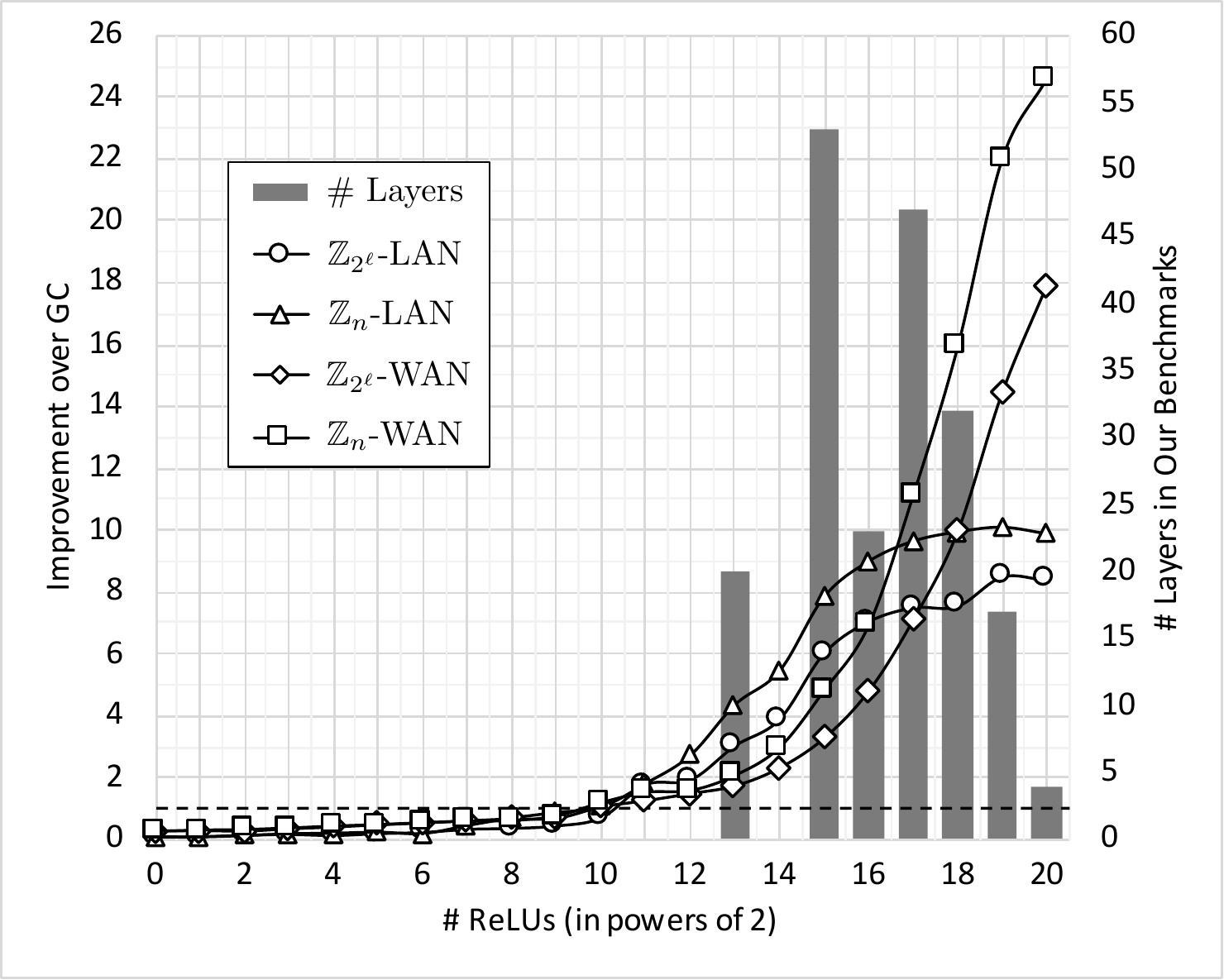}
    \caption{The left y-axis shows ($\frac{\text{GC Time}}{\text{Our Time}}$). The right y-axis shows the total number of $\relu$ layers corresponding to each layer size in our benchmark set. The legend entries denote the input domain and the network setting.}
    \label{fig:exp-relu-scale}
\end{figure}
 \begin{table}
    \centering
    \begin{subtable}{0.5\textwidth}
        \centering
        \begin{tabular}{|c|c|c|c|c|c|c|}
            \hline
            \multirow{2}{*}{Benchmark} & \multicolumn{3}{c|}{Garbled Circuits} & \multicolumn{3}{c|}{Our Protocols} \\
            & LAN & WAN & Comm & LAN & WAN & Comm \\ \hline \hline
            \squeezenet & 26.4 & 265.6 & 7.63 & 3.5 & 33.3 & 1.15 \\ \hline
            \resnet & 136.5 & 1285.2 & 39.19 & 16.4 & 69.4 & 5.23 \\ \hline
            \densenet & 199.6 & 1849.3 & 56.57 & 24.8 & 118.7 & 8.21\\ \hline
        \end{tabular}
        \caption{over $\bbZ_{2^{\ell}}$} \label{subtab:exp-relu-ring}
    \end{subtable}\\\smallbreak
    \begin{subtable}{0.5\textwidth}
        \centering
        \begin{tabular}{|c|c|c|c|c|c|c|}
            \hline
            \multirow{2}{*}{Benchmark} & \multicolumn{3}{c|}{Garbled Circuits} & \multicolumn{3}{c|}{Our Protocols} \\
            & LAN & WAN & Comm & LAN & WAN & Comm \\ \hline \hline
            \squeezenet & 51.7 & 525.8 & 16.06 & 5.6 & 50.4 & 1.77 \\ \hline
            \resnet & 267.5 & 2589.7 & 84.02 & 28.0 & 124.0 & 8.55 \\ \hline
            \densenet & 383.5 & 3686.2 & 118.98 & 41.9 & 256.0 & 12.64 \\ \hline
        \end{tabular}
        \caption{over $\bbZ_{\ringsz}$} \label{subtab:exp-relu-field}
    \end{subtable}
    \vspace{-5pt}
    \caption{Performance comparison with Garbled Circuits for $\relu$ layers. Runtimes are in seconds and comm.  in GiB.} \label{tab:exp-relu}
\end{table}
 \subsection{Comparison with Garbled Circuits}
\label{sec:exp-comparison-gc}
We compare with EMP-toolkit \cite{emp-toolkit}, the state-of-the-art library for Garbled Circuits (GC).
\figureref{exp-relu-scale} shows the improvement of our $\relu$ protocols over GC in both LAN and WAN settings.
On the x-axis, which is in log-scale, the number of $\relu$s range from $2^0$ to $2^{20}$.
The histogram shows, using the right y-axis, the cumulative number of layers in our benchmarks (\squeezenet, \resnet, \densenet) which require the number of $\relu$ activations given on the x-axis. We observe that these DNNs have layers that compute between $2^{13}$ and $2^{20}$ $\relu$s. For such layers, we observe (on the left y-axis) that our protocols are $2\times$--$25\times$ faster than GC -- the larger the layers the higher the speedups, and gains are larger in the WAN settings.
Specifically, for WAN and $> 2^{17}$ $\relu$s, the speedups are much higher than the LAN setting. Here,
the cost of rounds is amortized over large layers and the communication cost is a large fraction of the total runtime.
Note that our implementations perform load-balancing to leverage full-duplex  TCP.

Next, we compare the time taken by GC and our protocols in computing the $\relu$ activations of our benchmarks in \tableref{exp-relu}.
Our protocol over $\bbZ_{\intx}$ is up to $8 \times$ and $18 \times$ faster than GC in the LAN and WAN settings respectively, while it is $\approx 7\times$ more communication efficient.
As expected, our protocol over $\bbZ_{\ringsz}$ has even better gains over GC.
Specifically, it is up to $9 \times$ and $21\times$ faster in the LAN and WAN settings respectively, and has $\approx 9\times$ less communication.

We also performed a similar comparison of our protocols with GC for the $\avgpool$ layers of our benchmarks, and saw up to $51\times$ reduction in runtime and $41\times$ reduction in communication.
We report the concrete performance numbers and discuss the results in more detail in \appendixref{avgpool-numbers}.

\subsection{Comparison with Delphi}
\label{sec:exp-comparison-prior-work}
In this section, we compare with Delphi~\cite{delphi}, which is the current state-of-the-art for $2$-party secure DNN inference that outperforms~\cite{gazelle,ezpc,chameleon,minionn,cryptonets,nhe1,nhe2,chet,hycc}
in total time as well as the time taken in online phase. 
It uses garbled circuits for non-linear layers, and we show that with our protocols, the time taken to evaluate the non-linear layers  can be decreased significantly.

For a fair evaluation, we demonstrate these improvements on the benchmarks of Delphi~\cite{delphi}, i.e., the \textsf{MiniONN} (CIFAR-10)~\cite{minionn} and \textsf{ResNet32} (CIFAR-100) DNNs with $\relu$ activations (as opposed to the ImageNet-scale benchmarks for which Delphi has not been optimized). Similar to Delphi, we perform these computations with a bitwidth of $41$ in the LAN setting.

\begin{table}
    \centering
    \begin{tabular}{|c|c|c|c|c|c|}
        \hline
        \multirow{2}{*}{Benchmark} &  \multirow{2}{*}{Metric} & \multirow{2}{*}{Linear} & \multicolumn{3}{c|}{Non-linear} \\
        & & & Delphi & Ours & Improvement \\ \hline \hline
        \multirow{2}{*}{\textsf{MiniONN}} & Time & 10.7 & 30.2 & 1.0 & $30.2\times$ \\  & Comm. & 0.02 & 3.15 & 0.28 & $12.3\times$\\ \hline
        \multirow{2}{*}{\textsf{ResNet32}} & Time & 15.9 & 52.9 & 2.4 & $22.0\times$ \\  & Comm. & 0.07  & 5.51 & 0.59 & $9.3\times$\\ \hline
    \end{tabular}
    \caption{Performance comparison  with Delphi~\cite{delphi} for non-linear layers. Runtimes are in seconds and comm. in GiB.} \label{tab:exp-comp-prior-work}
\end{table}
 \begin{table}
    \centering
    \begin{tabular}{|c|c|c|c|c|}
        \hline
        \multirow{2}{*}{Benchmark} & \multirow{2}{*}{Linear} & \multicolumn{3}{c|}{Non-linear} \\
        & & Delphi & Ours & Improvement \\ \hline \hline
        \multirow{1}{*}{\textsf{MiniONN}} & < 0.1 & 3.97 & 0.32 & $12.40\times$ \\ \hline \multirow{1}{*}{\textsf{ResNet32}} & < 0.1 & 6.99 & 0.63 & $11.09\times$ \\ \hline \end{tabular}
    \caption{Performance comparison with Delphi~\cite{delphi} for online runtime in seconds.} \label{tab:exp-delphi-online}
\end{table}
 
In \tableref{exp-comp-prior-work}, we report the performance of Delphi for evaluating the linear and non-linear components of \textsf{MiniONN} and \textsf{ResNet32} separately, along with the performance of our protocols for the same non-linear computation\footnote{Our non-linear time includes the cost of correct truncation.}. The table shows that the time to evaluate non-linear layers is the bulk of the total time and our protocols
are $20\times$--$30\times$ faster in evaluating the non-linear layers.
Also note that we reduce the communication by $12\times$ on \textsf{MiniONN}, and require  {$9\times$} less communication on \textsf{ResNet32}.

Next, we compare the online time of our protocols with the online time of Delphi in \tableref{exp-delphi-online}. In the online phase, linear layers take negligible time and all the time is spent in evaluating the non-linear layers. Here, our protocols are an order of magnitude more efficient than Delphi.

\subsection{Evaluation on practical DNNs}
\label{sec:exp-eval-DNN}
With all our protocols and implementation optimizations in place, we demonstrate the scalability of \tool by efficiently running ImageNet-scale secure inference.
Table~\ref{tab:eval-DNN} shows that both our backends, $\toolot$ and $\toolhe$, are efficient enough to evaluate $\squeezenet$ in under a minute and scale to $\resnet$ and $\densenet$. 

In the LAN setting, for both $\squeezenet$ and $\densenet$, $\toolot$ performs better than $\toolhe$ by at least $20\%$ owing to the higher compute in the latter. 
However, the quadratic growth of communication with bitlength in the linear-layers of $\toolot$ can easily drown this difference if we go to higher bitlengths. Because $\resnet$, requires 37-bits (compared to 32 in $\squeezenet$ and $\densenet$) to preserve accuracy, $\toolhe$ outperforms $\toolot$ in both LAN and WAN settings. 
In general for WAN settings where communication becomes the major performance bottleneck, $\toolhe$ performs better than $\toolot$: $2\times$ for $\squeezenet$ and $\densenet$ and $4\times$ for $\resnet$.
Overall, with $\tool$, we could evaluate all the 3 benchmarks within 10 minutes on LAN and 20 minutes on WAN. 
Since \tool supports both $\toolot$ and $\toolhe$, one can choose a specific backend depending on the network statistics~\cite{hycc,cheapsmc} to get the best secure inference latency. To the best of our knowledge, no prior system provides this support for OT and HE-based secure DNN inference.

\begin{table}
    \centering
    \begin{tabular}{|c|c|c|c|c|}
        \hline
        Benchmark & Protocol & LAN & WAN & Comm \\ \hline \hline
        \multirow{2}{*}{\squeezenet} & $\toolot$ & 44.3 & 293.6 & 26.07 \\ \cline{2-5}
        & $\toolhe$ & 59.2 & 156.6 & 5.27 \\ \hline
        \multirow{2}{*}{\resnet} & $\toolot$ & 619.4 & 3611.6 & 370.84 \\ \cline{2-5}
        & $\toolhe$ & 545.8 & 936.0 & 32.43 \\ \hline
        \multirow{2}{*}{\densenet} & $\toolot$ & 371.4 & 2257.7 & 217.19 \\ \cline{2-5}
        & $\toolhe$ & 463.2 & 1124.7 & 35.56 \\ \hline
    \end{tabular}
    \caption{Performance of \tool on ImageNet-scale benchmarks. Runtimes are in seconds and comm. in GiB.} \label{tab:eval-DNN}
\end{table}
  \section{Conclusion and Future Work}
\label{sec:conclusion}
We have presented secure, efficient, and correct implementations of practical 2-party DNN  inference that outperform prior work~\cite{delphi} by an order of magnitude in both latency and scale.
We evaluate the first secure implementations of ImageNet scale inference, a task that previously
required 3PC protocols~\cite{cryptflow,quantizednn} (which provide weaker security guarantees) or leaking intermediate computations~\cite{nhe2}.
In the future, we would like to consider ImageNet scale secure {\em training}. 
Even though we can run inference on commodity machines, for training
we would need protocols that can leverage specialized compute and networking hardware.
Like all prior work on 2PC for secure DNN inference, \tool only considers semi-honest adversaries. 
In the future, we would like to consider malicious adversaries.
Another future direction is to help the server in hiding $F$ from the client when computing a classifier $F(x,w)$. Like~\cite{gazelle}, $\toolhe$ can hide some aspects of  $F$: the filter sizes, the strides, and whether a layer is convolutional or fully connected. 
Thus, $\toolhe$ hides more information than OT-based tools~\cite{minionn} but reveals more information than FHE-based tools~\cite{nhe1,cryptonets}. 
We are exploring approaches to hide more information about $F$ while incurring minimal overhead.
 
\bibliographystyle{ACM-Reference-Format}
\bibliography{main}


\begin{thebibliography}{66}


\ifx \showCODEN    \undefined \def \showCODEN     #1{\unskip}     \fi
\ifx \showDOI      \undefined \def \showDOI       #1{#1}\fi
\ifx \showISBNx    \undefined \def \showISBNx     #1{\unskip}     \fi
\ifx \showISBNxiii \undefined \def \showISBNxiii  #1{\unskip}     \fi
\ifx \showISSN     \undefined \def \showISSN      #1{\unskip}     \fi
\ifx \showLCCN     \undefined \def \showLCCN      #1{\unskip}     \fi
\ifx \shownote     \undefined \def \shownote      #1{#1}          \fi
\ifx \showarticletitle \undefined \def \showarticletitle #1{#1}   \fi
\ifx \showURL      \undefined \def \showURL       {\relax}        \fi
\providecommand\bibfield[2]{#2}
\providecommand\bibinfo[2]{#2}
\providecommand\natexlab[1]{#1}
\providecommand\showeprint[2][]{arXiv:#2}

\bibitem[\protect\citeauthoryear{??}{cry}{2020}]%
        {cryptflowimpl}
 \bibinfo{year}{2020}\natexlab{}.
\newblock \bibinfo{title}{{CrypTFlow: An End-to-end System for Secure
  TensorFlow Inference}}.
\newblock \bibinfo{howpublished}{\url{https://github.com/mpc-msri/EzPC}}.
\newblock


\bibitem[\protect\citeauthoryear{??}{git}{2020}]%
        {github-delphi}
 \bibinfo{year}{2020}\natexlab{}.
\newblock \bibinfo{title}{{Delphi: A Cryptographic Inference Service for Neural
  Networks}}.
\newblock \bibinfo{howpublished}{\url{https://github.com/mc2-project/delphi}}.
\newblock


\bibitem[\protect\citeauthoryear{Abadi, Agarwal, Barham, Brevdo, Chen, Citro,
  Corrado, Davis, Dean, Devin, Ghemawat, Goodfellow, Harp, Irving, Isard, Jia,
  J{\'{o}}zefowicz, Kaiser, Kudlur, Levenberg, Man{\'{e}}, Monga, Moore,
  Murray, Olah, Schuster, Shlens, Steiner, Sutskever, Talwar, Tucker,
  Vanhoucke, Vasudevan, Vi{\'{e}}gas, Vinyals, Warden, Wattenberg, Wicke, Yu,
  and Zheng}{Abadi et~al\mbox{.}}{2016}]%
        {tensorflow}
\bibfield{author}{\bibinfo{person}{Mart{\'{\i}}n Abadi},
  \bibinfo{person}{Ashish Agarwal}, \bibinfo{person}{Paul Barham},
  \bibinfo{person}{Eugene Brevdo}, \bibinfo{person}{Zhifeng Chen},
  \bibinfo{person}{Craig Citro}, \bibinfo{person}{Gregory~S. Corrado},
  \bibinfo{person}{Andy Davis}, \bibinfo{person}{Jeffrey Dean},
  \bibinfo{person}{Matthieu Devin}, \bibinfo{person}{Sanjay Ghemawat},
  \bibinfo{person}{Ian~J. Goodfellow}, \bibinfo{person}{Andrew Harp},
  \bibinfo{person}{Geoffrey Irving}, \bibinfo{person}{Michael Isard},
  \bibinfo{person}{Yangqing Jia}, \bibinfo{person}{Rafal J{\'{o}}zefowicz},
  \bibinfo{person}{Lukasz Kaiser}, \bibinfo{person}{Manjunath Kudlur},
  \bibinfo{person}{Josh Levenberg}, \bibinfo{person}{Dan Man{\'{e}}},
  \bibinfo{person}{Rajat Monga}, \bibinfo{person}{Sherry Moore},
  \bibinfo{person}{Derek~Gordon Murray}, \bibinfo{person}{Chris Olah},
  \bibinfo{person}{Mike Schuster}, \bibinfo{person}{Jonathon Shlens},
  \bibinfo{person}{Benoit Steiner}, \bibinfo{person}{Ilya Sutskever},
  \bibinfo{person}{Kunal Talwar}, \bibinfo{person}{Paul~A. Tucker},
  \bibinfo{person}{Vincent Vanhoucke}, \bibinfo{person}{Vijay Vasudevan},
  \bibinfo{person}{Fernanda~B. Vi{\'{e}}gas}, \bibinfo{person}{Oriol Vinyals},
  \bibinfo{person}{Pete Warden}, \bibinfo{person}{Martin Wattenberg},
  \bibinfo{person}{Martin Wicke}, \bibinfo{person}{Yuan Yu}, {and}
  \bibinfo{person}{Xiaoqiang Zheng}.} \bibinfo{year}{2016}\natexlab{}.
\newblock \showarticletitle{{T}ensor{F}low: {L}arge-{S}cale {M}achine
  {L}earning on {H}eterogeneous {D}istributed {S}ystems}.
\newblock \bibinfo{journal}{\emph{CoRR}}  \bibinfo{volume}{abs/1603.04467}
  (\bibinfo{year}{2016}).
\newblock
\urldef\tempurl%
\url{https://arxiv.org/abs/1603.04467}
\showURL{%
\tempurl}


\bibitem[\protect\citeauthoryear{Agrawal, Shamsabadi, Kusner, and
  Gasc{\'{o}}n}{Agrawal et~al\mbox{.}}{2019}]%
        {nitin}
\bibfield{author}{\bibinfo{person}{Nitin Agrawal}, \bibinfo{person}{Ali~Shahin
  Shamsabadi}, \bibinfo{person}{Matt~J. Kusner}, {and}
  \bibinfo{person}{Adri{\`{a}} Gasc{\'{o}}n}.} \bibinfo{year}{2019}\natexlab{}.
\newblock \showarticletitle{{QUOTIENT:} {T}wo-{P}arty {S}ecure {N}eural
  {N}etwork {T}raining and {P}rediction}. In
  \bibinfo{booktitle}{\emph{Proceedings of the 2019 {ACM} {SIGSAC} Conference
  on Computer and Communications Security, {CCS} 2019, London, UK, November
  11-15, 2019}}. \bibinfo{pages}{1231--1247}.
\newblock


\bibitem[\protect\citeauthoryear{Asharov, Lindell, Schneider, and
  Zohner}{Asharov et~al\mbox{.}}{2013}]%
        {ALSZ13}
\bibfield{author}{\bibinfo{person}{Gilad Asharov}, \bibinfo{person}{Yehuda
  Lindell}, \bibinfo{person}{Thomas Schneider}, {and} \bibinfo{person}{Michael
  Zohner}.} \bibinfo{year}{2013}\natexlab{}.
\newblock \showarticletitle{More efficient oblivious transfer and extensions
  for faster secure computation}. In \bibinfo{booktitle}{\emph{2013 {ACM}
  {SIGSAC} Conference on Computer and Communications Security, CCS'13, Berlin,
  Germany, November 4-8, 2013}},
  \bibfield{editor}{\bibinfo{person}{Ahmad{-}Reza Sadeghi},
  \bibinfo{person}{Virgil~D. Gligor}, {and} \bibinfo{person}{Moti Yung}}
  (Eds.). \bibinfo{publisher}{{ACM}}, \bibinfo{pages}{535--548}.
\newblock
\urldef\tempurl%
\url{https://doi.org/10.1145/2508859.2516738}
\showDOI{\tempurl}


\bibitem[\protect\citeauthoryear{Ball, Carmer, Malkin, Rosulek, and
  Schimanski}{Ball et~al\mbox{.}}{2019}]%
        {ball}
\bibfield{author}{\bibinfo{person}{Marshall Ball}, \bibinfo{person}{Brent
  Carmer}, \bibinfo{person}{Tal Malkin}, \bibinfo{person}{Mike Rosulek}, {and}
  \bibinfo{person}{Nichole Schimanski}.} \bibinfo{year}{2019}\natexlab{}.
\newblock \showarticletitle{Garbled Neural Networks are Practical}.
\newblock \bibinfo{journal}{\emph{{IACR} Cryptology ePrint Archive}}
  \bibinfo{volume}{2019} (\bibinfo{year}{2019}), \bibinfo{pages}{338}.
\newblock
\urldef\tempurl%
\url{https://eprint.iacr.org/2019/338}
\showURL{%
\tempurl}


\bibitem[\protect\citeauthoryear{Barak, Escudero, Dalskov, and Keller}{Barak
  et~al\mbox{.}}{2019}]%
        {quantizednn}
\bibfield{author}{\bibinfo{person}{Assi Barak}, \bibinfo{person}{Daniel
  Escudero}, \bibinfo{person}{Anders Dalskov}, {and} \bibinfo{person}{Marcel
  Keller}.} \bibinfo{year}{2019}\natexlab{}.
\newblock \bibinfo{title}{{S}ecure {E}valuation of {Q}uantized {N}eural
  {N}etworks}.
\newblock \bibinfo{howpublished}{Cryptology ePrint Archive, Report 2019/131}.
\newblock
\newblock
\shownote{\url{https://eprint.iacr.org/2019/131}.}


\bibitem[\protect\citeauthoryear{Beaver}{Beaver}{1991}]%
        {beaver}
\bibfield{author}{\bibinfo{person}{Donald Beaver}.}
  \bibinfo{year}{1991}\natexlab{}.
\newblock \showarticletitle{{E}fficient {M}ultiparty {P}rotocols {U}sing
  {C}ircuit {R}andomization}. In \bibinfo{booktitle}{\emph{Advances in
  Cryptology - {CRYPTO} '91, 11th Annual International Cryptology Conference,
  Santa Barbara, California, USA, August 11-15, 1991, Proceedings}}.
  \bibinfo{pages}{420--432}.
\newblock


\bibitem[\protect\citeauthoryear{Beaver}{Beaver}{1996}]%
        {beaverotextension}
\bibfield{author}{\bibinfo{person}{Donald Beaver}.}
  \bibinfo{year}{1996}\natexlab{}.
\newblock \showarticletitle{Correlated Pseudorandomness and the Complexity of
  Private Computations}. In \bibinfo{booktitle}{\emph{Proceedings of the
  Twenty-Eighth Annual {ACM} Symposium on the Theory of Computing,
  Philadelphia, Pennsylvania, USA, May 22-24, 1996}},
  \bibfield{editor}{\bibinfo{person}{Gary~L. Miller}} (Ed.).
  \bibinfo{publisher}{{ACM}}, \bibinfo{pages}{479--488}.
\newblock
\urldef\tempurl%
\url{https://doi.org/10.1145/237814.237996}
\showDOI{\tempurl}


\bibitem[\protect\citeauthoryear{Bellare, Hoang, Keelveedhi, and
  Rogaway}{Bellare et~al\mbox{.}}{2013}]%
        {BHKR13}
\bibfield{author}{\bibinfo{person}{Mihir Bellare}, \bibinfo{person}{Viet~Tung
  Hoang}, \bibinfo{person}{Sriram Keelveedhi}, {and} \bibinfo{person}{Phillip
  Rogaway}.} \bibinfo{year}{2013}\natexlab{}.
\newblock \showarticletitle{Efficient Garbling from a Fixed-Key Blockcipher}.
  In \bibinfo{booktitle}{\emph{2013 {IEEE} Symposium on Security and Privacy,
  {SP} 2013, Berkeley, CA, USA, May 19-22, 2013}}. \bibinfo{publisher}{{IEEE}
  Computer Society}, \bibinfo{pages}{478--492}.
\newblock
\urldef\tempurl%
\url{https://doi.org/10.1109/SP.2013.39}
\showDOI{\tempurl}


\bibitem[\protect\citeauthoryear{Blakley}{Blakley}{1979}]%
        {blakeley}
\bibfield{author}{\bibinfo{person}{G.~R. Blakley}.}
  \bibinfo{year}{1979}\natexlab{}.
\newblock \showarticletitle{Safeguarding cryptographic keys}. In
  \bibinfo{booktitle}{\emph{Managing Requirements Knowledge, International
  Workshop on}}. \bibinfo{publisher}{IEEE Computer Society},
  \bibinfo{address}{Los Alamitos, CA, USA}, \bibinfo{pages}{313}.
\newblock
\urldef\tempurl%
\url{https://doi.org/10.1109/AFIPS.1979.98}
\showDOI{\tempurl}


\bibitem[\protect\citeauthoryear{Boemer, Costache, Cammarota, and
  Wierzynski}{Boemer et~al\mbox{.}}{2019a}]%
        {nhe2}
\bibfield{author}{\bibinfo{person}{Fabian Boemer}, \bibinfo{person}{Anamaria
  Costache}, \bibinfo{person}{Rosario Cammarota}, {and}
  \bibinfo{person}{Casimir Wierzynski}.} \bibinfo{year}{2019}\natexlab{a}.
\newblock \showarticletitle{nGraph-HE2: {A} High-Throughput Framework for
  Neural Network Inference on Encrypted Data}. In
  \bibinfo{booktitle}{\emph{Proceedings of the 7th {ACM} Workshop on Encrypted
  Computing {\&} Applied Homomorphic Cryptography, WAHC@CCS 2019, London, UK,
  November 11-15, 2019}}, \bibfield{editor}{\bibinfo{person}{Michael Brenner},
  \bibinfo{person}{Tancr{\`{e}}de Lepoint}, {and} \bibinfo{person}{Kurt
  Rohloff}} (Eds.). \bibinfo{publisher}{{ACM}}, \bibinfo{pages}{45--56}.
\newblock
\urldef\tempurl%
\url{https://doi.org/10.1145/3338469.3358944}
\showDOI{\tempurl}


\bibitem[\protect\citeauthoryear{Boemer, Lao, Cammarota, and Wierzynski}{Boemer
  et~al\mbox{.}}{2019b}]%
        {nhe1}
\bibfield{author}{\bibinfo{person}{Fabian Boemer}, \bibinfo{person}{Yixing
  Lao}, \bibinfo{person}{Rosario Cammarota}, {and} \bibinfo{person}{Casimir
  Wierzynski}.} \bibinfo{year}{2019}\natexlab{b}.
\newblock \showarticletitle{{nGraph-HE}: {A} {G}raph {C}ompiler for {D}eep
  {L}earning on {H}omomorphically {E}ncrypted {D}ata}. In
  \bibinfo{booktitle}{\emph{Proceedings of the 16th {ACM} International
  Conference on Computing Frontiers, {CF} 2019, Alghero, Italy, April 30 - May
  2, 2019}}. \bibinfo{pages}{3--13}.
\newblock


\bibitem[\protect\citeauthoryear{Bost, Popa, Tu, and Goldwasser}{Bost
  et~al\mbox{.}}{2015}]%
        {bost}
\bibfield{author}{\bibinfo{person}{Raphael Bost}, \bibinfo{person}{Raluca~Ada
  Popa}, \bibinfo{person}{Stephen Tu}, {and} \bibinfo{person}{Shafi
  Goldwasser}.} \bibinfo{year}{2015}\natexlab{}.
\newblock \showarticletitle{Machine Learning Classification over Encrypted
  Data}. In \bibinfo{booktitle}{\emph{22nd Annual Network and Distributed
  System Security Symposium, {NDSS} 2015, San Diego, California, USA, February
  8-11, 2015}}. \bibinfo{publisher}{The Internet Society}.
\newblock
\urldef\tempurl%
\url{https://www.ndss-symposium.org/ndss2015/machine-learning-classification-over-encrypted-data}
\showURL{%
\tempurl}


\bibitem[\protect\citeauthoryear{Brakerski}{Brakerski}{2012}]%
        {brak12}
\bibfield{author}{\bibinfo{person}{Zvika Brakerski}.}
  \bibinfo{year}{2012}\natexlab{}.
\newblock \showarticletitle{Fully Homomorphic Encryption without Modulus
  Switching from Classical GapSVP}. In \bibinfo{booktitle}{\emph{Advances in
  Cryptology - {CRYPTO} 2012 - 32nd Annual Cryptology Conference, Santa
  Barbara, CA, USA, August 19-23, 2012. Proceedings}}
  \emph{(\bibinfo{series}{Lecture Notes in Computer Science},
  Vol.~\bibinfo{volume}{7417})}, \bibfield{editor}{\bibinfo{person}{Reihaneh
  Safavi{-}Naini} {and} \bibinfo{person}{Ran Canetti}} (Eds.).
  \bibinfo{publisher}{Springer}, \bibinfo{pages}{868--886}.
\newblock
\urldef\tempurl%
\url{https://doi.org/10.1007/978-3-642-32009-5\_50}
\showDOI{\tempurl}


\bibitem[\protect\citeauthoryear{Brassard, Cr{\'{e}}peau, and Robert}{Brassard
  et~al\mbox{.}}{1986}]%
        {bcr1outofnot}
\bibfield{author}{\bibinfo{person}{Gilles Brassard}, \bibinfo{person}{Claude
  Cr{\'{e}}peau}, {and} \bibinfo{person}{Jean{-}Marc Robert}.}
  \bibinfo{year}{1986}\natexlab{}.
\newblock \showarticletitle{All-or-Nothing Disclosure of Secrets}. In
  \bibinfo{booktitle}{\emph{Advances in Cryptology - {CRYPTO} '86, Santa
  Barbara, California, USA, 1986, Proceedings}} \emph{(\bibinfo{series}{Lecture
  Notes in Computer Science}, Vol.~\bibinfo{volume}{263})},
  \bibfield{editor}{\bibinfo{person}{Andrew~M. Odlyzko}} (Ed.).
  \bibinfo{publisher}{Springer}, \bibinfo{pages}{234--238}.
\newblock
\urldef\tempurl%
\url{https://doi.org/10.1007/3-540-47721-7\_17}
\showDOI{\tempurl}


\bibitem[\protect\citeauthoryear{B{\"{u}}scher, Demmler, Katzenbeisser,
  Kretzmer, and Schneider}{B{\"{u}}scher et~al\mbox{.}}{2018}]%
        {hycc}
\bibfield{author}{\bibinfo{person}{Niklas B{\"{u}}scher},
  \bibinfo{person}{Daniel Demmler}, \bibinfo{person}{Stefan Katzenbeisser},
  \bibinfo{person}{David Kretzmer}, {and} \bibinfo{person}{Thomas Schneider}.}
  \bibinfo{year}{2018}\natexlab{}.
\newblock \showarticletitle{HyCC: Compilation of Hybrid Protocols for Practical
  Secure Computation}. In \bibinfo{booktitle}{\emph{Proceedings of the 2018
  {ACM} {SIGSAC} Conference on Computer and Communications Security, {CCS}
  2018, Toronto, ON, Canada, October 15-19, 2018}},
  \bibfield{editor}{\bibinfo{person}{David Lie}, \bibinfo{person}{Mohammad
  Mannan}, \bibinfo{person}{Michael Backes}, {and} \bibinfo{person}{XiaoFeng
  Wang}} (Eds.). \bibinfo{publisher}{{ACM}}, \bibinfo{pages}{847--861}.
\newblock
\urldef\tempurl%
\url{https://doi.org/10.1145/3243734.3243786}
\showDOI{\tempurl}


\bibitem[\protect\citeauthoryear{Canetti}{Canetti}{2000}]%
        {canetti00}
\bibfield{author}{\bibinfo{person}{Ran Canetti}.}
  \bibinfo{year}{2000}\natexlab{}.
\newblock \showarticletitle{{S}ecurity and {C}omposition of {M}ultiparty
  {C}ryptographic {P}rotocols}.
\newblock \bibinfo{journal}{\emph{J. Cryptology}} \bibinfo{volume}{13},
  \bibinfo{number}{1} (\bibinfo{year}{2000}), \bibinfo{pages}{143--202}.
\newblock


\bibitem[\protect\citeauthoryear{Chandran, Gupta, Rastogi, Sharma, and
  Tripathi}{Chandran et~al\mbox{.}}{2019}]%
        {ezpc}
\bibfield{author}{\bibinfo{person}{Nishanth Chandran}, \bibinfo{person}{Divya
  Gupta}, \bibinfo{person}{Aseem Rastogi}, \bibinfo{person}{Rahul Sharma},
  {and} \bibinfo{person}{Shardul Tripathi}.} \bibinfo{year}{2019}\natexlab{}.
\newblock \showarticletitle{{EzPC}: {Programmable and Efficient Secure
  Two-Party Computation for Machine Learning}}. In
  \bibinfo{booktitle}{\emph{{IEEE} European Symposium on Security and Privacy,
  EuroS{\&}P 2019, Stockholm, Sweden, June 17-19, 2019}}.
  \bibinfo{pages}{496--511}.
\newblock


\bibitem[\protect\citeauthoryear{Chen, Pastro, and Raykova}{Chen
  et~al\mbox{.}}{2019}]%
        {mlspdz}
\bibfield{author}{\bibinfo{person}{Valerie Chen}, \bibinfo{person}{Valerio
  Pastro}, {and} \bibinfo{person}{Mariana Raykova}.}
  \bibinfo{year}{2019}\natexlab{}.
\newblock \showarticletitle{Secure Computation for Machine Learning With
  {SPDZ}}.
\newblock \bibinfo{journal}{\emph{CoRR}}  \bibinfo{volume}{abs/1901.00329}
  (\bibinfo{year}{2019}).
\newblock
\showeprint[arxiv]{1901.00329}
\urldef\tempurl%
\url{http://arxiv.org/abs/1901.00329}
\showURL{%
\tempurl}


\bibitem[\protect\citeauthoryear{Couteau}{Couteau}{2018}]%
        {Cou18}
\bibfield{author}{\bibinfo{person}{Geoffroy Couteau}.}
  \bibinfo{year}{2018}\natexlab{}.
\newblock \showarticletitle{New Protocols for Secure Equality Test and
  Comparison}. In \bibinfo{booktitle}{\emph{Applied Cryptography and Network
  Security - 16th International Conference, {ACNS} 2018, Leuven, Belgium, July
  2-4, 2018, Proceedings}} \emph{(\bibinfo{series}{Lecture Notes in Computer
  Science}, Vol.~\bibinfo{volume}{10892})},
  \bibfield{editor}{\bibinfo{person}{Bart Preneel} {and}
  \bibinfo{person}{Frederik Vercauteren}} (Eds.).
  \bibinfo{publisher}{Springer}, \bibinfo{pages}{303--320}.
\newblock
\urldef\tempurl%
\url{https://doi.org/10.1007/978-3-319-93387-0\_16}
\showDOI{\tempurl}


\bibitem[\protect\citeauthoryear{Dathathri, Saarikivi, Chen, Lauter, Maleki,
  Musuvathi, and Mytkowicz}{Dathathri et~al\mbox{.}}{2019}]%
        {chet}
\bibfield{author}{\bibinfo{person}{Roshan Dathathri}, \bibinfo{person}{Olli
  Saarikivi}, \bibinfo{person}{Hao Chen}, \bibinfo{person}{Kristin Lauter},
  \bibinfo{person}{Saeed Maleki}, \bibinfo{person}{Madan Musuvathi}, {and}
  \bibinfo{person}{Todd Mytkowicz}.} \bibinfo{year}{2019}\natexlab{}.
\newblock \showarticletitle{{CHET}: {An Optimizing Compiler for
  Fully-Homomorphic Neural-Network Inferencing}}. In
  \bibinfo{booktitle}{\emph{Proceedings of the 40th {ACM} {SIGPLAN} Conference
  on Programming Language Design and Implementation, {PLDI} 2019, Phoenix, AZ,
  USA, June 22-26, 2019}}. \bibinfo{pages}{142--156}.
\newblock


\bibitem[\protect\citeauthoryear{Demmler, Schneider, and Zohner}{Demmler
  et~al\mbox{.}}{2015}]%
        {aby}
\bibfield{author}{\bibinfo{person}{Daniel Demmler}, \bibinfo{person}{Thomas
  Schneider}, {and} \bibinfo{person}{Michael Zohner}.}
  \bibinfo{year}{2015}\natexlab{}.
\newblock \showarticletitle{{ABY} - {A Framework for Efficient Mixed-Protocol
  Secure Two-Party Computation}}. In \bibinfo{booktitle}{\emph{22nd Annual
  Network and Distributed System Security Symposium, {NDSS} 2015, San Diego,
  California, USA, February 8-11, 2015}}.
\newblock


\bibitem[\protect\citeauthoryear{Deng, Dong, Socher, Li, Li, and Li}{Deng
  et~al\mbox{.}}{2009}]%
        {imagenet}
\bibfield{author}{\bibinfo{person}{Jia Deng}, \bibinfo{person}{Wei Dong},
  \bibinfo{person}{Richard Socher}, \bibinfo{person}{Li{-}Jia Li},
  \bibinfo{person}{Kai Li}, {and} \bibinfo{person}{Fei{-}Fei Li}.}
  \bibinfo{year}{2009}\natexlab{}.
\newblock \showarticletitle{{ImageNet}: {A} large-scale hierarchical image
  database}. In \bibinfo{booktitle}{\emph{2009 {IEEE} Computer Society
  Conference on Computer Vision and Pattern Recognition {(CVPR} 2009), 20-25
  June 2009, Miami, Florida, {USA}}}. \bibinfo{pages}{248--255}.
\newblock


\bibitem[\protect\citeauthoryear{Dessouky, Koushanfar, Sadeghi, Schneider,
  Zeitouni, and Zohner}{Dessouky et~al\mbox{.}}{2017}]%
        {DKSSZZ17}
\bibfield{author}{\bibinfo{person}{Ghada Dessouky}, \bibinfo{person}{Farinaz
  Koushanfar}, \bibinfo{person}{Ahmad{-}Reza Sadeghi}, \bibinfo{person}{Thomas
  Schneider}, \bibinfo{person}{Shaza Zeitouni}, {and} \bibinfo{person}{Michael
  Zohner}.} \bibinfo{year}{2017}\natexlab{}.
\newblock \showarticletitle{Pushing the Communication Barrier in Secure
  Computation using Lookup Tables}. In \bibinfo{booktitle}{\emph{24th Annual
  Network and Distributed System Security Symposium, {NDSS} 2017, San Diego,
  California, USA, February 26 - March 1, 2017}}. \bibinfo{publisher}{The
  Internet Society}.
\newblock
\urldef\tempurl%
\url{https://www.ndss-symposium.org/ndss2017/ndss-2017-programme/pushing-communication-barrier-secure-computation-using-lookup-tables/}
\showURL{%
\tempurl}


\bibitem[\protect\citeauthoryear{Escudero, Ghosh, Keller, Rachuri, and
  Scholl}{Escudero et~al\mbox{.}}{2020}]%
        {edabits}
\bibfield{author}{\bibinfo{person}{Daniel Escudero}, \bibinfo{person}{Satrajit
  Ghosh}, \bibinfo{person}{Marcel Keller}, \bibinfo{person}{Rahul Rachuri},
  {and} \bibinfo{person}{Peter Scholl}.} \bibinfo{year}{2020}\natexlab{}.
\newblock \showarticletitle{Improved Primitives for MPC over Mixed
  Arithmetic-Binary Circuits}. In \bibinfo{booktitle}{\emph{Advances in
  Cryptology - {CRYPTO} 2020 - 40th Annual International Cryptology
  Conference}}.
\newblock


\bibitem[\protect\citeauthoryear{Even, Goldreich, and Lempel}{Even
  et~al\mbox{.}}{1985}]%
        {eglot}
\bibfield{author}{\bibinfo{person}{Shimon Even}, \bibinfo{person}{Oded
  Goldreich}, {and} \bibinfo{person}{Abraham Lempel}.}
  \bibinfo{year}{1985}\natexlab{}.
\newblock \showarticletitle{A Randomized Protocol for Signing Contracts}.
\newblock \bibinfo{journal}{\emph{Commun. {ACM}}} \bibinfo{volume}{28},
  \bibinfo{number}{6} (\bibinfo{year}{1985}), \bibinfo{pages}{637--647}.
\newblock
\urldef\tempurl%
\url{https://doi.org/10.1145/3812.3818}
\showDOI{\tempurl}


\bibitem[\protect\citeauthoryear{Fan and Vercauteren}{Fan and
  Vercauteren}{2012}]%
        {fv}
\bibfield{author}{\bibinfo{person}{Junfeng Fan} {and} \bibinfo{person}{Frederik
  Vercauteren}.} \bibinfo{year}{2012}\natexlab{}.
\newblock \bibinfo{title}{Somewhat Practical Fully Homomorphic Encryption}.
\newblock \bibinfo{howpublished}{Cryptology ePrint Archive, Report 2012/144}.
\newblock
\newblock
\shownote{\url{http://eprint.iacr.org/2012/144}.}


\bibitem[\protect\citeauthoryear{Garay, Schoenmakers, and Villegas}{Garay
  et~al\mbox{.}}{2007}]%
        {GSV07}
\bibfield{author}{\bibinfo{person}{Juan~A. Garay}, \bibinfo{person}{Berry
  Schoenmakers}, {and} \bibinfo{person}{Jos{\'{e}} Villegas}.}
  \bibinfo{year}{2007}\natexlab{}.
\newblock \showarticletitle{Practical and Secure Solutions for Integer
  Comparison}. In \bibinfo{booktitle}{\emph{Public Key Cryptography - {PKC}
  2007, 10th International Conference on Practice and Theory in Public-Key
  Cryptography, Beijing, China, April 16-20, 2007, Proceedings}}
  \emph{(\bibinfo{series}{Lecture Notes in Computer Science},
  Vol.~\bibinfo{volume}{4450})}, \bibfield{editor}{\bibinfo{person}{Tatsuaki
  Okamoto} {and} \bibinfo{person}{Xiaoyun Wang}} (Eds.).
  \bibinfo{publisher}{Springer}, \bibinfo{pages}{330--342}.
\newblock
\urldef\tempurl%
\url{https://doi.org/10.1007/978-3-540-71677-8\_22}
\showDOI{\tempurl}


\bibitem[\protect\citeauthoryear{Gentry}{Gentry}{2009}]%
        {gentryfhe}
\bibfield{author}{\bibinfo{person}{Craig Gentry}.}
  \bibinfo{year}{2009}\natexlab{}.
\newblock \showarticletitle{Fully homomorphic encryption using ideal lattices}.
  In \bibinfo{booktitle}{\emph{Proceedings of the 41st Annual {ACM} Symposium
  on Theory of Computing, {STOC} 2009, Bethesda, MD, USA, May 31 - June 2,
  2009}}, \bibfield{editor}{\bibinfo{person}{Michael Mitzenmacher}} (Ed.).
  \bibinfo{publisher}{{ACM}}, \bibinfo{pages}{169--178}.
\newblock
\urldef\tempurl%
\url{https://doi.org/10.1145/1536414.1536440}
\showDOI{\tempurl}


\bibitem[\protect\citeauthoryear{Gilad{-}Bachrach, Dowlin, Laine, Lauter,
  Naehrig, and Wernsing}{Gilad{-}Bachrach et~al\mbox{.}}{2016}]%
        {cryptonets}
\bibfield{author}{\bibinfo{person}{Ran Gilad{-}Bachrach},
  \bibinfo{person}{Nathan Dowlin}, \bibinfo{person}{Kim Laine},
  \bibinfo{person}{Kristin~E. Lauter}, \bibinfo{person}{Michael Naehrig}, {and}
  \bibinfo{person}{John Wernsing}.} \bibinfo{year}{2016}\natexlab{}.
\newblock \showarticletitle{{CryptoNets: Applying Neural Networks to Encrypted
  Data with High Throughput and Accuracy}}. In
  \bibinfo{booktitle}{\emph{Proceedings of the 33nd International Conference on
  Machine Learning, {ICML} 2016, New York City, NY, USA, June 19-24, 2016}}.
  \bibinfo{pages}{201--210}.
\newblock


\bibitem[\protect\citeauthoryear{Goldreich, Micali, and Wigderson}{Goldreich
  et~al\mbox{.}}{1987}]%
        {gmw}
\bibfield{author}{\bibinfo{person}{Oded Goldreich}, \bibinfo{person}{Silvio
  Micali}, {and} \bibinfo{person}{Avi Wigderson}.}
  \bibinfo{year}{1987}\natexlab{}.
\newblock \showarticletitle{{How to Play any Mental Game or A Completeness
  Theorem for Protocols with Honest Majority}}. In
  \bibinfo{booktitle}{\emph{Proceedings of the 19th Annual {ACM} Symposium on
  Theory of Computing, 1987, New York, New York, {USA}}}.
  \bibinfo{pages}{218--229}.
\newblock


\bibitem[\protect\citeauthoryear{Gueron}{Gueron}{2016}]%
        {aes-gcm-siv}
\bibfield{author}{\bibinfo{person}{Shay Gueron}.}
  \bibinfo{year}{2016}\natexlab{}.
\newblock \bibinfo{title}{{AES-GCM-SIV implementations (128 and 256 bit)}}.
\newblock
  \bibinfo{howpublished}{\url{https://github.com/Shay-Gueron/AES-GCM-SIV}}.
\newblock


\bibitem[\protect\citeauthoryear{Gueron, Lindell, Nof, and Pinkas}{Gueron
  et~al\mbox{.}}{2018}]%
        {fast-std-garbling}
\bibfield{author}{\bibinfo{person}{Shay Gueron}, \bibinfo{person}{Yehuda
  Lindell}, \bibinfo{person}{Ariel Nof}, {and} \bibinfo{person}{Benny Pinkas}.}
  \bibinfo{year}{2018}\natexlab{}.
\newblock \showarticletitle{Fast Garbling of Circuits Under Standard
  Assumptions}.
\newblock \bibinfo{journal}{\emph{J. Cryptol.}} \bibinfo{volume}{31},
  \bibinfo{number}{3} (\bibinfo{year}{2018}).
\newblock
\showISSN{0933-2790}
\urldef\tempurl%
\url{https://doi.org/10.1007/s00145-017-9271-y}
\showDOI{\tempurl}


\bibitem[\protect\citeauthoryear{Guo, Katz, Wang, and Yu}{Guo
  et~al\mbox{.}}{2020}]%
        {GKWY20}
\bibfield{author}{\bibinfo{person}{C. Guo}, \bibinfo{person}{J. Katz},
  \bibinfo{person}{X. Wang}, {and} \bibinfo{person}{Y. Yu}.}
  \bibinfo{year}{2020}\natexlab{}.
\newblock \showarticletitle{Efficient and Secure Multiparty Computation from
  Fixed-Key Block Ciphers}. In \bibinfo{booktitle}{\emph{2020 IEEE Symposium on
  Security and Privacy (SP)}}. \bibinfo{publisher}{IEEE Computer Society},
  \bibinfo{address}{Los Alamitos, CA, USA}, \bibinfo{pages}{247--263}.
\newblock
\showISSN{2375-1207}
\urldef\tempurl%
\url{https://doi.org/10.1109/SP.2020.00016}
\showDOI{\tempurl}


\bibitem[\protect\citeauthoryear{Hazay, Ishai, Marcedone, and
  Venkitasubramaniam}{Hazay et~al\mbox{.}}{2019}]%
        {leviosa}
\bibfield{author}{\bibinfo{person}{Carmit Hazay}, \bibinfo{person}{Yuval
  Ishai}, \bibinfo{person}{Antonio Marcedone}, {and}
  \bibinfo{person}{Muthuramakrishnan Venkitasubramaniam}.}
  \bibinfo{year}{2019}\natexlab{}.
\newblock \showarticletitle{{LevioSA: Lightweight Secure Arithmetic
  Computation}}. In \bibinfo{booktitle}{\emph{Proceedings of the 2019 {ACM}
  Conference on Computer and Communications Security, {CCS} 2019, London, UK,
  November 11-15, 2019}}. \bibinfo{pages}{327--344}.
\newblock


\bibitem[\protect\citeauthoryear{He, Zhang, Ren, and Sun}{He
  et~al\mbox{.}}{2016}]%
        {resnet}
\bibfield{author}{\bibinfo{person}{Kaiming He}, \bibinfo{person}{Xiangyu
  Zhang}, \bibinfo{person}{Shaoqing Ren}, {and} \bibinfo{person}{Jian Sun}.}
  \bibinfo{year}{2016}\natexlab{}.
\newblock \showarticletitle{{Deep Residual Learning for Image Recognition}}. In
  \bibinfo{booktitle}{\emph{2016 {IEEE} Conference on Computer Vision and
  Pattern Recognition, {CVPR} 2016, Las Vegas, NV, USA, June 27-30, 2016}}.
  \bibinfo{pages}{770--778}.
\newblock


\bibitem[\protect\citeauthoryear{Huang, Liu, van~der Maaten, and
  Weinberger}{Huang et~al\mbox{.}}{2017}]%
        {densenet}
\bibfield{author}{\bibinfo{person}{Gao Huang}, \bibinfo{person}{Zhuang Liu},
  \bibinfo{person}{Laurens van~der Maaten}, {and} \bibinfo{person}{Kilian~Q.
  Weinberger}.} \bibinfo{year}{2017}\natexlab{}.
\newblock \showarticletitle{{Densely Connected Convolutional Networks}}. In
  \bibinfo{booktitle}{\emph{2017 {IEEE} Conference on Computer Vision and
  Pattern Recognition, {CVPR} 2017, Honolulu, HI, USA, July 21-26, 2017}}.
  \bibinfo{pages}{2261--2269}.
\newblock


\bibitem[\protect\citeauthoryear{Hubara, Courbariaux, Soudry, El{-}Yaniv, and
  Bengio}{Hubara et~al\mbox{.}}{2016}]%
        {bnn}
\bibfield{author}{\bibinfo{person}{Itay Hubara}, \bibinfo{person}{Matthieu
  Courbariaux}, \bibinfo{person}{Daniel Soudry}, \bibinfo{person}{Ran
  El{-}Yaniv}, {and} \bibinfo{person}{Yoshua Bengio}.}
  \bibinfo{year}{2016}\natexlab{}.
\newblock \showarticletitle{Binarized Neural Networks}. In
  \bibinfo{booktitle}{\emph{Advances in Neural Information Processing Systems
  29: Annual Conference on Neural Information Processing Systems 2016, December
  5-10, 2016, Barcelona, Spain}}, \bibfield{editor}{\bibinfo{person}{Daniel~D.
  Lee}, \bibinfo{person}{Masashi Sugiyama}, \bibinfo{person}{Ulrike von
  Luxburg}, \bibinfo{person}{Isabelle Guyon}, {and} \bibinfo{person}{Roman
  Garnett}} (Eds.). \bibinfo{pages}{4107--4115}.
\newblock


\bibitem[\protect\citeauthoryear{Iandola, Moskewicz, Ashraf, Han, Dally, and
  Keutzer}{Iandola et~al\mbox{.}}{2016}]%
        {squeezenet}
\bibfield{author}{\bibinfo{person}{Forrest~N. Iandola},
  \bibinfo{person}{Matthew~W. Moskewicz}, \bibinfo{person}{Khalid Ashraf},
  \bibinfo{person}{Song Han}, \bibinfo{person}{William~J. Dally}, {and}
  \bibinfo{person}{Kurt Keutzer}.} \bibinfo{year}{2016}\natexlab{}.
\newblock \showarticletitle{SqueezeNet: AlexNet-level accuracy with 50x fewer
  parameters and {\textless}1MB model size}.
\newblock \bibinfo{journal}{\emph{CoRR}}  \bibinfo{volume}{abs/1602.07360}
  (\bibinfo{year}{2016}).
\newblock
\showeprint[arxiv]{1602.07360}
\urldef\tempurl%
\url{http://arxiv.org/abs/1602.07360}
\showURL{%
\tempurl}


\bibitem[\protect\citeauthoryear{Ishai, Kilian, Nissim, and Petrank}{Ishai
  et~al\mbox{.}}{2003}]%
        {iknp}
\bibfield{author}{\bibinfo{person}{Yuval Ishai}, \bibinfo{person}{Joe Kilian},
  \bibinfo{person}{Kobbi Nissim}, {and} \bibinfo{person}{Erez Petrank}.}
  \bibinfo{year}{2003}\natexlab{}.
\newblock \showarticletitle{Extending Oblivious Transfers Efficiently}. In
  \bibinfo{booktitle}{\emph{Advances in Cryptology - {CRYPTO} 2003, 23rd Annual
  International Cryptology Conference, Santa Barbara, California, USA, August
  17-21, 2003, Proceedings}} \emph{(\bibinfo{series}{Lecture Notes in Computer
  Science}, Vol.~\bibinfo{volume}{2729})},
  \bibfield{editor}{\bibinfo{person}{Dan Boneh}} (Ed.).
  \bibinfo{publisher}{Springer}, \bibinfo{pages}{145--161}.
\newblock
\urldef\tempurl%
\url{https://doi.org/10.1007/978-3-540-45146-4\_9}
\showDOI{\tempurl}


\bibitem[\protect\citeauthoryear{Jacob, Kligys, Chen, Zhu, Tang, Howard, Adam,
  and Kalenichenko}{Jacob et~al\mbox{.}}{2018}]%
        {tflite}
\bibfield{author}{\bibinfo{person}{Benoit Jacob}, \bibinfo{person}{Skirmantas
  Kligys}, \bibinfo{person}{Bo Chen}, \bibinfo{person}{Menglong Zhu},
  \bibinfo{person}{Matthew Tang}, \bibinfo{person}{Andrew~G. Howard},
  \bibinfo{person}{Hartwig Adam}, {and} \bibinfo{person}{Dmitry Kalenichenko}.}
  \bibinfo{year}{2018}\natexlab{}.
\newblock \showarticletitle{{Quantization and Training of Neural Networks for
  Efficient Integer-Arithmetic-Only Inference}}. In
  \bibinfo{booktitle}{\emph{2018 {IEEE} Conference on Computer Vision and
  Pattern Recognition, {CVPR} 2018, Salt Lake City, UT, USA, June 18-22,
  2018}}. \bibinfo{pages}{2704--2713}.
\newblock


\bibitem[\protect\citeauthoryear{Juvekar, Vaikuntanathan, and
  Chandrakasan}{Juvekar et~al\mbox{.}}{2018}]%
        {gazelle}
\bibfield{author}{\bibinfo{person}{Chiraag Juvekar}, \bibinfo{person}{Vinod
  Vaikuntanathan}, {and} \bibinfo{person}{Anantha Chandrakasan}.}
  \bibinfo{year}{2018}\natexlab{}.
\newblock \showarticletitle{{GAZELLE: A Low Latency Framework for Secure Neural
  Network Inference}}. In \bibinfo{booktitle}{\emph{27th {USENIX} Security
  Symposium, {USENIX} Security 2018, Baltimore, MD, USA, August 15-17, 2018}}.
  \bibinfo{pages}{1651--1669}.
\newblock


\bibitem[\protect\citeauthoryear{Kolesnikov and Kumaresan}{Kolesnikov and
  Kumaresan}{2013}]%
        {kkot}
\bibfield{author}{\bibinfo{person}{Vladimir Kolesnikov} {and}
  \bibinfo{person}{Ranjit Kumaresan}.} \bibinfo{year}{2013}\natexlab{}.
\newblock \showarticletitle{Improved {OT} Extension for Transferring Short
  Secrets}. In \bibinfo{booktitle}{\emph{Advances in Cryptology - {CRYPTO} 2013
  - 33rd Annual Cryptology Conference, Santa Barbara, CA, USA, August 18-22,
  2013. Proceedings, Part {II}}} \emph{(\bibinfo{series}{Lecture Notes in
  Computer Science}, Vol.~\bibinfo{volume}{8043})},
  \bibfield{editor}{\bibinfo{person}{Ran Canetti} {and}
  \bibinfo{person}{Juan~A. Garay}} (Eds.). \bibinfo{publisher}{Springer},
  \bibinfo{pages}{54--70}.
\newblock
\urldef\tempurl%
\url{https://doi.org/10.1007/978-3-642-40084-1\_4}
\showDOI{\tempurl}


\bibitem[\protect\citeauthoryear{Kumar, Rathee, Chandran, Gupta, Rastogi, and
  Sharma}{Kumar et~al\mbox{.}}{2020}]%
        {cryptflow}
\bibfield{author}{\bibinfo{person}{Nishant Kumar}, \bibinfo{person}{Mayank
  Rathee}, \bibinfo{person}{Nishanth Chandran}, \bibinfo{person}{Divya Gupta},
  \bibinfo{person}{Aseem Rastogi}, {and} \bibinfo{person}{Rahul Sharma}.}
  \bibinfo{year}{2020}\natexlab{}.
\newblock \showarticletitle{{CrypTFlow: Secure TensorFlow Inference}}. In
  \bibinfo{booktitle}{\emph{2020 {IEEE} Symposium on Security and Privacy,
  {S\&P} 2020, San Francisco, CA, USA, May 18-20, 2020}}.
  \bibinfo{pages}{1521--1538}.
\newblock


\bibitem[\protect\citeauthoryear{Laine}{Laine}{2017}]%
        {sealmanual}
\bibfield{author}{\bibinfo{person}{Kim Laine}.}
  \bibinfo{year}{2017}\natexlab{}.
\newblock \showarticletitle{{Simple Encrypted Arithmetic Library 2.3.1}}.
  \bibinfo{howpublished}{\url{https://www.microsoft.com/en-us/research/uploads/prod/2017/11/sealmanual-2-3-1.pdf}}.
\newblock


\bibitem[\protect\citeauthoryear{Lindell}{Lindell}{2016}]%
        {lindellsim}
\bibfield{author}{\bibinfo{person}{Yehuda Lindell}.}
  \bibinfo{year}{2016}\natexlab{}.
\newblock \bibinfo{title}{How To Simulate It - A Tutorial on the Simulation
  Proof Technique}.
\newblock \bibinfo{howpublished}{Cryptology ePrint Archive, Report 2016/046}.
\newblock
\newblock
\shownote{\url{https://eprint.iacr.org/2016/046}.}


\bibitem[\protect\citeauthoryear{Liu, Juuti, Lu, and Asokan}{Liu
  et~al\mbox{.}}{2017}]%
        {minionn}
\bibfield{author}{\bibinfo{person}{Jian Liu}, \bibinfo{person}{Mika Juuti},
  \bibinfo{person}{Yao Lu}, {and} \bibinfo{person}{N. Asokan}.}
  \bibinfo{year}{2017}\natexlab{}.
\newblock \showarticletitle{{Oblivious Neural Network Predictions via MiniONN
  Transformations}}. In \bibinfo{booktitle}{\emph{Proceedings of the 2017 {ACM}
  {SIGSAC} Conference on Computer and Communications Security, {CCS} 2017,
  Dallas, TX, USA, October 30 - November 03, 2017}}. \bibinfo{pages}{619--631}.
\newblock


\bibitem[\protect\citeauthoryear{Mishra, Lehmkuhl, Srinivasan, Zheng, and
  Popa}{Mishra et~al\mbox{.}}{2020}]%
        {delphi}
\bibfield{author}{\bibinfo{person}{Pratyush Mishra}, \bibinfo{person}{Ryan
  Lehmkuhl}, \bibinfo{person}{Akshayaram Srinivasan}, \bibinfo{person}{Wenting
  Zheng}, {and} \bibinfo{person}{Raluca~Ada Popa}.}
  \bibinfo{year}{2020}\natexlab{}.
\newblock \showarticletitle{{Delphi: A Cryptographic Inference Service for
  Neural Networks}}. In \bibinfo{booktitle}{\emph{29th {USENIX} Security
  Symposium, {USENIX} Security 20}}. \bibinfo{address}{Boston, MA}.
\newblock


\bibitem[\protect\citeauthoryear{Mohassel and Rindal}{Mohassel and
  Rindal}{2018}]%
        {aby3}
\bibfield{author}{\bibinfo{person}{Payman Mohassel} {and}
  \bibinfo{person}{Peter Rindal}.} \bibinfo{year}{2018}\natexlab{}.
\newblock \showarticletitle{{ABY}\({}^{\mbox{3}}\): {A Mixed Protocol Framework
  for Machine Learning}}. In \bibinfo{booktitle}{\emph{Proceedings of the 2018
  {ACM} {SIGSAC} Conference on Computer and Communications Security, {CCS}
  2018, Toronto, ON, Canada, October 15-19, 2018}}. \bibinfo{pages}{35--52}.
\newblock


\bibitem[\protect\citeauthoryear{Mohassel and Zhang}{Mohassel and
  Zhang}{2017}]%
        {secureml}
\bibfield{author}{\bibinfo{person}{Payman Mohassel} {and}
  \bibinfo{person}{Yupeng Zhang}.} \bibinfo{year}{2017}\natexlab{}.
\newblock \showarticletitle{{SecureML: A System for Scalable Privacy-Preserving
  Machine Learning}}. In \bibinfo{booktitle}{\emph{2017 {IEEE} Symposium on
  Security and Privacy, {S\&P} 2017, San Jose, CA, USA, May 22-26, 2017}}.
  \bibinfo{pages}{19--38}.
\newblock


\bibitem[\protect\citeauthoryear{Nagel, van Baalen, Blankevoort, and
  Welling}{Nagel et~al\mbox{.}}{2019}]%
        {dfq}
\bibfield{author}{\bibinfo{person}{Markus Nagel}, \bibinfo{person}{Mart van
  Baalen}, \bibinfo{person}{Tijmen Blankevoort}, {and} \bibinfo{person}{Max
  Welling}.} \bibinfo{year}{2019}\natexlab{}.
\newblock \showarticletitle{Data-Free Quantization Through Weight Equalization
  and Bias Correction}. In \bibinfo{booktitle}{\emph{2019 {IEEE/CVF}
  International Conference on Computer Vision, {ICCV} 2019, Seoul, Korea
  (South), October 27 - November 2, 2019}}. \bibinfo{publisher}{{IEEE}},
  \bibinfo{pages}{1325--1334}.
\newblock


\bibitem[\protect\citeauthoryear{Pattuk, Kantarcioglu, Ulusoy, and
  Malin}{Pattuk et~al\mbox{.}}{2016}]%
        {cheapsmc}
\bibfield{author}{\bibinfo{person}{Erman Pattuk}, \bibinfo{person}{Murat
  Kantarcioglu}, \bibinfo{person}{Huseyin Ulusoy}, {and}
  \bibinfo{person}{Bradley~A. Malin}.} \bibinfo{year}{2016}\natexlab{}.
\newblock \showarticletitle{CheapSMC: {A} Framework to Minimize Secure
  Multiparty Computation Cost in the Cloud}. In \bibinfo{booktitle}{\emph{Data
  and Applications Security and Privacy {XXX} - 30th Annual {IFIP} {WG} 11.3
  Conference, DBSec 2016, Trento, Italy, July 18-20, 2016. Proceedings}}
  \emph{(\bibinfo{series}{Lecture Notes in Computer Science},
  Vol.~\bibinfo{volume}{9766})}, \bibfield{editor}{\bibinfo{person}{Silvio
  Ranise} {and} \bibinfo{person}{Vipin Swarup}} (Eds.).
  \bibinfo{publisher}{Springer}, \bibinfo{pages}{285--294}.
\newblock


\bibitem[\protect\citeauthoryear{Rabin}{Rabin}{1981}]%
        {rabinot}
\bibfield{author}{\bibinfo{person}{Michael~O. Rabin}.}
  \bibinfo{year}{1981}\natexlab{}.
\newblock \bibinfo{title}{How to exchange secrets with oblivious transfer}.
\newblock \bibinfo{howpublished}{Technical Report TR-81, Aiken Computation Lab,
  Harvard University}.
\newblock
\newblock
\shownote{\url{https://eprint.iacr.org/2005/187.pdf}.}


\bibitem[\protect\citeauthoryear{Rathee, Schneider, and Shukla}{Rathee
  et~al\mbox{.}}{2019}]%
        {RSS19}
\bibfield{author}{\bibinfo{person}{Deevashwer Rathee}, \bibinfo{person}{Thomas
  Schneider}, {and} \bibinfo{person}{K.~K. Shukla}.}
  \bibinfo{year}{2019}\natexlab{}.
\newblock \showarticletitle{Improved Multiplication Triple Generation over
  Rings via RLWE-Based {AHE}}. In \bibinfo{booktitle}{\emph{Cryptology and
  Network Security - 18th International Conference, {CANS} 2019, Fuzhou, China,
  October 25-27, 2019, Proceedings}} \emph{(\bibinfo{series}{Lecture Notes in
  Computer Science}, Vol.~\bibinfo{volume}{11829})},
  \bibfield{editor}{\bibinfo{person}{Yi~Mu}, \bibinfo{person}{Robert~H. Deng},
  {and} \bibinfo{person}{Xinyi Huang}} (Eds.). \bibinfo{publisher}{Springer},
  \bibinfo{pages}{347--359}.
\newblock
\urldef\tempurl%
\url{https://doi.org/10.1007/978-3-030-31578-8\_19}
\showDOI{\tempurl}


\bibitem[\protect\citeauthoryear{Riazi, Samragh, Chen, Laine, Lauter, and
  Koushanfar}{Riazi et~al\mbox{.}}{2019}]%
        {xonn}
\bibfield{author}{\bibinfo{person}{M.~Sadegh Riazi}, \bibinfo{person}{Mohammad
  Samragh}, \bibinfo{person}{Hao Chen}, \bibinfo{person}{Kim Laine},
  \bibinfo{person}{Kristin~E. Lauter}, {and} \bibinfo{person}{Farinaz
  Koushanfar}.} \bibinfo{year}{2019}\natexlab{}.
\newblock \showarticletitle{{XONN: XNOR-based Oblivious Deep Neural Network
  Inference}}. In \bibinfo{booktitle}{\emph{28th {USENIX} Security Symposium,
  {USENIX} Security 2019, Santa Clara, CA, USA, August 14-16, 2019}}.
  \bibinfo{pages}{1501--1518}.
\newblock


\bibitem[\protect\citeauthoryear{Riazi, Weinert, Tkachenko, Songhori,
  Schneider, and Koushanfar}{Riazi et~al\mbox{.}}{2018}]%
        {chameleon}
\bibfield{author}{\bibinfo{person}{M.~Sadegh Riazi}, \bibinfo{person}{Christian
  Weinert}, \bibinfo{person}{Oleksandr Tkachenko}, \bibinfo{person}{Ebrahim~M.
  Songhori}, \bibinfo{person}{Thomas Schneider}, {and} \bibinfo{person}{Farinaz
  Koushanfar}.} \bibinfo{year}{2018}\natexlab{}.
\newblock \showarticletitle{{Chameleon: A Hybrid Secure Computation Framework
  for Machine Learning Applications}}. In \bibinfo{booktitle}{\emph{Proceedings
  of the 2018 on Asia Conference on Computer and Communications Security,
  AsiaCCS 2018, Incheon, Republic of Korea, June 04-08, 2018}}.
  \bibinfo{pages}{707--721}.
\newblock
\urldef\tempurl%
\url{https://doi.org/10.1145/3196494.3196522}
\showDOI{\tempurl}


\bibitem[\protect\citeauthoryear{Rouhani, Riazi, and Koushanfar}{Rouhani
  et~al\mbox{.}}{2018}]%
        {deepsecure}
\bibfield{author}{\bibinfo{person}{Bita~Darvish Rouhani},
  \bibinfo{person}{M.~Sadegh Riazi}, {and} \bibinfo{person}{Farinaz
  Koushanfar}.} \bibinfo{year}{2018}\natexlab{}.
\newblock \showarticletitle{Deepsecure: scalable provably-secure deep
  learning}. In \bibinfo{booktitle}{\emph{Proceedings of the 55th Annual Design
  Automation Conference, {DAC} 2018, San Francisco, CA, USA, June 24-29,
  2018}}. \bibinfo{publisher}{{ACM}}, \bibinfo{pages}{2:1--2:6}.
\newblock


\bibitem[\protect\citeauthoryear{SEAL}{SEAL}{2019}]%
        {sealcrypto}
SEAL \bibinfo{year}{2019}\natexlab{}.
\newblock \bibinfo{title}{{M}icrosoft {SEAL} (release 3.3)}.
\newblock \bibinfo{howpublished}{\url{https://github.com/Microsoft/SEAL}}.
\newblock
\newblock
\shownote{Microsoft Research, Redmond, WA.}


\bibitem[\protect\citeauthoryear{Shamir}{Shamir}{1979}]%
        {shamir}
\bibfield{author}{\bibinfo{person}{Adi Shamir}.}
  \bibinfo{year}{1979}\natexlab{}.
\newblock \showarticletitle{How to Share a Secret}.
\newblock \bibinfo{journal}{\emph{Commun. {ACM}}} \bibinfo{volume}{22},
  \bibinfo{number}{11} (\bibinfo{year}{1979}), \bibinfo{pages}{612--613}.
\newblock
\urldef\tempurl%
\url{https://doi.org/10.1145/359168.359176}
\showDOI{\tempurl}


\bibitem[\protect\citeauthoryear{Smart and Vercauteren}{Smart and
  Vercauteren}{2011}]%
        {fully-hom-simd-operations}
\bibfield{author}{\bibinfo{person}{N.P. Smart} {and} \bibinfo{person}{F.
  Vercauteren}.} \bibinfo{year}{2011}\natexlab{}.
\newblock \bibinfo{title}{Fully Homomorphic {SIMD} Operations}.
\newblock \bibinfo{howpublished}{Cryptology ePrint Archive, Report 2011/133}.
\newblock
\newblock
\shownote{\url{http://eprint.iacr.org/2011/133}.}


\bibitem[\protect\citeauthoryear{Wagh, Gupta, and Chandran}{Wagh
  et~al\mbox{.}}{2019}]%
        {securenn}
\bibfield{author}{\bibinfo{person}{Sameer Wagh}, \bibinfo{person}{Divya Gupta},
  {and} \bibinfo{person}{Nishanth Chandran}.} \bibinfo{year}{2019}\natexlab{}.
\newblock \showarticletitle{{SecureNN: 3-Party Secure Computation for Neural
  Network Training}}.
\newblock \bibinfo{journal}{\emph{PoPETs}} \bibinfo{volume}{2019},
  \bibinfo{number}{3} (\bibinfo{year}{2019}), \bibinfo{pages}{26--49}.
\newblock


\bibitem[\protect\citeauthoryear{Wang, Malozemoff, and Katz}{Wang
  et~al\mbox{.}}{2016}]%
        {emp-toolkit}
\bibfield{author}{\bibinfo{person}{Xiao Wang}, \bibinfo{person}{Alex~J.
  Malozemoff}, {and} \bibinfo{person}{Jonathan Katz}.}
  \bibinfo{year}{2016}\natexlab{}.
\newblock \bibinfo{title}{{EMP-toolkit: Efficient MultiParty computation
  toolkit}}.
\newblock \bibinfo{howpublished}{\url{https://github.com/emp-toolkit}}.
\newblock


\bibitem[\protect\citeauthoryear{Yao}{Yao}{1986}]%
        {Yao}
\bibfield{author}{\bibinfo{person}{Andrew~Chi{-}Chih Yao}.}
  \bibinfo{year}{1986}\natexlab{}.
\newblock \showarticletitle{{How to Generate and Exchange Secrets (Extended
  Abstract)}}. In \bibinfo{booktitle}{\emph{27th Annual Symposium on
  Foundations of Computer Science, Toronto, Canada, 27-29 October 1986}}.
  \bibinfo{publisher}{{IEEE} Computer Society}, \bibinfo{pages}{162--167}.
\newblock
\urldef\tempurl%
\url{https://doi.org/10.1109/SFCS.1986.25}
\showDOI{\tempurl}


\bibitem[\protect\citeauthoryear{Zheng, Popa, Gonzalez, and Stoica}{Zheng
  et~al\mbox{.}}{2019}]%
        {helen}
\bibfield{author}{\bibinfo{person}{Wenting Zheng}, \bibinfo{person}{Raluca~Ada
  Popa}, \bibinfo{person}{Joseph~E. Gonzalez}, {and} \bibinfo{person}{Ion
  Stoica}.} \bibinfo{year}{2019}\natexlab{}.
\newblock \showarticletitle{{Helen: Maliciously Secure Coopetitive Learning for
  Linear Models}}. In \bibinfo{booktitle}{\emph{2019 {IEEE} Symposium on
  Security and Privacy, {S\&P} 2019, San Francisco, CA, USA, May 19-23, 2019}}.
  \bibinfo{pages}{724--738}.
\newblock


\bibitem[\protect\citeauthoryear{Zhu, Iordanescu, Karmanov, and Zawaideh}{Zhu
  et~al\mbox{.}}{2018}]%
        {chestxray2018}
\bibfield{author}{\bibinfo{person}{Xiaoyong Zhu}, \bibinfo{person}{George
  Iordanescu}, \bibinfo{person}{Ilia Karmanov}, {and} \bibinfo{person}{Mazen
  Zawaideh}.} \bibinfo{year}{2018}\natexlab{}.
\newblock
\newblock
\urldef\tempurl%
\url{https://blogs.technet.microsoft.com/machinelearning/2018/03/07/using-microsoft-ai-to-build-a-lung-disease-prediction-model-using-chest-x-ray-images/}
\showURL{%
\tempurl}


\end{thebibliography}

\appendix
\section{Supporting Protocols}
\label{app"supp-protocols}

Here, we describe supporting protocols that our main protocols rely on.

\subsection{Protocol for regular $\fand$}
\label{app:regular-bit-triple}

Regular $\fand$ can be realized using bit-triples~\cite{beaver}, which are of the form $(\share{d}{B}{b}, \share{e}{B}{b}, \share{f}{B}{b})$, where $b \in \zo$ and $d \wedge e = f$.
Using an instance of $\kkot{16}{2}$, the parties can generate two bit-triples~\cite{DKSSZZ17}.
We describe this protocol for generating the first triple, and from there, it will be easy to see how to also get the second triple using the same OT instance.
The parties start by sampling random shares $\share{d}{B}{b}, \share{e}{B}{b} \getsr \zo$ for $b \in \zo$.
\bob sets the first two bits of its input to $\kkot{16}{2}$ as $\share{d}{B}{1} \concat \share{e}{B}{1}$, while the other two bits are used for the second triple.
\alice samples a random bit $r$ and sets its input messages to $\kkot{16}{2}$ as follows: for the $i$-th message, where $i \in \zo^4$, \alice uses the first two bits $i_1 \concat i_2$ of $i$ to compute $r \xor ((i_1 \xor \share{d}{B}{0}) \wedge (i_2 \xor \share{e}{B}{0}))$, and sets it as the first bit of the message, while reserving the second bit for the other triple.
Finally, \alice sets $\share{f}{B}{0} = r$, and \bob sets the first bit of the output of $\kkot{16}{2}$ as $\share{f}{B}{1}$.
It is easy to see correctness by noting that $\share{f}{B}{1} = \share{f}{B}{0} \xor (d \wedge e)$, and since $\share{f}{B}{0}$ is uniformly random, security follows directly in the $\kkot{16}{2}$-hybrid.

The communication of this protocol is the same as that of $\kkot{16}{2}$, which is $2 \secpar + 16 \cdot 2$ bits.
Since we generate two bit-triples using this protocol, the amortized cost per triple is $\secpar + 16$ bits, which is $144$ for $\secpar = 128$.

\subsection{Protocol for correlated $\fand$}
\label{app:correlated-bit-triple}
Correlated triples are two sets of bit triples $(\share{d}{B}{b}$, $\share{e}{B}{b}$, $\share{f}{B}{b})$ and $(\share{{d'}}{B}{b}$, $\share{e'}{B}{b}$, $\share{{f'}}{B}{b})$, for $b \in \zo$, such that $e = e'$, $d \wedge e = f$, and $d' \wedge e' = f'$.
The protocol from \appendixref{regular-bit-triple} required a $\kkot{16}{2}$ invocation to generate two regular triples, where the $4$ bits of \bob's input were its shares of $d, e, d',$ and $e'$.
However, when generating correlated triples, we can instead use an instance of $\kkot{8}{2}$ because $e = e'$, and thus, 3 bits suffice to represent \bob's input.
Correctness and security follow in a similar way as in the case of regular $\fand$ (see \appendixref{regular-bit-triple}).

The communication of this protocol is equal to that of $\kkot{8}{2}$, which costs $2 \secpar + 8 \cdot 2$ bits.
Thus, we get an amortized communication of $\secpar + 8$ bits per correlated triple.

\subsection{Protocol for Multiplexer}
\label{app:mux}
We describe our protocol for realizing $\fmux{\ringsz}$ in \algoref{mux}.

\begin{algorithm}[t]
\caption{Multiplexer,  $\protmux{\ringsz}$:}
\label{algo:mux}
\begin{algorithmic}[1]
\Require For $b \in \zo$,  $\party{b}$ holds $\share{a}{\ringsz}{b}$ and $\share{c}{B}{b}$.
\Ensure For $b \in \zo$, $\party{b}$ learns $\share{z}{\ringsz}{b}$ s.t. $z = a$ if $c = 1$, else $z = 0$.

\vspace{0.2cm}

\State For $b \in \zo$, $\party{b}$ picks $r_b \getsr \bbZ_{\ringsz}$. 

\State \alice sets $s_0, s_1$ as follows: If $\share{c}{B}{0} = 0$, $(s_0, s_1) = (-r_0, -r_0+\share{a}{\ringsz}{0})$. Else, $(s_0, s_1) = (-r_0+\share{a}{\ringsz}{0}, -r_0)$.

\State \alice \& \bob invoke an instance of $\kkot{2}{\etax}$ where \alice is the sender with inputs $(s_0, s_1)$ and \bob is the receiver with input $\share{c}{B}{1}$. Let \bob's output be $x_1$.

\State \bob sets $t_0, t_1$ as follows: If $\share{c}{B}{1} = 0$, $(t_0, t_1) = (-r_1, -r_1+\share{a}{\ringsz}{1})$. Else, $(t_0, t_1) = (-r_1+\share{a}{\ringsz}{1}, -r_1)$.

\State \alice \& \bob invoke an instance of $\kkot{2}{\etax}$ where \bob is the sender with inputs $(t_0, t_1)$ and \alice is the receiver with input $\share{c}{B}{0}$. Let \alice's output be $x_0$.

\State For $b \in \zo$, $\party{b}$ outputs $\share{z}{\ringsz}{b} = r_b + x_b$.

\end{algorithmic}
\end{algorithm}
 
First we argue correctness.  Let $c =\reconst{B}{\share{c}{B}{0}, \share{c}{B}{1}} = \share{c}{B}{0} \xor \share{c}{B}{1}$. By correctness of $\kkot{2}{\etax}$, $x_1 = -r_0 + c\cdot \share{a}{\ringsz}{0}$. Similarly, $x_0 = -r_1 + c \cdot \share{a}{\ringsz}{1}$. Hence, $\reconst{\ringsz}{\share{z}{\ringsz}{0}, \share{z}{\ringsz}{1}} = z_0 + z_1 = c\cdot a$.
Security trivially follows in $\kkot{2}{\etax}$-hybrid. Communication complexity is $2(\secpar+2\etax)$.

\subsection{Protocol for B2A}
\label{app:BtoA}
We describe our protocol for realizing $\fBtoA{\ringsz}$ formally in \algoref{B2A}. For correctness, we need to show that $d = \reconst{\intx}{\share{d}{\ringsz}{0}, \share{d}{\ringsz}{1}}$ $= \share{c}{B}{0} + \share{c}{B}{1} - 2\share{c}{B}{0}\share{c}{B}{1}$. By correctness of $\iknpcot{\etax}$, $y_1 = x + \share{c}{B}{0}\share{c}{B}{1}$. Using this, $\share{d}{\ringsz}{0} = \share{c}{B}{0} + 2x$ and $\share{d}{\ringsz}{1} = \share{c}{B}{1} - 2x - 2\share{c}{B}{0}\share{c}{B}{1}$. Security follows from the security of $\iknpcot{\etax}$ and communication required is $\secpar + \etax$ bits.

\begin{algorithm}[t]
\caption{Boolean to Arithmetic, $\protbtoa{\ringsz}$:}
\label{algo:B2A}
\begin{algorithmic}[1]

\Require $\party{0}, \party{1}$ hold $\share{c}{B}{0}$ and $\share{c}{B}{1}$, respectively, where $c \in \zo$.
\Ensure $\party{0}, \party{1}$ learn $\share{d}{\ringsz}{0}$ and $\share{d}{\ringsz}{1}$, respectively, s.t. $d=c$.

\vspace{0.2cm}

\State \alice \& \bob invoke an instance of $\iknpcot{\etax}$ where \alice is the sender with correlation function $f(x)=x+\share{c}{B}{0}$ and \bob is the receiver with input $\share{c}{B}{1}$. Party \alice learns $x$ and sets $y_0 = \ringsz - x$ and \bob learns $y_1$. 

\State For $b \in \zo$, $\party{b}$ computes $\share{d}{\ringsz}{b}=\share{c}{B}{b}-2 \cdot y_b$. 

\end{algorithmic}
\end{algorithm}
 
  \begin{algorithm}[t]
\caption{$\ell$-bit  integer $\relu$, $\protreluint{\ell}$:}
\label{algo:relu-int}
\begin{algorithmic}[1]

 \Require $\party{0}, \party{1}$ hold $\share{a}{\intx}{0}$ and $\share{a}{\intx}{1}$, respectively. 
\Ensure $\party{0}, \party{1}$ get $\share{\relu(a)}{\intx}{0}$ and $\share{\relu(a)}{\intx}{1}$.

\vspace{0.2cm}
\State For $b\in \zo$, $\party{b}$ invokes $\fdreluint{\ell}$ with input $\share{a}{\intx}{b}$  to learn output $\share{y}{B}{b}$.

\State For $b\in \zo$, $\party{b}$ invokes $\fmux{\intx}$ with inputs $\share{a}{\intx}{b}$ and $\share{y}{B}{b}$ to learn $\share{z}{\intx}{b}$ and sets $\share{\relu(a)}{\intx}{b} = \share{z}{\intx}{b}$.

\end{algorithmic}
\end{algorithm}
 
\section{Protocol for $\relu$}
\label{app:relu-protocol}
We describe our $\relu$ protocol for the case where the input and output shares are over $\bbZ_{\intx}$ in \algoref{relu-int}, and note that the case of $\bbZ_{\ringsz}$ follows similarly.
It is easy to see that the correctness and security of the protocol follow in the $(\fdreluint{\ell}, \fmux{\intx})-$hybrid. \\

\noindent{\em Communication complexity.}
We first look at the complexity of $\protreluint{\ell}$, which involves a call to $\fdreluint{\ell}$ and $\fmux{\intx}$.
$\fdreluint{\ell}$ has the same communication as $\fmill{\ell-1}$, which requires $\secpar (\ell-1) + 13\frac{1}{2}(\ell-1) - 2\secpar - 22$ bits if we assume $\blsize=4$ and $\blsize \mid (\ell-1)$, and exclude optimization~(\ref{optimill:eq}) in the general expression from \sectionref{comm-mill}.
$\fmux{\intx}$ incurs a cost of $2\secpar + 4\ell$ bits, bringing the total cost to $\secpar \ell + 17\frac{1}{2}\ell - \secpar - 35\frac{1}{2}$ bits, which can be rewritten as $< \secpar \ell + 18 \ell$.
We get our best communication for $\ell=32$ (with all the optimizations) by taking $\blsize = 7$ for the $\protmill{31}$ invocation inside $\protdreluint{32}$, which gives us a total communication of $3298$ bits.

Now, we look at the complexity of $\protreluring{\ringsz}$, which makes calls to $\fdreluring{\ringsz}$ and $\fmux{\ringsz}$.
The cost of $\fdreluring{\ringsz}$ is $2 \secpar + 4$ bits for $\kkot{4}{1}$, plus $\frac{3}{2}\secpar(\etax+1)+27(\etax+1)-4\secpar-44$ bits for 2 invocations of $\fmill{\etax+1}$, where \bob's input is the same in both invocations and the same assumptions are made as for the expression of $\fmill{\ell-1}$ above.
The cost of $\fmux{\ringsz}$ is $2 \secpar + 4 \etax$ bits, and thus, the total cost is $\frac{3}{2}\secpar(\etax+1) + 31\etax - 13$, which can be rewritten as $<\frac{3}{2}\secpar(\etax+1)+31\etax$.
Concretely, we get the best communication for $\etax=32$ by taking $\blsize=7$ for the millionaire invocations, getting a total communication of $5288$ bits.
 \section{Proof of division theorem}
\label{app:division-proof}

Here, we prove \theoremref{general-division}.

\begin{proof}
    From \equationref{div-eqn}, we can write $\frdiv{\share{a}{\ringsz}{i}}{d}$ as:
\begin{align} \label{eq:share-div-correct}
    \frdiv{\share{a}{\ringsz}{i}}{d} &=_n{} \fidiv{a_i - \ind{a_i \geq \ringsz'} \cdot \ringsz}{d} \nonumber \\
          &=_{\ringsz}{} \fidiv{a_i^1 \cdot d + a_i^0 - \ind{a_i \geq \ringsz'} \cdot (\ringsz^1 \cdot d + \ringsz^0)}{d} \nonumber \\
          &=_{\ringsz}{} a_i^1 - \ind{a_i \geq \ringsz'} \cdot \ringsz^1 + \fidiv{a_i^0 - \ind{a_i \geq \ringsz'} \cdot \ringsz^0}{d},
\end{align}
for $i \in \zo$. $a_u$ can be expressed as $a_u = a_0 + a_1 - w \cdot \ringsz$, where the wrap-bit $w = \ind{a_0 + a_1 \geq \ringsz}$. We can rewrite this as:
\begin{align} \label{eq:secret-sum-of-shares-correct}
    a_u &= a_0 + a_1 - w \cdot \ringsz \nonumber \\
        &= (a_0^1 + a_1^1 - w \cdot \ringsz^1) \cdot d + (a_0^0 + a_1^0 - w \cdot \ringsz^0) \nonumber \\
        &= (a_0^1 + a_1^1 - w \cdot \ringsz^1 + k) \cdot d + (a_0^0 + a_1^0 - w \cdot \ringsz^0 - k \cdot d),
\end{align}
for some integer $k$ such that $0 \leq a_0^0 + a_1^0 - w \cdot \ringsz^0 - k \cdot d < d$.
    Similar to \equationref{share-div-correct} and from \equationref{secret-sum-of-shares-correct}, we can write $\frdiv{a}{d}$ as:
\begin{align} \label{eq:secret-div-correct}
    \frdiv{a}{d} =_{\ringsz}{} &a_0^1 + a_1^1 - w \cdot \ringsz^1 + k - \ind{a \geq \ringsz'} \cdot \ringsz^1 \nonumber \\
            &+ \fidiv{a_0^0 + a_1^0 - w \cdot \ringsz^0 - k \cdot d - \ind{a \geq \ringsz'} \cdot \ringsz^0}{d} \nonumber \\
        =_{\ringsz}{} &a_0^1 + a_1^1 - w \cdot \ringsz^1 - \ind{a \geq \ringsz'} \cdot \ringsz^1 \nonumber \\
            &+ \fidiv{a_0^0 + a_1^0 - w \cdot \ringsz^0 - \ind{a \geq \ringsz'} \cdot \ringsz^0}{d}.
\end{align}
From Equations \ref{eq:share-div-correct} and \ref{eq:secret-div-correct}, we have the following correction term:
\begin{align} \label{eq:div-error-correct}
    c =_{\ringsz}{} &\frdiv{a}{d} - \frdiv{\share{a}{\ringsz}{0}}{d} - \frdiv{\share{a}{\ringsz}{1}}{d} \nonumber \\
      =_{\ringsz}{} &\big(\ind{a_0 \geq \ringsz'} + \ind{a_1 \geq \ringsz'} - w - \ind{a \geq \ringsz'}\big) \cdot \ringsz^1 \nonumber \\
          &+ \fidiv{a_0^0 + a_1^0 - w \cdot \ringsz^0 - \ind{a \geq \ringsz'} \cdot \ringsz^0}{d} \\
          &- \big (\fidiv{a_0^0 - \ind{a_0 \geq \ringsz'} \cdot \ringsz^0}{d} + \fidiv{a_1^0 - \ind{a_1 \geq \ringsz'} \cdot \ringsz^0}{d} \big ) \nonumber \\
      =_{\ringsz}{} & c^1 \cdot \ringsz^1 + c^0 - B
\end{align}
Let $A_i' = \fidiv{a_0^0 + a_1^0 - i \cdot \ringsz^0}{d}$.
Then the values of the correction terms $c^1$ and $c^0$ are as summarized in \tableref{division-proof-corr-terms}.
\begin{table}
\centering
\begin{tabular}{|c|c|c|c|c|c|c|}
    \hline
    \# & $\ind{a_0 \geq \ringsz'}$ & $\ind{a_1 \geq \ringsz'}$ & $\ind{a_u \geq \ringsz'}$ & $w$ & $c^1$ & $c^0$ \\
    \hline
    \hline
    1 & 0 & 0 & 0 & 0 &  0 & $A_0'$ \\
    \hline
    2 & 0 & 0 & 1 & 0 & -1 & $A_1'$ \\
    \hline
    3 & 0 & 1 & 0 & 1 &  0 & $A_1'$ \\
    \hline
    4 & 0 & 1 & 1 & 0 &  0 & $A_1'$ \\
    \hline
    5 & 1 & 0 & 0 & 1 &  0 & $A_1'$ \\
    \hline
    6 & 1 & 0 & 1 & 0 &  0 & $A_1'$ \\
    \hline
    7 & 1 & 1 & 0 & 1 &  1 & $A_1'$ \\
    \hline
    8 & 1 & 1 & 1 & 1 &  0 & $A_2'$ \\
    \hline
\end{tabular}
\caption{Truth table for the correction terms $c^0$ and $c^1$ in the proof of division theorem (\appendixref{division-proof}).} \label{tab:division-proof-corr-terms}
\end{table}

From the table, we have $c^1 = \corr$ and can rewrite the correction term as $c =_{\ringsz} \corr \cdot \ringsz^1 + c^0 - B$.
Thus, adding $\corr \cdot \ringsz^1 - B \bmod{\ringsz}$ to $\frdiv{\share{a}{\ringsz}{0}}{d} + \frdiv{\share{a}{\ringsz}{1}}{d}$ accounts for all the correction terms except $c_0 \bmod{\ringsz}$.

Now all that remains to be proven is that $c^0 = 1 - C$.
Let $C_0~=~\ind{A~<~d}$, $C_1~=~\ind{A~<~0}$, and $C_2~=~\ind{A~<~-d}$.
Then, we have $C = C_0 + C_1 + C_2$.
Note from the theorem statement that $A = a_0^0 + a_1^0$ and $A = a_0^0 + a_1^0 - 2 \cdot \ringsz^0$ for the cases corresponding to rows $1$ and $8$ respectively from the table, while $A = a_0^0 + a_1^0 - \ringsz^0$ for the rest of cases.
Thus, it is easy to see that $c^0 = \fidiv{A}{d}$.
Also note that $-2 \cdot d + 2 \leq A \leq 2 \cdot d - 2$, implying that the range of $c^0$ is $\{ -2, -1, 0, 1 \}$.
Now we look at each value assumed by $c^0$ separately as follows:
\begin{itemize}
    \item $c^0 = -2$: In this case, we have $(A < -d)$, implying $C_0 = C_1 = C_2 = 1$, and $1 - C = -2$.
    \item $c^0 = -1$: In this case, we have $(-d \leq A < 0)$, implying $C_0 = C_1 = 1, C_2 = 0$ and $1 - C = -1$.
    \item $c^0 = 0$: In this case, we have $(0 \leq A < d)$, implying $C_0 = 1, C_1 = C_2 = 0$ and $1 - C = 0$.
    \item $c^0 = 1$: In this case, we have $(d \leq A)$, implying $C_0 = C_1 = C_2 = 0$ and $1 - C = 1$.
\end{itemize}
Thus, $c =_{\ringsz} \corr \cdot \ringsz^1 + (1 - C) - B =_{\ringsz} \frdiv{a}{d} - \frdiv{\share{a}{\ringsz}{0}}{d} - \frdiv{\share{a}{\ringsz}{1}}{d}$.
\end{proof}
 \begin{algorithm}
\caption{Integer ring division, $\protdivring{\ringsz,d}$:}
\label{algo:div-ring}
\begin{algorithmic}[1]

\Require For $b \in \zo$, $\party{b}$ holds $\share{a}{\ringsz}{b}$, where $a \in \bbZ_{\ringsz}$.
\Ensure For $b \in \zo$, $\party{b}$ learns $\share{z}{\ringsz}{b}$ s.t. $z = \frdiv{a}{d}$.

\vspace{0.2cm}

    \State For $b \in \zo$, let $a_b, a_b^0, a_b^1 \in \bbZ$ and $\ringsz^0, \ringsz^1, \ringsz' \in \bbZ$ be as defined in \theoremref{general-division}. Let $\eta = \lceil \log (n) \rceil, \delta = \lceil \log 6d \rceil$, and $\Delta = 2^{\delta}$.
    \State For $b \in \zo$, $\party{b}$ invokes $\fdreluring{\ringsz}$ with input $\share{a}{\ringsz}{b}$ to learn output $\share{\alpha}{B}{b}$. Party $\party{b}$ sets $\share{m}{B}{b} = \share{\alpha}{B}{b} \xor b$.\State For $b \in \zo$, $\party{b}$ sets $x_b = \ind{\share{a}{\ringsz}{b} \geq \ringsz'}$.
    \State \alice samples $\share{\corr}{\ringsz}{0} \getsr \bbZ_{\ringsz}$ and $\share{\corr}{\Delta}{0} \getsr \bbZ_{\Delta}$.
    \For{$j = \{00, 01, 10, 11\}$}
        \State \alice computes $t_{j} = (\share{m}{B}{0} \xor j_0 \xor x_0) \wedge (\share{m}{B}{0} \xor j_0 \xor j_1)$ s.t. $j = (j_0 \concat j_1)$.
        \If{$t_j \wedge \ind{x_0 = 0}$}  \label{gen-corr-case-one}
            \State \alice sets $s_{j} =_{\ringsz} -\share{\corr}{\ringsz}{0}-1$ and $r_{j} =_{\Delta} -\share{\corr}{\Delta}{0}-1$.
        \ElsIf{$t_j \wedge \ind{x_0 = 1}$} \label{gen-corr-case-two}
            \State \alice sets $s_{j} =_{\ringsz} -\share{\corr}{\ringsz}{0}+1$ and $r_{j} =_{\Delta} -\share{\corr}{\Delta}{0}+1$.
        \Else
            \State \alice sets $s_{j} =_{\ringsz} -\share{\corr}{\ringsz}{0}$ and $r_{j} =_{\Delta} -\share{\corr}{\Delta}{0}$.
        \EndIf
    \EndFor
    \State \alice \& \bob invoke an instance of $\kkot{4}{\etax+\delta}$ where \alice is the sender with inputs $\{s_j \concat r_j \}_{j}$ and \bob is the receiver with input $\share{m}{B}{1} \concat x_1$. \bob sets its output as $\share{\corr}{\ringsz}{1} \concat \share{\corr}{\Delta}{1}$. \label{div-corr}
    \State For $b \in \zo$, $\party{b}$ sets $\share{A}{\Delta}{b} =_{\Delta} a_b^0 - (x_b - \share{\corr}{\Delta}{b}) \cdot \ringsz^0$. \label{share-A}
    \State For $b \in \zo$, $\party{b}$ sets $\share{A_0}{\Delta}{b} =_{\Delta} \share{A}{\Delta}{b} - b \cdot d$, $\share{A_1}{\Delta}{b} = \share{A}{\Delta}{b}$, and $\share{A_2}{\Delta}{b} =_{\Delta} \share{A}{\Delta}{b} + b \cdot d$. \label{share-As}
    \For{$j = \{0, 1, 2\}$} \label{three-div-comps}
        \State For $b\in \zo$, $\party{b}$ invokes $\fdreluint{\delta}$ with input $\share{A_j}{\Delta}{b}$ to learn output $\share{\gamma_j}{B}{b}$. Party $\party{b}$ sets $\share{C'_j}{B}{b} = \share{\gamma_j}{B}{b} \xor b$. \label{div-drelu}
        \State  For $b \in \zo$, $\party{b}$ invokes an instance of $\fBtoA{\ringsz}$ with input $\share{C'_j}{B}{b}$ and learns $\share{C_j}{\ringsz}{b}$.  \label{div-B2A}
    \EndFor
    \State For $b \in \zo$, $\party{b}$ sets $\share{C}{\ringsz}{b} = \share{C_0}{\ringsz}{b} + \share{C_1}{\ringsz}{b} + \share{C_2}{\ringsz}{b}$. \label{div-C}
    \State For $b \in \zo$, $\party{b}$ sets $B_b = \fidiv{a_b^0 - x_b \cdot \ringsz^0}{d}$. \label{div-B}
    \State $\party{b}$ sets $\share{z}{\ringsz}{b} =_{\ringsz} \frdiv{\share{a}{\ringsz}{b}}{d} + \share{\corr}{\ringsz}{b} \cdot \ringsz^1 + b - \share{C}{\ringsz}{b} - B_b$, for $b \in \zo$.

\end{algorithmic}
\end{algorithm}
 
\section{Protocol for general division}
\label{app:division-protocol}

We describe our protocol for general division formally in \algoref{div-ring}.
As discussed in \sectionref{prot-general-division}, our protocol builds on \theoremref{general-division} and we compute the various sub-terms securely using our new protocols. Let $\delta= \lceil \log 6d \rceil$. We compute the shares of $\corr$ over both $\bbZ_\ringsz$ and $\bbZ_\Delta$ (Step~\ref{div-corr}). 
We write the term $C$  as $(\drelu(A- d) \xor 1) + (\drelu(A) \xor 1) + (\drelu(A+ d) \xor 1)$, which can be computed using three calls to $\fdreluint{\delta}$ (Step~\ref{div-drelu}) and $\fBtoA{\ringsz}$ (Step~\ref{div-B2A}) each.
\\

\noindent{\em Correctness and Security.} 
First, $m = \reconst{B}{\share{m}{B}{0}, \share{m}{B}{1}} =$ \\ $\reconst{B}{\share{\alpha}{B}{0}, \share{\alpha}{B}{1}} = \ind{a \geq n'}$.  Next, similar to \algoref{truncate-int}, $\reconst{\intx}{\share{\corr}{\intx}{0}, \share{\corr}{\intx}{1}} = \corr = \reconst{\Delta}{\share{\corr}{\Delta}{0}, \share{\corr}{\Delta}{1}}$, where $\corr$ is as defined in \theoremref{general-division}. Given the bounds on value of $A$ (as discussed above), it easy to see that Steps~\ref{share-A}\&\ref{share-As} compute arithmetic shares of $A$, and $A_0 = (A-d), A_1=A, A_2=(A+d)$, respectively. Now, invocation of $\fdreluint{\delta}$ on shares of $A_j$ (Step~\ref{div-drelu}) returns boolean shares of $\gamma = (1\xor \msbl(A_j))$ over $\delta$ bit integers, which is same as $1 \xor \ind{A_j < 0}$ over $\bbZ$. Hence, $C'_j = \reconst{B}{\share{C'_j}{B}{0}, \share{C'}{B}{1}} = \ind{A_j < 0}$. By correctness of $\fBtoA{\ringsz}$, step~\ref{div-C} computes arithmetic shares of $C$ as defined in \theoremref{general-division}. In step~\ref{div-B}, $B_0 + B_1 =_\ringsz B$ as defined. Hence, correctness holds and $\share{z}{\ringsz}{b}$ are shares of $\frdiv{a}{d}$. 

Given that $\share{\corr}{\ringsz}{0}$ and $\share{\corr}{\Delta}{0}$ are uniformly random, security of the protocol is easy to see in $(\kkot{4}{\etax+\delta}, \fdreluint{\delta},\fBtoA{\ringsz})$-hybrid. \\

\noindent{\em Communication complexity.}
$\protdivring{\ringsz,d}$ involves a single call to $\fdreluring{\ringsz}$ and $\kkot{4}{\etax+\delta}$, and three calls each to $\fdreluint{\delta}$ and $\fBtoA{\ringsz}$.
From \appendixref{relu-protocol}, we have the cost of $\fdreluring{\ringsz}$ as $\frac{3}{2}\secpar\etax+27\etax-\frac{\secpar}{2}-13$ bits.
$\kkot{4}{\etax+\delta}$ and $3 \times \fBtoA{\ringsz}$ cost $2\secpar + 4\cdot(\etax+\delta)$ and $3\secpar + 3\etax$ bits respectively.
Since the cost of $\fdreluint{\ell}$ is $\secpar \ell + 13\frac{1}{2}\ell - 3\secpar - 35\frac{1}{2}$ bits (see \appendixref{relu-protocol}), $3 \times \fdreluint{\delta}$ requires $3 \secpar \delta + 40\frac{1}{2}\delta - 9\secpar - 106\frac{1}{2}$ bits of communication.
Thus, the overall communication of $\protdivring{\ringsz,d}$ is $\frac{3}{2} \secpar \etax + 34 \etax + 3 \secpar \delta + 44\frac{1}{2} \delta - 4\frac{1}{2} \secpar - 119\frac{1}{2}$, which can be rewritten as $< (\frac{3}{2} \secpar + 34) \cdot (\etax + 2 \delta)$.
Concretely, we get the best communication for $\protdivring{\ringsz,49}$ ($\etax=32$) by setting $\blsize=7$ in all our millionaire invocations, which results in a total communication of $7796$ bits.

Note that for the case of $\ell$-bit integers, our division protocol would require a call to $\fdreluint{\ell}$ and $\kkot{4}{\ell+\delta}$, and three calls each to $\fdreluint{\delta}$ and $\fBtoA{\intx}$.
The cost of $\fdreluint{\ell}$ and $3\times\fdreluint{\delta}$ are as mentioned in the previous paragraph, and the cost of $\kkot{4}{\ell+\delta}$ and $\fBtoA{\intx}$ are $2\secpar + 4\cdot(\ell+\delta)$ and $3\secpar + 3\ell$ bits respectively.
Thus, the overall communication is $\secpar \ell + 3 \secpar \delta + 20\frac{1}{2} \ell + 44\frac{1}{2} \delta - 7 \secpar - 142$ bits, which can be rewritten as $< (\secpar + 21) \cdot ( \ell + 3 \delta )$.
By setting $\blsize=8$ in all our millionaire invocations, we get the best communication of $5570$ bits for $\protdivint{32,49}$.
\section{Input Encoding}
\label{app:input-encoding}

Neural network inference performs computations on floating-point numbers, whereas the secret-sharing techniques only work for integers in a ring $\bbZ_{\ringsz}$, for any $\ringsz \in \mathbb{N}$.\footnote{Note that this includes the case of $\ell$-bit integers when $\ringsz = 2^{\ell}$.}

To represent a floating-point number $x \in \mathbb{Q}$ in the ring $\bbZ_{\ringsz}$, we encode it as a fixed-point integer $a = \lfloor x \cdot 2^s \rfloor \bmod{\ringsz}$ with scale $s$.
Fixed-point arithmetic is performed on the encoded input values (in the secure domain) and the same scale $s$ is maintained for all the intermediate results.
The ring size $\ringsz$ and the scale $s$ are chosen such that the absolute value of any intermediate result does not exceed the bound $\lfloor \ringsz/2 \rfloor$ and there is no loss in accuracy (refer \appendixref{accuracy-summary}).

\section{Improvement to Gazelle's Algorithm} \label{app:opti-gazelle-algo}
Gazelle~\cite{gazelle} proposed two methods for computing convolutions, namely, the input rotations and the output rotations method.
The only difference between the two methods is the number of (homomorphic) rotations required\footnote{The number of homomorphic additions also differ, but they are relatively very cheap.}.
In this section, we describe an optimization to reduce the number of rotations required by the output rotations method.

Let $c_i$ and $c_o$ denote the number of input and output channels respectively, and $c_n$ denote the number of channels that can fit in a single ciphertext.
At a high level, the output rotations method works as follows:
after performing all the convolutions homomorphically, we have $c_i \cdot c_o/c_n$ intermediate ciphertexts that are to be accumulated to form tightly packed output ciphertexts.
Since most of these ciphertexts are misaligned after the convolution, they must be rotated in order to align and pack them.
The intermediate ciphertexts can be grouped into $c_o/c_n$ groups of $c_i$ ciphertexts each, such that the ciphertexts within each group are added (after alignment) to form a single ciphertext.
In~\cite{gazelle}, the ciphertexts within each group are rotated (aligned) individually, resulting in $\approx c_i \cdot \frac{c_o}{c_n}$ rotations.
We observe that these groups can be further divided into $c_n$ subgroups of $c_i/c_n$ ciphertexts each, such that ciphertexts within a subgroup are misaligned by the same offset.
Doing this has the advantage that the $c_i/c_n$ ciphertexts within each subgroup can first be added and then the resulting ciphertext can be aligned using a single rotation.
This brings down the number of rotations by a factor of $c_i/c_n$ to $\approx c_n \cdot \frac{c_o}{c_n}$.

With our optimization, the output rotations method is better than the input rotations method when $f^2 \cdot c_i > c_o$, where $f^2$ is the filter size, which is usually the case.
 \section{Complexity of our benchmarks}
\label{app:DNN-summary}

The complexity of the benchmarks we use in \sectionref{experiments} is summarized as follows:
\begin{itemize}
	\item \squeezenet: There are 26 convolution layers of maximum filter size $3\times3$ and up to 1000 output channels. The activations after linear layers are $\relu$s with size of up to 200,704 elements per layer. All $\relu$ layers combined have a size of 2,033,480. Additionally, there are 3 $\maxpool$ layers and an $\avgpool_{169}$ layer ($\avgpool$ with pool size 169).
	\item \resnet: There are 53 convolution layers of maximum filter size $7\times7$ and a peak output channel count of 2048. Convolution layers are followed by batch normalization and then $\relu$s. There are 49 $\relu$ layers totaling 9,006,592 $\relu$s, where the biggest one consists of 802,816 elements. Moreover, $\resnet$ also has $\maxpool$ layers and an $\avgpool_{49}$.
	\item \densenet: There are 121 convolution layers with maximum filter dimension of $7\times7$ and up to 1000 output channels. Similar to \resnet, between 2 convolution layers, there is batch normalization followed by $\relu$. The biggest $\relu$ layer in $\densenet$ has 802,816 elements and the combined size of all $\relu$ layers is 15,065,344. In addition, $\densenet$ consists of a $\maxpool$, an $\avgpool_{49}$ and 3 $\avgpool_{4}$ layers.
\end{itemize}
 \section{Garbled circuits vs our protocols for $\avgpool$}
\label{app:avgpool-numbers}

\begin{table}[t]
    \centering
    \begin{subtable}{0.5\textwidth}
        \centering
        \begin{tabular}{|c|c|c|c|c|c|c|}
            \hline
            \multirow{2}{*}{Benchmark} & \multicolumn{3}{c|}{Garbled Circuits} & \multicolumn{3}{c|}{Our Protocol} \\
                           & LAN   & WAN    & Comm   & LAN  & WAN  & Comm \\ \hline \hline
            \squeezenet    & 0.2  & 2.0   & 36.02   & 0.1 & 0.8 & 1.84 \\ \hline
            \resnet        & 0.4  & 3.9   & 96.97   & 0.1 & 0.8 & 2.35 \\ \hline
            \densenet      & 17.2 & 179.4 & 6017.94 & 0.5 & 3.5 & 158.83 \\ \hline
        \end{tabular}
        \caption{over $\bbZ_{2^{\ell}}$} \label{subtab:exp-div-ring}
    \end{subtable}\\\smallskip
    \begin{subtable}{0.5\textwidth}
        \centering
        \begin{tabular}{|c|c|c|c|c|c|c|}
            \hline
            \multirow{2}{*}{Benchmark} & \multicolumn{3}{c|}{Garbled Circuits} & \multicolumn{3}{c|}{Our Protocol} \\
                          & LAN   & WAN    & Comm   & LAN  & WAN  & Comm \\ \hline \hline
            \squeezenet   & 0.2  & 2.2   & 39.93   & 0.1 & 0.9 & 1.92 \\ \hline
            \resnet       & 0.4  & 4.2   & 106.22  & 0.1 & 1.0 & 3.82 \\ \hline
            \densenet     & 19.2 & 198.2 & 6707.94 & 0.6 & 4.4 & 214.94 \\ \hline
        \end{tabular}
        \caption{over $\bbZ_{\ringsz}$} \label{subtab:exp-div-field}
    \end{subtable}
    \vspace{-5pt}
    \caption{Performance comparison of Garbled Circuits with our protocols for computing $\avgpool$ layers. Runtimes are in seconds and communication numbers are in MiB.} \label{tab:exp-div}
\end{table}
 
In this section, we compare our protocols with garbled circuits for evaluating the $\avgpool$ layers of our benchmarks, and the corresponding performance numbers are given in \tableref{exp-div}.
On $\densenet$, where a total of $176,640$ divisions are performed, we have improvements over GC of more than $32 \times$ and $45 \times$ in the LAN and the WAN setting, respectively, for both our protocols.
However, on $\squeezenet$ and \resnet, the improvements are smaller ($2\times$ to $7\times$) because these DNNs only require $1000$ and $2048$ divisions, respectively, which are not enough for the costs in our protocols to amortize well.
On the other hand, the communication difference between our protocols and GC is huge for all three DNNs.
Specifically, we have an improvement of more than $19\times$, $27\times$, and $31\times$ on \squeezenet, \resnet, and \densenet respectively, for both our protocols.
 \balance
\section{Fixed-point accuracy of our benchmarks}
\label{app:accuracy-summary}

In this section, we show that the accuracy achieved by the fixed-point code matches the accuracy of the input TensorFlow  code. \tableref{accuracy-tab} summarizes the bitwidths, the scales, and the corresponding TensorFlow (TF) and fixed-point accuracy for each of our benchmarks. 	Since our truncation and division protocols lead to faithful implementation of fixed-point arithmetic, accuracy of secure inference is the same as the fixed-point accuracy.

\begin{table}[t]
\centering
\begin{tabular}{|c|c|c|c|c|c|c|}
\hline
    \multirow{2}{*}{Benchmark} & \multirow{2}{*}{Bitwidth} & \multirow{2}{*}{Scale} & TF & Fixed & TF & Fixed \\
    & & & Top 1 & Top 1 & Top 5 & Top 5 \\
\hline
\hline
$\squeezenet$  & 32 & 9  & 55.86 & 55.90 & 79.18 & 79.22 \\ \hline
$\resnet$      & 37 & 12 & 76.47 & 76.45 & 93.21 & 93.23 \\ \hline
$\densenet$    & 32 & 11 & 74.25 & 74.35 & 91.88 & 91.90 \\ \hline
\end{tabular}
\caption{Summary of the accuracy achieved by fixed-point code vs input TensorFlow (TF) code.}
\label{tab:accuracy-tab}
\end{table}

\end{document}